\documentclass{article}

\RequirePackage{amsthm,amsmath}
\RequirePackage[numbers]{natbib}
\RequirePackage[colorlinks,citecolor=blue,urlcolor=blue]{hyperref}
\RequirePackage{graphicx}
\usepackage{caption}
\usepackage{xr}
\usepackage{hyperref}
\usepackage{amssymb}
\usepackage{bbm}
\usepackage{bm}
\usepackage{caption}
\usepackage{listings}
\usepackage{bbm}
\usepackage{mathrsfs}
\usepackage{todonotes}
\usepackage{verbatim}
\usepackage{subcaption}
\usepackage{float}
\usepackage{hyperref}

\usepackage{arxiv}

\usepackage[utf8]{inputenc} % allow utf-8 input
\usepackage[T1]{fontenc}    % use 8-bit T1 fonts
\usepackage{hyperref}       % hyperlinks
\usepackage{url}            % simple URL typesetting
\usepackage{booktabs}       % professional-quality tables
\usepackage{amsfonts}       % blackboard math symbols
\usepackage{nicefrac}       % compact symbols for 1/2, etc.
\usepackage{microtype}      % microtypography
\usepackage{lipsum}		% Can be removed after putting your text content
\usepackage{graphicx}
\usepackage{natbib}
\usepackage{doi}

\numberwithin{equation}{section}
\theoremstyle{definition}
\newtheorem {definition}{Definition}[section]
\theoremstyle{remark}
\newtheorem {remark}{Remark}[section]

\theoremstyle{plain}
\newtheorem{theorem}{Theorem}[section]
\newtheorem{proposition}{Proposition}[section]

\title{Combining Stochastic Tendency and Distribution Overlap Towards Improved Nonparametric Effect Measures and Inference}

\author{ Jonas Beck \\
	Department of Artificial Intelligence   and Human Interfaces, \\ 
 Paris Lodron University of Salzburg, \\
 Hellbrunner Straße 34, 5020 Salzburg, Austria\\
	\texttt{jonas.beck@plus.ac.at} \\
	%% examples of more authors
	\And
	{Patrick B. Langthaler} \\
	Department of Artificial Intelligence and Human Interfaces, \\ 
 Paris Lodron University of Salzburg, \\
 Hellbrunner Straße 34, 5020 Salzburg, Austria\\
Department of Neurology, \\
Paracelsus Medical University of Salzburg, \\
Ignaz-Harrer-Straße 79, 5020 Salzburg, Austria\\
	\texttt{patrickbenjamin.langthaler@stud.plus.ac.at} \\
 \And
	{Arne C. Bathke} \\
	Department of Artificial Intelligence and Human Interfaces, \\ 
 Paris Lodron University of Salzburg, \\
 Hellbrunner Straße 34, 5020 Salzburg, Austria\\
	\texttt{arne.bathke@plus.ac.at} \\
}

% Uncomment to remove the date
\date{}

% Uncomment to override  the `A preprint' in the header

\begin{document}
\maketitle

\begin{abstract}
	A fundamental functional in nonparametric statistics is the Mann-Whitney functional $\theta=P(X<Y)$ , which constitutes the basis for the most popular nonparametric procedures.  The functional $\theta$ measures a location or stochastic tendency effect between two distributions. A limitation of $\theta$ is its inability to capture scale differences. If differences of this nature are to be detected, specific tests for scale or omnibus tests need to be employed. However, the latter often suffer from low power, and they do not yield interpretable effect measures. In this manuscript, we extend $\theta$ by additionally incorporating the recently introduced distribution overlap index (nonparametric dispersion measure) $I_2$ that can be expressed in terms of the quantile process. We derive the joint asymptotic distribution of the respective estimators of $\theta$ and $I_2$ and construct confidence regions. Extending the Wilcoxon-Mann-Whitney test, we introduce a new test based on the joint use of these functionals. It results in much larger consistency regions while maintaining competitive power to the rank sum test for situations in which $\theta$  alone would suffice. Compared with classical omnibus tests, the simulated power is much improved. Additionally, the newly proposed inference method yields effect measures whose interpretation is surprisingly straightforward.
\end{abstract}

% keywords can be removed
\keywords{Confidence Region,  Nonparametric Test, Resampling, Simultaneous Inference}

\section{Introduction}
Statistical functionals are omnipresent in nonparametric statistics, especially the Mann- Whitney functional (or relative effect) $P(X<Y)$, which is widely being used for different statistical inference procedures, more recently also including probabilistic index models (PIM) (\cite{thas2012probabilistic}, \cite{de2015regression}) and confidence distributions \citep{beck_23}. Statistical theory for the Mann-Whitney parameter has a long history, going back to the first introduction by \cite{mann_whitney} and \cite{wilcoxon}, the development of asymptotic theory for the nonparametric Behrens-Fisher problem (in the case of no ties) by \cite{fligner_policello}, and the extension to the general case allowing for ties by \cite{brunner_munzel}.
The major limitation of all these relative effect based inference procedures is their inability of measure any kind of scale differences between distributions. Our new approach provides a solution to this limitation.

To this end, we incorporate the overlap index, which was recently developed by \cite{parkinson2018fast} to quantify the overlap of Hutchinsonian niches \citep{hutchinson1957population} that represent the ecological requirements for species survival in a particular habitat or their impact niche based on functional traits. These Hutchinsonian niches have various applications in ecology, animal behavior science, evolution, and other fields, and can be used to define the states of ecosystems and assess their shifts after environmental change. For more details and a short review on the concept of ecological niches we refer to \cite{swanson_15} and \cite{blonder2018new}. 

The overlap index is defined for two independent random variables (or distributions, $X \sim F$, $Y \sim G$) as 
\begin{align*}
I_2 &= \int_0^1 G \circ F^{-1}(1- \alpha/2)d \alpha - \int_0^1 G \circ F^{-1}(\alpha/2) d \alpha  \\
&= 2 (\int_{F^{-1}(1/2)}^{\infty} G dF - \int^{F^{-1}(1/2)}_{- \infty} G d F ).
\end{align*}
It was introduced to provide a nonparametric approach to estimating and inferring the overlap of hypervolumes without imposing a particular model or distribution family.
The niche overlap can be seen as a measure of how much of the mass assigned by $F$ is ``contained'' in the range of $G$.
Besides its obvious applicability in ecology and economics, we believe the overlap index constitutes a generally very informative nonparametric dispersion measure.

Indeed, we are convinced that simultaneous inference on both functionals provides highly  useful results for one of the most common statistical tasks: Comparing two independent samples and testing them for equality of the underlying data-generating distributions. 
Classical nonparametric omnibus test such as the Kolmogorov-Smirnov \citep{smirnov}, the Cramer-von Mises \citep{anderson}, or the Anderson-Darling test \citep{anderson_darling} are known to be limited by low power, especially for small and moderate sample sizes \citep{hollander2013nonparametric}. 
On the other hand, the popular asymptotic Wilcoxon-Mann-Whitney (WMW) test, 
although technically testing the null hypothesis of distribution equality, has a consistency region that is defined entirely by the 
relative effect  \citep{brunnerbathkekonnietschke} and thus by design cannot detect scale differences. 
Thus, while the two-sample problem is treated in most statistical textbooks (Section 6.9 in \citep{Lehmann2022}, Section 3.7 in \citep{van1996weak} and Chapter 6 in \citep{gibbons2020nonparametric}) and continues to receive research attention   
(\cite{JMLR:v13:gretton12a}, \cite{Ramdas2015OnWT} and \cite{clemenccon2023bipartite}), 
there is yet, to the best of our knowledge, no test whose  performance for small and moderate  sample sizes matches the strength of the WMW Test, while at the same time being consistent for a larger class of alternatives and yielding easily interpretable effect measures without having to assume restrictive model classes.

 The new \textit{almost omnibus} nonparametric inference approach  for two independent samples proposed in this paper includes a new test for equality of distributions that addresses this problem. The procedure is based on simultaneous inference for the two previously introduced nonparametric functionals, utilizing the asymptotic joint distribution of their estimators that is derived below. It has several advantages over existing methods. First, it is \textit{almost omnibus} in the sense that it can be used to test for a wide range of alternative hypotheses. Second, it is based on a combination of different \textit{nonparametric} methods, making it more robust and less sensitive to specific assumptions that underlie for example a \textit{semiparametric} location scale model as in the Lepage model \citep{lepage} which itself is an extension of the Ansari-Bradley test \citep{ansari_bradley}. 
Third, the test statistic corresponds to effect measures with a straightforward interpretation.
Finally, in simulations, the proposed procedure has demonstrated good power properties, even for small sample sizes.  We believe that this \textit{almost omnibus} inference helps to find a balance between the high power of the WMW test (especially for small samples) and the omnibus property of the Kolmogorov-Smirnov test.

In the following, we will first present a short review on the two nonparametric effect measures that form the foundation of our procedure. Then, we will look at the bivariate range of the functionals by deriving  the exact joint region of the relative effect and the overlap index. 
Next, we will prove the joint asymptotic normality of both functionals by utilizing their close relationship to the empirical ROC-process \citep{hsieh_turnbull}. 
When using this result for developing a test statistic for distribution equality, we realized that both coordinates of our bivariate test statistic have the same consistent variance estimator as the classical Mann-Whitney test. 
In order to derive simultaneous confidence intervals, we will use a resampling approach and compare different algorithms. 
As the theory of empirical processes assumes continuous distributions, we will provide an additional proof for the asymptotic distribution without restricting us to continuous distributions. 
To this end, we will use a rank-based approach incorporating ideas of \cite{brunner2017rank}. 
Finally, we will present results from an extensive simulation study, showing especially a rather good performance for small and medium size samples, and demonstrating the practical applicability by using a real-life data example from a current  epidemiological study.

\section{Nonparametric Effect Measures} \label{sec2}
If not mentioned otherwise, we will consider independent and absolutely continuous (in order to be able to use certain theorems below, and to allow for simplified notation) random variables $X \sim F$ and $Y \sim G$. 
Let us first define the relative effect or probabilistic index.
\begin{definition}
For two independent random variables $X \sim F$ and $Y \sim G$, we call
\begin{align}
\theta:= P(X <Y)= \int G dF = \int_0^1 G \circ F^{-1} (\alpha) d \alpha
\label{def:rte}
\end{align}
the \emph{(nonparametric) relative effect (probabilistic index, Mann-Whitney functional)} of $Y$ ($G$) with respect to $X$ ($F$).
\end{definition}
From this definition it follows that a relative effect 
$ \theta < 1/2$ indicates that the observations from $F$ tend to have larger values than those from $G$, and vice versa for $ \theta >1/2$. 
A value of $\theta=1/2$ may be interpreted as no tendency to larger or smaller values at all. It is clear that the relative effect equals $1/2$ if the distribution $F$ and $G$ are equal.
For distributions that are not absolutely continuous, the definition in (\ref{def:rte}) is adapted to $\theta=P(X<Y)+P(X=Y)/2$.

As we want to quantify not only the location (or tendency to larger values), but also the dispersion (or tendency to cover a larger niche) we also use the overlap index which was first introduced by \cite{parkinson2018fast}:
\begin{definition}
For two independent random variables $X \sim F$ and $Y \sim G$ we call
\begin{align}
I_2 &= \int_0^1 G \circ F^{-1}(1- \alpha/2)d \alpha - \int_0^1 G \circ F^{-1}(\alpha/2) d \alpha \label{def:niche}  \\
&= 2 (\int_{F^{-1}(1/2)}^{\infty} G dF - \int^{F^{-1}(1/2)}_{- \infty} G d F ) \nonumber
\end{align}
the \emph{(asymmetric) overlap index (niche overlap)} of $G$ with respect to $F$ (and $I_1$ analogously by switching the roles of $F$ and $G$).
\end{definition}
 The overlap index can be seen as a measure of how much of the mass assigned by $F$ is ``contained'' in the range of $G$.
 If the distributions are equal, the overlap index equals $1/2$, just as the relative effect. 
 \begin{remark} \label{alt_i2}
 We define the random variables $X^{(1)} \sim F_1$ and $X^{(2)} \sim F_2$ by conditioning $X$ on $X \leq F^{-1} (0.5)$ and $X > F^{-1} (0.5)$, respectively:
\begin{align*}
F_1(x)= 
\begin{cases}
2 F(x), \ &x < F^{-1} (0.5) \\
1,      &x \geq F^{-1} (0.5)
\end{cases}
\  \  \
F_2(x)= 
\begin{cases}
0, \ &x < F^{-1} (0.5) \\
2F(x)-1,      &x \geq F^{-1} (0.5)
\end{cases}
\end{align*}
$Y^{(1)} \sim G_1$ and $Y^{(2)} \sim G_2$ are analogously defined. 

Using this notation, the relative effect $\theta$ and the overlap index $I_2$ can be reformulated as
\begin{equation} \label{theta}
\theta= \frac{1}{4} [ P(X^{(1)} < Y^{(1)})+P(X^{(1)} < Y^{(2)})+P(X^{(2)} < Y^{(1)})+P(X^{(2)} < Y^{(2)})]
\end{equation}
and
\begin{equation} \label{I2}
I_2= \frac{1}{2} [ P(X^{(1)} < Y^{(2)})+P(X^{(2)} < Y^{(2)})-P(X^{(1)} < Y^{(1)})-P(X^{(2)} < Y^{(1)})] ~.
\end{equation}
%and $I_1$ analogously.
 \end{remark}
Canonical ``plug-in'' estimators based on independent $X$- and $Y$-samples of size $n$ and $m$, respectively, can be defined as follows. 
For the relative effect,
 \begin{align*}
\hat{\theta}_{mn}:=\int \hat{G}_m d \hat{F}_n = \int_0^1 \hat{G}_m \circ \hat{F}_n^{-1} (\alpha) d \alpha,
\end{align*}
and for the overlap index
\begin{align*}
\hat{I}_{2,mn} &= \int_0^1 \hat{G}_m \circ \hat{F}_n^{-1}(1- \alpha/2)d \alpha - \int_0^1 \hat{G}_m \circ \hat{F}_n^{-1}(\alpha/2) d \alpha \\
&= 2 (\int_{\hat{F}_n^{-1}(1/2)}^{\infty} \hat{G}_m d\hat{F}_n - \int^{\hat{F}_n^{-1}(1/2)}_{- \infty} \hat{G}_m  d \hat{F}_n ),
\end{align*} 
where the empirical quantile is defined as $ \hat{F}_n^{-1}(t)= \inf  \{ y: \hat{F}_n(y) \geq t \} $. 
The plug-in estimator for $I_1$ can be defined analogously.

As shown in \cite{parkinson2018fast} and \cite{brunnerbathkekonnietschke} these estimators are consistent, and the estimator of the relative effect is unbiased. 
%Additionally we will assume that $X_i \sim F$ and $Y_j \sim G $ are independent random variables.
If not otherwise mentioned we use here the same notation as in \cite{parkinson2018fast}.
To reformulate our estimator $\hat{I}_2$ we assume without loss of generality that $X_1 \leq X_2 \leq \ldots \leq X_n$. We denote by $K$ the largest integer smaller or equal to $(n+1)/2$ and all the $X$-observations below their median by $X_1, \ldots, X_K$. 
Their ranks are denoted by $R_1^{X<}, \ldots, R_K^{X<}$, and by $R_{K+1}^{X>}, \ldots, R_n^{X>}$ the ranks of the remaining $X$-observations. The corresponding rank sums are defined by 
$ %$\begin{align*}
R_{\cdot}^{X<} = \sum_{i=1}^K R_i^{X<}$ and $ R_{\cdot}^{X>} = \sum_{i=K+1}^n R_i^{X>}.
$ %\end{align*}
Our estimator $\hat{I}_{2,mn}$ for the niche overlap can then, as shown in Lemma 2.11 in \citep{parkinson2018fast}, be rearranged to 
\begin{align*}
\frac{2}{mn} (R_{\cdot}^{X>}-R_{\cdot}^{X<} ) + \frac{1}{2} c,
\end{align*}
where $c=- n/m$ for $n$ even and $c \approx -n/m$ for $n$ odd. As we restrict ourselves here mostly to continuous distributions (for an alternative method, see Section 
\ref{test-sim}), we can ignore the possibility of ties. Extensions to this case were briefly mentioned in Appendix A1 in \citep{parkinson2018fast}. \\
\cite{brunner_puri} already showed that the plug-in estimator $\hat{\theta}_{mn}$ can be rearranged as
\begin{align} \label{repr_theta}
 \hat{\theta}_{mn} &= \frac{1}{n} \bigl( \bar{R}_{2 \cdot} - \frac{m +1}{2} \bigl ) = 1- \frac{1}{m} \bigl( \bar{R}_{1 \cdot} - \frac{n +1}{2} \bigl ) = \frac{1}{N} (\bar{R}_{2 \cdot}-\bar{R}_{1 \cdot})+ \frac{1}{2},
\end{align}
where  $\bar{R}_{i \cdot}= \frac{1}{n_i} \sum_{i=1}^{ni} R_{ik}$
 for $i=1,2$ is the \emph{rank mean} for the respective group, $ R_{ik}$ the (overall) rank, and $N=m+n$. 
 
\begin{remark}
The following results regarding the overlap index have already been established in the literature.
\begin{enumerate}
\item The sum $I_1+I_2 \in [0,1]$ (Lemma 2.10 in \cite{parkinson2018fast})
    \item  If the medians of $F$ and $G$ are equal, it holds that $I_1 +I_2=1$ (Lemma 2.19 in \citep{parkinson2018fast}). 
\item Under the assumption of continuous distribution functions and equal medians, the previously defined estimators of $I_1$ and $I_2$ are finitely biased in the sense that $\mathbb{E}[\hat{I}_1 + \hat{I}_2] <1$ 
(\cite{parkinson2022testing}).
\end{enumerate}
\end{remark}

\section{Simultaneous Inference for the Relative Effect and the Overlap Index}
In this chapter, we will first consider the bivariate range of the previously introduced effect measures. Next, we will prove the asymptotic normality of their consistent estimators under some weak conditions.
\subsection{Bivariate Range of the Effect Measures}
It is obvious from the integral representation of the overlap indices $I_1, I_2$ in (\ref{def:niche}) and the relative effect $\theta$ in (\ref{def:rte}) that both functionals are related. Therefore, one may conjecture that not necessarily all values on $[0, 1] \times [0, 1]$ can be taken by the pair $(\theta, I_2)$, given two arbitrary distributions $F$ and $G$. Indeed, the following holds:
\begin{theorem} \label{thm_image}
    Let $S$ be the set of all cumulative distribution functions in an arbitrary probability space. Interpret $\theta$ and $I_2$ as functions from $S \times S$ to $[0, 1]$ and accordingly
    \begin{equation*}
        (\theta, I_2):\: S \times S \longrightarrow [0, 1] \times [0, 1].
    \end{equation*}
    Then the image of $(\theta, I_2)$ is $A := \{(x, y) | x \in [0, 1], y \in [0, \min\{2x, 2-2x\}]\}$
\end{theorem}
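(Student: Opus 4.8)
The plan is to funnel both functionals through the single nondecreasing ``ROC-type'' function $h := G\circ F^{-1}\colon (0,1)\to[0,1]$ and then handle the two inclusions $\operatorname{Im}(\theta,I_2)\subseteq A$ and $A\subseteq\operatorname{Im}(\theta,I_2)$ separately. Substituting $\alpha = 2(1-\beta)$ in the first and $\alpha = 2\beta$ in the second of the two integrals in (\ref{def:niche}) rewrites the overlap index so that
\begin{equation*}
\theta = \int_0^1 h(\beta)\,d\beta, \qquad\qquad
I_2 = 2\int_{1/2}^1 h(\beta)\,d\beta \;-\; 2\int_0^{1/2} h(\beta)\,d\beta .
\end{equation*}
Only two features of $h$ enter the argument: it is nondecreasing and $0\le h\le 1$. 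Since both hold for arbitrary (not necessarily absolutely continuous) $F$ and $G$, the continuity proviso of Section~\ref{sec2} plays no role in this theorem.

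For the inclusion $\operatorname{Im}(\theta,I_2)\subseteq A$, the displayed identities give $2\theta - I_2 = 4\int_0^{1/2}h\ge 0$ and $2(1-\theta) - I_2 = 2 - 4\int_{1/2}^1 h \ge 2 - 4\cdot\tfrac12 = 0$, the last step using $h\le 1$; moreover monotonicity of $h$ yields $\int_{1/2}^1 h \ge \int_0^{1/2}h$ (compare $h(\beta+\tfrac12)$ with $h(\beta)$ on $(0,\tfrac12)$), hence $I_2\ge 0$. Together with the trivial bound $\theta\in[0,1]$ this is exactly $I_2\in[0,\min\{2\theta,2-2\theta\}]$, i.e.\ $(\theta,I_2)\in A$. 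The same three inequalities can alternatively be read off from the representations of $\theta$ and $I_2$ in Remark~\ref{alt_i2}, using $P(X^{(i)}<Y^{(1)})\le P(X^{(i)}<Y^{(2)})$.

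For the reverse inclusion, fix $(x,y)\in A$ and exhibit a realizing pair: let $X$ be uniform on $(0,1)$, so that $F^{-1}(\alpha)=\alpha$ and $h = G$ on $(0,1)$, and let $Y$ be the three-atom variable with
\begin{equation*}
P\!\left(Y=-1\right) = x - \tfrac{y}{2}, \qquad P\!\left(Y=\tfrac12\right) = y, \qquad P\!\left(Y=2\right) = 1 - x - \tfrac{y}{2}.
\end{equation*}
These weights sum to $1$, and all three are nonnegative \emph{precisely} because $0\le y$, $y\le 2x$ and $y\le 2-2x$ --- the very inequalities defining $A$. Then $h(\alpha)=G(\alpha)$ equals $x-\tfrac y2$ on $(0,\tfrac12)$ and $x+\tfrac y2$ on $[\tfrac12,1)$, so a one-line evaluation gives $\theta = \int_0^1 h = x$ and $I_2 = (x+\tfrac y2) - (x-\tfrac y2) = y$. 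Hence every point of $A$ is attained, and combined with the previous paragraph the image equals $A$.

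The routine direction is the first inclusion; the crux is the construction in the second. The obstacle to anticipate is not computing $\theta$ and $I_2$ for a given pair, but \emph{designing} a parametrized family of distributions that sweeps out all of $A$, boundary included, while keeping those computations transparent --- the two-plateau choice of $G$ above does this, and the match between its admissibility constraints ($\tfrac y2\le x$, $\tfrac y2\le 1-x$, $y\ge 0$) and the defining inequalities of $A$ is what makes the two inclusions meet. It is worth a sentence to check the degenerate corners, where the same construction still works: $x\in\{0,1\}$ forces $y=0$ and $Y$ collapses to a single atom, while the apex $(\tfrac12,1)$ gives $Y\equiv\tfrac12$.
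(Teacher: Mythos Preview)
Your proof is correct. The containment $\operatorname{Im}(\theta,I_2)\subseteq A$ is essentially the paper's argument: the paper's $\tau_1,\tau_2$ are precisely your $\int_0^{1/2}h$ and $\int_{1/2}^1 h$, and the three inequalities are derived the same way. Where you diverge from the paper is in the surjectivity step. The paper keeps $F$ uniform on $[0,1]$ but takes $G$ uniform on $[a,b]$; this forces a nine-case expression for $(\theta,I_2)$ in terms of $(a,b)$, after which $A$ is partitioned into four subregions $A_1,\dots,A_4$ and a separate monotonicity/inversion argument is carried out in each (including solving a quadratic for $A_4$). Your three-atom $Y$ replaces all of this with a single two-line calculation, and the admissibility constraints on the atom weights coincide exactly with the defining inequalities of $A$, so no regional decomposition is needed. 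The price is that your $G$ is discrete, whereas the paper's construction stays within absolutely continuous distributions; since the theorem as stated allows arbitrary cdfs this is not a gap, but if one wanted a purely continuous realization your atoms could be replaced by narrow uniforms without changing the integrals of $G$ over $(0,\tfrac12)$ and $(\tfrac12,1)$. Overall your route is shorter and more transparent; the paper's route has the minor advantage of exhibiting continuous realizing pairs directly.
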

\begin{proof}
Appendix \ref{proofthm_image}
\end{proof}

\begin{figure}[ht!]
    \centering
    \includegraphics[width = 0.45\textwidth]{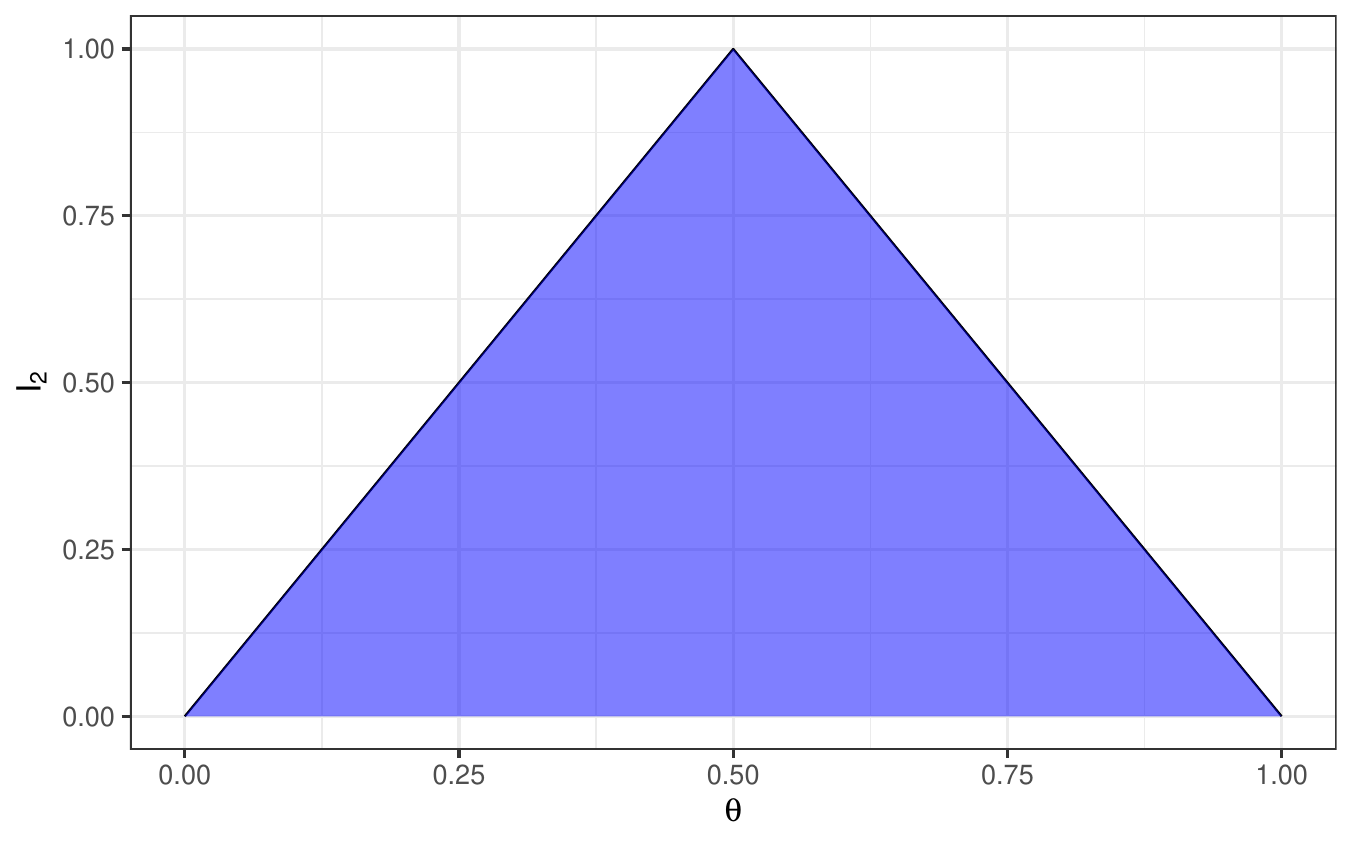}
    \caption{The blue shaded region shows the theoretically possible pairs of $(\theta, I_2)$.}
    \label{fig:theta_I_2_relationship}
\end{figure}
This can be reformulated as follows:
\begin{enumerate}
    \item All possible pairs $(\theta, I_2)$ must lie inside $A$ for any distribution functions $F$ and $G$.
    \item For every pair $(\theta, I_2) \in A$, there exists a pair of distribution functions $(F, G)$ such that their relative effect and overlap index are $\theta$ and $I_2$, respectively.
\end{enumerate}
Figure \ref{fig:theta_I_2_relationship} makes it clear that even if the relative effect $\theta$ takes the value $1/2$,
the overlap index $I_2$ can still take every value between $0$ and $1$. Taking advantage of this fact will provide us with a better way to distinguish the distributions than testing only one parameter.

\subsection{Joint Asymptotic Distribution of the Relative Effect and the Overlap Index} \label{sec_joint}
In order to effectively use the two functionals jointly for inference, one needs the joint (asymptotic) distribution of their estimators. This is what we will establish in this section.
\begin{remark}
A path towards establishing asymptotic normality could be to mimic proofs from rank statistics and approximate $I_2$ by a sum of independent random variables for which a version of the central limit theorem can be applied. However, we would first need to prove an analogue to the asymptotic equivalence theorem for the relative effect (see, e.g., \cite{brunnerbathkekonnietschke}). \\
Instead, and due to the generalizability of the approach, we use similarities to the empirical ROC-Process (\cite{hsieh_turnbull}, \cite{davidov_herman}), by incorporating the large and broad theory of empirical processes as presented in \cite{van1996weak}. 
Nevertheless, we will come back to the previously mentioned approach in section 
\ref{test-sim}, where we consider a special case, namely the null hypothesis of distribution equality.
\end{remark}

Thus, in the following, we will at first prove  asymptotic normality using similarities to the ROC-process, and we will formulate and prove our main result about the asymptotic joint normality of our estimators. \\ 
For the asymptotic theory to hold, we will always assume that $ \frac{n}{m} \rightarrow \nu \in \mathbb{R}^+$, for $n,m \rightarrow \infty$. 
Furthermore, we assume that the distribution functions $F$ and $G$ have continuous densities, and the slope of the ROC curve $GF^{-1}(t)$, which is defined as
$ %$\begin{align*}
 g(F^{-1}(t))/f(F^{-1}(t)),   
 $ % \end{align*}
is bounded on every subinterval $(a,b)$ of $(0,1)$, $a<b \in (0,1)$. 
This condition is equivalent to the assumption that the slope of the curve is finite for all $t \in (0,1)$, i.e.
\begin{align*}
  \forall t \in (0,1): \ \ \Biggl | \frac{g(F^{-1}(t))}{f(F^{-1}(t))}    \Biggl | < \infty \ .
\end{align*}
\cite{hsieh_turnbull} showed that under these conditions
\begin{align*}
  ||  \hat{G}_m(\hat{F}_n^{-1})- G(F^{-1})||_{\infty}= \sup_{t \in [0,1]} | \hat{G}_m(\hat{F}_n^{-1}(t))- G(F^{-1}(t)) |  \rightarrow 0 , \text{ a.s.}.
\end{align*}
This follows from the Glivenko-Cantelli theorem and the Dvoretzky, Kiefer, and Wolfowitz inequality (Theorem 7.12 in \cite{jiang2022large}). 
Therefore, $\hat{G}_m(\hat{F}_n^{-1}(t))$ is a strongly consistent estimator of $G(F^{-1}(t))$. Using this result, we can easily obtain a new method to prove the consistency of our plug-in estimators introduced in the previous section.
\begin{theorem} \label{thm_asym}
Let $\theta$, $\hat{\theta}_{mn}$, $I_2$, and $\hat{I}_{2,mn}$ be defined as before. Under the above conditions, 
\begin{align} \label{asymp_vec}
\sqrt{n}  \Biggl ( 
\begin{pmatrix}
\hat{\theta}_{mn} \\ \hat{I}_{2,mn}
\end{pmatrix}
-
\begin{pmatrix}
\theta \\ I_2
\end{pmatrix} 
\Biggl )
\end{align}
converges in distribution to a bivariate normal distribution $N(0,\Sigma)$ with expectation vector zero and covariance matrix 
\begin{align*}
 \Sigma= \begin{pmatrix}
\sigma^2_{\theta} & \sigma_{\theta,I_2} \\
\sigma_{\theta,I_2} & \sigma^2_{I_2}
\end{pmatrix},
\end{align*}
where
\begin{align*}
   \sigma^2_{\theta} &= \nu \text{ var} \Biggl[ \int_0^1 \ B_1(G(F^{-1}(t))) dt \Biggl ] + \text{ var} \Biggl[ \int_0^1 \ B_2(F(G^{-1}(t))) dt \Biggl ]       \\ 
   \sigma^2_{I_2}  &= \nu \text{ var} \Biggl[ \int_0^1 \ B_1(G(F^{-1}(1-t/2)))-\ B_1(G(F^{-1}(t/2))) dt \Biggl ]  \\
   & \  \ \ + \text{ var} \Biggl[ \int_0^1 \ B_2(F(G^{-1}(1-t/2))) - B_2(F(G^{-1}(t/2))) dt \Biggl ]        \\ 
   \sigma_{\theta,I_2} &=   \nu \text{ cov}  \Biggl[ \int_0^1 \ B_1(G(F^{-1}(t))) dt, \int_0^1 \ B_1(G(F^{-1}(1-t/2)))-\ B_1(G(F^{-1}(t/2))) dt \Biggl ]  \\   
   & \  \ \ +  \text{cov} \Biggl[ \int_0^1 \ B_2(F(G^{-1}(t))) dt  ,  \int_0^1 \ B_2(F(G^{-1}(1-t/2))) - B_2(F(G^{-1}(t/2))) dt   \Biggl ]    ,
\end{align*}
where $B_1$ and $B_2$  are independent Brownian bridges. 
\\
This can be simplified to
\begin{align*}
  \sigma^2_{\theta} &= \nu ||G \circ F^{-1}||^*+ ||F \circ G^{-1}||^* \\
  \sigma^2_{I_2}  &= \nu \Bigl ( ||G \circ F_1^{-1}||^* +||G \circ F_2^{-1}||^* -2   \langle G \circ F_1^{-1}, G \circ F_2^{-1} \rangle ^*   \Bigl ) \\
  &+  ||F \circ G_1^{-1}||^* +||F \circ G_2^{-1}||^* -2   \langle F \circ G_1^{-1}, F \circ G_2^{-1} \rangle ^*  \\
  \sigma_{\theta,I_2} &= \nu \Bigl (  \langle G \circ F^{-1}, G \circ F_2^{-1} \rangle ^* - \langle G \circ F^{-1}, G \circ F_1^{-1} \rangle ^* \Bigl ) \\ 
  &+ \langle F \circ G^{-1}, F \circ G_2^{-1} \rangle ^* -\langle F \circ G^{-1}, F \circ G_1^{-1} \rangle ^*,
\end{align*}
where 
\begin{align*}
    ||h||^* &= \int_0^1 h^2(t) dt - \Bigl ( \int_0^1 h(t) dt \Bigl )^2 \\
    \langle h, \Tilde{h} \rangle ^* &= 
    %\int_0^1 \int_0^1 h(t) \Tilde{h}(s) dt ds
    \int_0^1 h(t) \Tilde{h}(t) dt 
    - \Bigl ( \int_0^1 h(t) dt \Bigl ) \Bigl ( \int_0^1 \Tilde{h}(s) ds \Bigl ).
\end{align*}
\end{theorem}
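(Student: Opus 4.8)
The plan is to reduce both estimators to continuous functionals of the empirical ROC-process and then apply the functional delta method together with the known weak convergence of that process. First I would recall that under the stated assumptions (continuous densities, bounded ROC slope on compact subintervals, $n/m \to \nu$), Hsieh and Turnbull's result gives weak convergence of the empirical ROC-process: writing $R_{mn}(t) = \hat{G}_m(\hat{F}_n^{-1}(t))$ and $R(t) = G(F^{-1}(t))$, one has $\sqrt{n}\,(R_{mn} - R) \rightsquigarrow \sqrt{\nu}\, B_1(R(\cdot)) + \tfrac{g(F^{-1}(\cdot))}{f(F^{-1}(\cdot))}\, B_2(\cdot)$ in $\ell^\infty$ of compact subintervals of $(0,1)$, where $B_1, B_2$ are independent Brownian bridges arising as the limits of $\sqrt{n}(\hat{G}_m - G)$ and $\sqrt{n}(\hat{F}_n - F)$ respectively (up to the scaling $\nu$). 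The key observation is that $\hat{\theta}_{mn} = \int_0^1 R_{mn}(t)\,dt$ and $\hat{I}_{2,mn} = \int_0^1 R_{mn}(1-t/2)\,dt - \int_0^1 R_{mn}(t/2)\,dt$, so the bivariate estimator is the image of $R_{mn}$ under a \emph{fixed linear} map $\Phi: D \to \mathbb{R}^2$, $\Phi(r) = \bigl(\int_0^1 r(t)\,dt,\; \int_0^1 r(1-t/2)\,dt - \int_0^1 r(t/2)\,dt\bigr)$.

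The next step is to combine the two integral representations: since both $\theta$ and $I_2$ have a symmetric companion expression obtained by exchanging the roles of $F$ and $G$ (the relation $\hat\theta_{mn} = 1 - \int_0^1 \hat F_n(\hat G_m^{-1}(t))\,dt$ and the analogous identity for $\hat I_{2,mn}$ using $I_1$), one can symmetrize so that both the $B_1$-contribution (coming from the $Y$-sample through $\hat G_m$) and the $B_2$-contribution (coming from the $X$-sample through $\hat F_n$) appear. Because the $X$- and $Y$-samples are independent, these two contributions are independent, which is exactly why $\Sigma$ splits into a $\nu$-weighted ``$G\circ F^{-1}$'' part plus an ``$F\circ G^{-1}$'' part. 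Applying the continuous mapping theorem (or the delta method for the bounded linear map $\Phi$, which is trivially Hadamard-differentiable with derivative equal to itself) then yields that \eqref{asymp_vec} converges to $\Phi$ applied to the Gaussian limit process, which is jointly Gaussian with mean zero; the covariance matrix is read off as the variances and covariance of the coordinates of $\Phi$ evaluated at that limit. Expanding $\operatorname{var}$ and $\operatorname{cov}$ of the integrals of Brownian bridges gives the first (explicit) form of $\Sigma$.

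Finally, to obtain the simplified form I would use the elementary covariance kernel of a Brownian bridge, $\operatorname{cov}(B(s), B(t)) = s\wedge t - st$, and the Fubini-type identity $\operatorname{cov}\bigl(\int_0^1 B(a(t))\,dt, \int_0^1 B(b(s))\,ds\bigr) = \int_0^1\!\!\int_0^1 \bigl(a(t)\wedge b(s) - a(t)b(s)\bigr)\,dt\,ds$, then recognize the resulting double integrals as the bilinear form $\langle h,\tilde h\rangle^*$ after a change of variables; in particular the substitutions $t \mapsto 1-t/2$ and $t \mapsto t/2$ turn $G\circ F^{-1}$ composed with these reparametrizations into $G\circ F_1^{-1}$ and $G\circ F_2^{-1}$ via the definitions of $F_1, F_2$ in Remark \ref{alt_i2}. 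The main obstacle is the first step: the Hsieh--Turnbull convergence only holds on compact subintervals $(a,b) \subset (0,1)$ (the ROC slope may blow up at the endpoints), so I would need a truncation argument — showing that the contribution of the tails $(0,a)\cup(1-a,1)$ to each integral is uniformly negligible in probability as $a \to 0$ (using boundedness of $R_{mn}$ by $1$ and tightness), so that the integral functionals are in fact continuous on the relevant subspace and the limit law is unaffected. Handling this tail control carefully, rather than the covariance bookkeeping, is where the real work lies.
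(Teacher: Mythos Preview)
Your approach is essentially the paper's: both rest on the Hsieh--Turnbull strong approximation of $\sqrt{n}(\hat G_m\circ\hat F_n^{-1} - G\circ F^{-1})$ by a sum of two independent Brownian-bridge terms, followed by a linear integral functional and the continuous mapping / delta method (the paper packages this via the Cram\'er--Wold device, taking an arbitrary scalar combination $\lambda_1(\cdot) + \lambda_2(\cdot)$, which is equivalent to your direct $\mathbb{R}^2$-valued map $\Phi$). One point to correct: your ``symmetrizing'' step via the dual representation $\hat\theta_{mn} = 1 - \int \hat F_n\circ\hat G_m^{-1}$ is unnecessary and does not extend cleanly to $\hat I_{2,mn}$, since there is no simple identity expressing $I_2$ through the reversed ROC curve ($I_1$ and $I_2$ are genuinely different functionals in general). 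The paper does not symmetrize; it keeps the density-ratio--weighted $B_2$ term that Hsieh--Turnbull delivers directly, $\int_0^1 \tfrac{g(F^{-1}(t))}{f(F^{-1}(t))}\,B_2(t)\,dt$, and then performs the change of variables $u = G(F^{-1}(t))$ to rewrite it as $\int_0^1 B_2(F(G^{-1}(u)))\,du$; this is where the $F\circ G^{-1}$ structure in $\Sigma$ actually comes from. Your diagnosis of the endpoint issue is apt; the paper addresses it by invoking Theorem~3.9.4 and Lemma~3.9.8 in van der Vaart and Wellner rather than carrying out an explicit truncation argument.
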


\begin{proof}
Appendix \ref{proofthm1}
\end{proof}

The approximation by Brownian Bridges is quite usual for empirical processes. The strong approximation result we use here is based on the Hungarian Construction (see, e.g., Chapter 12 in \cite{shorack1986empirical})
\begin{remark} \label{rem_var}
    The variances $\sigma^2_{\theta}$ and $\sigma^2_{I_2}$ in Theorem \ref{thm_asym} simplify in the case of distribution equality to $\frac{1}{12}(\frac{1}{m} +\frac{1}{n})$ -- the same as for the classical Mann-Whitney U Statistic in the null case $F=G$ (see, e.g.,
    \cite{serfling2009approximation} or
    \cite{hollander2013nonparametric}). 
    For $F=G$, it follows also that the covariance $\sigma_{\theta,I_2}$ is zero.  
    Thus, under the null, the variances are equal and the estimators are uncorrelated. 
\end{remark}
We will consider the previous remark again when constructing a test for distribution equality. 
However, in the next section we will construct confidence intervals using resampling techniques.

\section{Resampling based Inference for \texorpdfstring{$\theta$}{Lg} and \texorpdfstring{$I_2$}{Lg}} \label{secresamp}
In the previous section, we have seen that the challenge of estimating the covariance matrix $\Sigma$ in Theorem \ref{thm_asym} becomes a rather simple task under the null hypothesis $H_0: F=G$. However, this approach can only be used for hypothesis testing. For the construction of  simultaneous confidence intervals, we have to use different techniques. \\
One possibility would be to utilize kernel density estimators, an approach that can also be used for estimating the ROC-curve (\cite{zou97}, \cite{lloyd98}). However, it requires larger sample sizes, and as we are aiming for an approach that is especially applicable for small and moderate sample sizes, we are using a resampling based method for this problem. \\
To this end, we will first introduce the empirical bootstrap process and then prove that the bootstrap is valid for the considered functionals. After that, we will present a method for constructing simultaneous confidence intervals based on the bootstrap. \\
%\subsection{Empirical Bootstrap Process}
Consider the so-called bootstrap samples 
$X_1^*, \ldots, X_n^*$, i.i.d. sample from $F_n$, and $Y_1^*, \ldots, Y_m^*$, i.i.d. sample from $G_m$. 
We call the empirical measure $P_n^* = \frac{1}{^n}\sum_{i=1}^n \delta_{X_i^*}$ the \emph{bootstrap empirical distribution} and $\hat{F}_n^*$ the corresponding empirical distribution function, and analogously defined $Q_m^* = \frac{1}{^n}\sum_{i=1}^n \delta_{X_i^*}$ and $\hat{G}_m^*$, respectively. 
Our approach will be justified by considering the \emph{empirical bootstrap process}, defined as $ \sqrt{n} ( \hat{G}_m^{*} \circ \hat{F}_n^{* -1} -\hat{G}_m \circ \hat{F}_n^{ -1} ) $. 
\begin{theorem} \label{thm_boot}
Let $\hat{G}_m^{*}$ and $\hat{F}_n^*$ be the empirical distribution function of the bootstrapped sample. Then it holds under the same conditions as in Theorem \ref{thm_asym} that the empirical bootstrap process of $\theta$ and $I_2$:
\begin{align} \label{bootstrap_process}
\sqrt{n}  \Biggl ( 
\begin{pmatrix}
\hat{\theta}^*_{mn} \\ \hat{I}^*_{2,mn}
\end{pmatrix}
-
\begin{pmatrix}
\hat{\theta}_{mn} \\ \hat{I}_{2,mn}
\end{pmatrix}
\Biggl ),
\end{align}
where $ \hat{\theta}^*_{mn}$ and $\hat{I}^*_{2,mn}$ are defined as
 \begin{align*}
\hat{\theta}^*_{mn} &=\int \hat{G}^*_m d \hat{F}^*_n = \int_0^1 \hat{G}^*_m \circ \hat{F}_n^{* -1} (\alpha) d \alpha, \\
\hat{I}^*_{2,mn} &= \int_0^1 \hat{G}^*_m(\hat{F}_n^{* -1}(1- \alpha/2))d \alpha - \int_0^1 \hat{G}^*_m(\hat{F}_n^{* -1}(\alpha/2)) d \alpha \\
&= 2 (\int_{\hat{F}_n^{* -1}(1/2)}^{\infty} \hat{G}^*_m(t) d\hat{F}^*_n(t)- \int^{\hat{F}_n^{* -1}(1/2)}_{- \infty} \hat{G}^*_m(t) d \hat{F}^*_n(t) ),
\end{align*} 
is asymptotically normally distributed with identical limit distribution as 
\begin{align*} 
\sqrt{n}  \Biggl ( 
\begin{pmatrix}
\hat{\theta}_{mn} \\ \hat{I}_{2,mn}
\end{pmatrix}
-
\begin{pmatrix}
\theta \\ I_2
\end{pmatrix} 
\Biggl )
\end{align*}
\end{theorem}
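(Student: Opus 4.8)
The plan is to derive Theorem \ref{thm_boot} from the functional delta method for the bootstrap, following the same route as the proof of Theorem \ref{thm_asym} but in its conditional-on-the-data version. The key structural observation is that, exactly as in Theorem \ref{thm_asym}, both $\hat\theta_{mn}$ and $\hat I_{2,mn}$ are \emph{bounded linear} functionals of the empirical ROC process $t\mapsto \hat G_m(\hat F_n^{-1}(t))$ — namely $\hat\theta_{mn}=\int_0^1 \hat G_m\circ\hat F_n^{-1}(t)\,dt$ and $\hat I_{2,mn}=\int_0^1\hat G_m\circ\hat F_n^{-1}(1-t/2)\,dt-\int_0^1\hat G_m\circ\hat F_n^{-1}(t/2)\,dt$ — and the analogous identities hold verbatim with the starred quantities. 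It therefore suffices to show that the bootstrap empirical ROC process $\sqrt n\,(\hat G_m^{*}\circ\hat F_n^{*-1}-\hat G_m\circ\hat F_n^{-1})$ converges weakly, conditionally on the data and in outer probability, to the same Gaussian limit that appeared for $\sqrt n\,(\hat G_m\circ\hat F_n^{-1}-G\circ F^{-1})$ in the proof of Theorem \ref{thm_asym}; the conclusion then follows by applying the continuous linear maps defining $\theta$ and $I_2$, together with the continuous mapping theorem.

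First I would invoke the bootstrap functional central limit theorem for empirical distribution functions (\cite{van1996weak}, Section 3.6): conditionally on the data and in outer probability, $\sqrt n\,(\hat F_n^{*}-\hat F_n)$ converges weakly in $\ell^\infty(\mathbb R)$ to $B_1\circ F$ and $\sqrt m\,(\hat G_m^{*}-\hat G_m)$ to $B_2\circ G$, with $B_1,B_2$ independent Brownian bridges; independence of the two samples gives joint convergence, and the assumption $n/m\to\nu$ makes the scaling consistent with Theorem \ref{thm_asym}. Next I would use the fact — already exploited for Theorem \ref{thm_asym} — that under the stated hypotheses (continuous densities, locally bounded ROC slope $g(F^{-1})/f(F^{-1})$) the quantile map $F\mapsto F^{-1}$ is Hadamard differentiable tangentially to the continuous functions on subintervals $[a,b]\subset(0,1)$ (\cite{van1996weak}, Lemma 3.9.23), hence so is the composition map $(F,G)\mapsto G\circ F^{-1}$; composing further with the bounded linear integral functionals shows that $(F,G)\mapsto(\theta,I_2)$ is Hadamard differentiable, with derivative exactly the linear map implicit in Theorem \ref{thm_asym}. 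The delta method for bootstrapped empirical processes (\cite{van1996weak}, Theorem 3.9.11) then yields that, conditionally on the data,
\begin{align*}
\sqrt n\left(\begin{pmatrix}\hat\theta_{mn}^{*}\\ \hat I_{2,mn}^{*}\end{pmatrix}-\begin{pmatrix}\hat\theta_{mn}\\ \hat I_{2,mn}\end{pmatrix}\right)
\end{align*}
converges weakly to $\phi'_{(F,G)}(\mathbb G_1,\mathbb G_2)$, where $\phi'$ is that Hadamard derivative and $(\mathbb G_1,\mathbb G_2)$ the pair of appropriately scaled independent Brownian bridges; by Theorem \ref{thm_asym} this is exactly the $N(0,\Sigma)$ law of $\sqrt n\,((\hat\theta_{mn},\hat I_{2,mn})^{\top}-(\theta,I_2)^{\top})$, which is the claim.

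The hard part will be the same one as in Theorem \ref{thm_asym}: controlling the behaviour near the endpoints $t=0,1$, where the ROC slope need not be bounded and the Hadamard derivative of the inverse map is a priori not defined. I would handle this by working on a fixed subinterval $[a,b]\subset(0,1)$, on which the limiting Brownian-bridge paths are almost surely continuous so that differentiability of the inverse/composition map applies, and by bounding the contribution of the tails $[0,a]\cup[b,1]$ to the integrals defining $\theta$ and $I_2$ uniformly in $n$ and $m$ (using monotonicity of $\hat G_m\circ\hat F_n^{-1}$ and $0\le\hat G_m\circ\hat F_n^{-1}\le 1$), letting $a\downarrow 0$ and $b\uparrow 1$ at the end. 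An alternative, parallel to the Hungarian-construction argument underlying Theorem \ref{thm_asym}, would be to build a strong approximation of the bootstrap ROC process directly; but the delta-method route above is shorter, because the two cited bootstrap theorems package precisely the conditional-on-data analogues of the ingredients already used for Theorem \ref{thm_asym}, so everything else is a routine transcription of that proof.
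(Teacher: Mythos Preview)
Your proposal is correct and lands on the same finishing tool as the paper (the bootstrap delta method, \cite{van1996weak}, Theorem~3.9.11), but the decomposition leading up to it is genuinely different. The paper, mirroring its proof of Theorem~\ref{thm_asym}, works via the Cram\'er--Wold device with the linear map $\phi$ of~\eqref{phi} and treats the empirical ROC process $t\mapsto \hat G_m\circ\hat F_n^{-1}(t)$ itself as the base object: it builds a custom function class (borrowed from \cite{parkinson2018fast}) representing this process, argues that this class is Donsker, and then invokes the bootstrap CLT for Donsker classes (\cite{van1996weak}, Theorem~3.6.1) before applying Theorem~3.9.11 to the linear $\phi$. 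You instead take the pair $(\hat F_n,\hat G_m)$ of ordinary empirical cdfs as the base process, use the standard bootstrap FCLT for each, and push all the nonlinearity into the Hadamard-differentiable map $(F,G)\mapsto G\circ F^{-1}\mapsto(\theta,I_2)$ via Lemma~3.9.23. Your route is more modular and uses only off-the-shelf results, at the cost of having to confront the endpoint behaviour of the quantile map explicitly (which you address via the $[a,b]$ truncation); the paper's route sidesteps that issue by treating the ROC process as primitive, but in exchange relies on a bespoke Donsker-class construction. Either way the identification of the limit with $N(0,\Sigma)$ is immediate from Theorem~\ref{thm_asym}, and your argument does not need Cram\'er--Wold since the delta method delivers the bivariate limit directly.
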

\begin{proof}
Appendix \ref{proodthm2}
\end{proof}

For more details on the empirical bootstrap process, see for example Chapter 3.6 in \cite{van1996weak}.
%\subsection{Simultaneous Confidence Intervals}
\label{sec:sim_conf_int}
Theorem \ref{thm_boot} justifies our approach to use the bootstrap for the construction of simultaneous confidence intervals for our estimators. Here we present two methods for constructing a simultaneous confidence intervals. 
\begin{enumerate}
    \item  {\bf Asymptotically multivariate-normal with bootstrapped covariance } \\
    In Theorem \ref{thm_asym}, we have shown that both estimators are asymptotically bivariate normally distributed. Hence, we can derive a symmetric, elliptical $1-\alpha$ confidence region from the bootstrap covariance by considering all bivariate vectors $ {\bf d}=(d_1,d_2)$ satisfying
    \begin{align*}
   \frac{1}{\sqrt{n}} ||(\hat{\theta}_{mn}, \hat{I}_{2,mn})- {\bf d} ||_{2}  \leq \Phi_{\Sigma^*_{n,m}}(1- \alpha),
    \end{align*}
    where $\Phi_{\Sigma^*_{n,m}}$ is the cdf of binariate normal distribution with expectation $0$ and covariance matrix $\Sigma^*_{n,m}$.
    The marginals of this region are
\begin{align*}
\Biggl [
    \begin{pmatrix}
\hat{\theta}_{mn} \\ \hat{I}_{2,mn}
\end{pmatrix}
- \frac{1}{\sqrt{n}} z_{1-\alpha/2}(\Sigma^*_{n,m}),
\begin{pmatrix}
\hat{\theta}_{mn} \\ \hat{I}_{2,mn}
\end{pmatrix}
+\frac{1}{\sqrt{n}} z_{1-\alpha/2}(\Sigma^*_{n,m}) \Biggl ],
\end{align*}
where $z(\Sigma^*_{n,m})$ denotes the quantile of a bivariate normal distribution with expectation $0$ and covariance matrix $\Sigma^*_{n,m}$, which is the sample covariance matrix of the bootstrap estimates. 
  \item {\bf Empirical bootstrap quantiles (Bonferroni adjusted) } \\
    Of course it is not necessary to incorporate the asymptotic normality of our estimators. 
    For smaller sample sizes, it may even be more accurate to construct  $1-\alpha$ simultaneous confidence intervals solely based on the empirical quantiles of the bootstrap distribution:
\begin{align*}
\Biggl [
    \begin{pmatrix}
\hat{\theta}_{mn} \\ \hat{I}_{2,mn}
\end{pmatrix}
- \frac{1}{\sqrt{n}}
 \begin{pmatrix}
z^*_{\hat{\theta}, \alpha/4} \\ z^*_{\hat{I_2}, \alpha/4}
\end{pmatrix},
\begin{pmatrix}
\hat{\theta}_{mn} \\ \hat{I}_{2,mn}
\end{pmatrix}
+\frac{1}{\sqrt{n}} 
\begin{pmatrix}
z^*_{\hat{\theta}, 1-\alpha/4} \\ z^*_{\hat{I_2}, 1-\alpha/4}
\end{pmatrix} \Biggl ],
\end{align*}
where $z^*_{\hat{\theta}}$  and $z^*_{\hat{I_2}}$ denote the quantiles of the empirical bootstrap distribution of $\hat{\theta}^*$ and $\hat{I_2}^*$, respectively. 
The intervals are Bonferroni adjusted due to multiplicity.
\end{enumerate}

In Section \ref{alt_ci}
in the Appendix, we present a short review on alternative approaches for construction simultaneous confidence intervals and compare them in a simulation study. The first method based on the quantiles of a multivariate normal distribution yields elliptical confidence regions. The other methods (including those introduced in the Appendix) are all based on some adjustment of univariate confidence intervals and have a rectangular structure. The performance of these methods will be briefly evaluated in Section \ref{sec_sim_ci} and in more detail in Section \ref{add_sim_ci} 
in the Appendix.
\begin{remark} {\bf Range Preserving Confidence Intervals} \\
In the case of very large or small parameter values, it is possible that the constructed confidence intervals take values outside the possible parameter ranges.
In such cases, we recommend to use simultaneous range preserving confidence intervals (\cite{efron1994introduction}, \cite{konietschke12}). 
The above results can be amended in order to derive a confidence region for $\hat{\theta}$ and $\hat{I}_{2} $ which lies fully within the triangle in Figure \ref{fig:theta_I_2_relationship} representing all possible bivariate parameter values.
\end{remark}

\section{Testing for Equality of Distributions} \label{chap_test}
%\cite{parkinson2022testing} developed a test for equality of distributions using the previously introduced concept of niche overlap. 
As mentioned in the introduction, we would like to use both nonparametric functionals to develop a test on distribution equality. 
Therefore, we use the asymptotic joint distribution of the estimators of $\theta$ and $I_2$, derived in Section \ref{sec_joint}. 
The main challenge is a practically useful estimation of the covariance matrix of our estimators.
The bootstrapped covariance matrix estimators from Section \ref{secresamp} may be too conservative. 
We will tackle this challenge with two different approaches: 
\subsection{Test for Distribution Equality for Absolutely Continuous Distributions} \label{test_abscont}
The first approach is based on the covariance matrix calculated in Theorem \ref{thm_asym}. 
As mentioned before in Remark \ref{rem_var}, this matrix simplifies in the case of equal distributions to:
\begin{align*}
   \Sigma= \begin{pmatrix}
\sigma^2 & 0 \\
0 & \sigma^2
\end{pmatrix},
\end{align*}
where $\sigma^2$ is defined as $ \frac{1}{12}(\frac{1}{m} +\frac{1}{n})$, as in the classical Mann-Whitney U-Test. Therefore it is quite commonly used for estimating the variance of $\theta$ in the case of absolutely continuous distributions. We then use the test statistic
\begin{align*}
 \begin{pmatrix}
 \ (\hat{\theta}_{mn}-1/2)/\sigma \\  \ (\hat{I}_{2,mn}-1/2)/ \sigma
\end{pmatrix}
 \end{align*}
 which follows under the null hypothesis of distribution equality, $H_0: F=G$, asymptotically a bivariate standard normal distribution. 
 \begin{remark} \label{consis_test1} Consistency Region: \\
As mentioned even in the title of this paper, our proposed test is not an omnibus test, such as for example the Kolmogorov-Smirnov or Cramer von Mises tests. 
Instead, we test the two-dimensional null hypothesis $\theta =I_2=1/2$. 
Due to equations \eqref{theta} and \eqref{I2}, this is equivalent to
\begin{align} \label{cons_test}
P(X^{(1)} < Y^{(1)})+ P(X^{(1)} < Y^{(2)}) =3/2  \ &\wedge \
P(X^{(2)} < Y^{(2)}) +P(X^{(2)} < Y^{(1)}) = 1/2. 
%\label{cons_test1}
\end{align}
Therefore, the new test proposed here will be consistent for alternatives to \eqref{cons_test}. 
%or \eqref{cons_test1}.
 \end{remark}
 The performance of this procedure is quite good (for details see Chapter \ref{simulation}). 
 However, our approach for the proof of Theorem \ref{thm_asym} relied on results assuming continuity of the distributions. 
 Therefore, we have additionally developed an approach based on the theory of \cite{brunner2017rank} and used this approach in the construction of a univariate test based on the overlap index by \cite{parkinson2022testing}. The main idea is that under the assumption of equal medians, both functionals $\theta$ and $I_2$ simplify. 
 Then, we use these "simplified" estimators for testing the null hypothesis of equal distributions. As they consist of independent components, we can incorporate the theory of multiple contrast test developed by \cite{konietschke12}.
Using these results, we are able to derive the joint asymptotic distribution of $\theta$ and $I_2$ 
with fewer assumptions than in Section \ref{sec_joint}, especially without assuming absolutely continuous distributions, 
and additionally we obtain a consistent covariance estimator. 
We present this approach  in the following section in
more detail. 
%However, the reduction in the assumptions also comes with a disadvantage since this method is quite liberal for very small sample sizes and performs poorly for pure location-shift effects, likely due to the induced bias for the Mann-Whitney functional (see the simulation results in Section \ref{add_sim_test}).

\subsection{Test for Distribution Equality Using Adjusted Statistical Functionals} \label{test-sim}
Using the definition of the conditional versions of the random variables $X$ and $Y$ in Remark \ref{alt_i2}, it follows directly from the representation of $\theta$ and $I_2$ in Equations \eqref{theta} and \eqref{I2} that under the assumption of equal medians, $P(X^{(1)} < Y^{(2)})= P(Y^{(1)} < X^{(2)})=1$ and $P(X^{(2)} < Y^{(1)})=P(Y^{(2)} < X^{(1)})=0$. 
Therefore,
\begin{equation} \label{theta_medeq}
\theta= \frac{1}{4} [ 1+ P(X^{(1)} < Y^{(1)})+P(X^{(2)} < Y^{(2)})]
\end{equation}
and
\begin{equation} \label{I2_medeq}
I_2= \frac{1}{2} [ 1+P(X^{(2)} < Y^{(2)})-P(X^{(1)} < Y^{(1)})].
\end{equation}
Under the null hypothesis $H_0 \: \ F=G$, in the case of continuous distributions $F$ and $G$, 
simplified estimators of the probabilities 
$P(X^{(1)} < Y^{(1)})$ and $P(X^{(2)} < Y^{(2)})$
can be devised when taking advantage of the fact that under this null hypothesis, both $X$ and $Y$ have the same median.
To this end, we split both the $X$- and the $Y$-samples at their joint median and define the corresponding canonical estimators to be 
$\hat{\theta}^{(1)}_{mn}$ and $\hat{\theta}^{(2)}_{mn}$,
respectively. 
In our simulations, splitting at the joint median proved superior to splitting at the individual medians which had been proposed by 
\cite{parkinson2022testing}. 

In order to keep the notation simple, we call the  samples split at the joint median again $X^{(1)}$, $X^{(2)}$, $Y^{(1)}$ and $Y^{(2)}$.  
Under the null hypothesis implying median equality, and due to the split at the same value, the estimators $\hat{\theta}^{(1)}_{mn}$ and $\hat{\theta}^{(2)}_{mn}$ are independent. Thus, and using equations \eqref{theta_medeq} and \eqref{I2_medeq}, 
we can derive a consistent estimator for the covariance matrix of our estimators $\hat{\theta}_{mn}$ and $\hat{I}_{2,mn}$. 

To this end, we first need an extension to the rank definition introduced in Section \ref{sec2}. 
We assume again without loss of generality that the samples are ordered $X_1 < \ldots, < X_n$ and $Y_1 < \ldots  < Y_m$. 
The $X$-observations below their joint median we denote by $X_1, \ldots, X_K$ and the observations above by $X_{K+1}, \ldots, X_n$. We do the same for the $Y$ observations $Y_1, \ldots, X_L$ and  $Y_{L+1}, \ldots, Y_m$. If there are ties at the joint median we split the values equally\\
The ranks of $X$ and $Y$ below their median we denote by $R_i^{X(<)}$ and $R_i^{Y(<)}$ and their average by $\bar{R}_.^{X(<)}$ and $\bar{R}_.^{Y(<)}$. Analogously we define $R_i^{X(>)}$ and $R_i^{Y(>)}$, $\bar{R}_.^{X(>)}$ and $\bar{R}_.^{Y(>)}$, respectively, for those above the median. 
The ranks within each of the four groups are denoted  $R_i^{(X<)}$, $R_i^{(Y<)}$, $R_i^{(X>)}$, and $R_i^{(Y>)}$. \\
As the relative effect and the niche overlap are here only used for the purpose of hypothesis testing, we can define the following adjusted versions of both effect measures.
The \emph{adjusted relative effect} is defined as
\begin{equation} \label{theta_sim}
\theta^{\text{adj}}= \frac{1}{4} [ 1+ P(X^{(1)} < Y^{(1)})+\frac{1}{2} P(X^{(1)} = Y^{(1)}) +P(X^{(2)} < Y^{(2)}) + \frac{1}{2} P(X^{(2)} = Y^{(2)})]
\end{equation}
and the \emph{adjusted overlap index} is defined as
\begin{equation} \label{I2_sim}
I_{2} ^{\text{adj}}= \frac{1}{2} [ 1+P(X^{(2)} < Y^{(2)})+ \frac{1}{2} P(X^{(2)} = Y^{(2)})-P(X^{(1)} < Y^{(1)})- \frac{1}{2} P(X^{(1)} = Y^{(1)})].
\end{equation}
Employing the canonical plug-in estimators again, we obtain  consistent estimators for both functionals. 
Noting that we have linear combinations of relative effects,  
the representation in equation \eqref{repr_theta} can be used, only substituting $X$ and $Y$ by their conditional counterparts. Keep in mind that equations \eqref{theta_sim} and \eqref{I2_sim} are equivalent to equations \eqref{theta_medeq} and \eqref{I2_medeq} in the case of continuous distributions and equal medians.
\begin{proposition} (Variance Estimation) \label{var_est} \\
\begin{comment}
Under the assumption of continuous distribution function $F$ and $G$, with equal medians $F^{-1}(1/2)=G^{-1}(1/2)$ a consistent estimator of the covariance matrix $\hat{\Sigma}_{mn}$ of $(\hat{\theta}_{mn},\hat{I}_{2,mn})$ is given by
\end{comment}
Assume that the variances $\sigma_1^2= \text{Var} (G(X_1))$ and $\sigma_2^2= \text{Var} (F(Y_1))$ satisfy $\sigma_1^2 > 0$ and $\sigma_2^2 > 0$ and that $\frac{n+m}{n} \leq N_1 < \infty $ and $\frac{n+m}{m} \leq N_2 < \infty $. 
Then, a consistent estimator of the covariance matrix $\hat{\Sigma}_{mn}$ of $(\hat{\theta}_{mn}^{\text{adj}},\hat{I}_{2,mn}^{\text{adj}})$ is given by
\begin{align*}
\hat{\Sigma}_{mn}=
\begin{pmatrix}
s^2/4 & s_{\theta,I_2} \\
s_{\theta,I_2} & s^2
\end{pmatrix},
\end{align*}
where
\begin{align*}
s^2 &= \frac{l+k}{2} \Bigl ( \frac{s_{X_1}^2}{k}+ \frac{s_{Y_1}^2}{l} \Bigl) +\frac{n+m-l-k}{2} \Bigl ( \frac{s_{X_2}^2}{n-k}+ \frac{s_{Y_2}^2}{m-l} \Bigl), \\
s_{\theta,I_2} &= \frac{l+k}{2} \Bigl ( \frac{s_{X_1}^2}{k}+ \frac{s_{Y_1}^2}{l} \Bigl) -\frac{n+m-l-k}{2} \Bigl ( \frac{s_{X_2}^2}{n-k}+ \frac{s_{Y_2}^2}{m-l} \Bigl),
\end{align*}
and
\begin{align*}
s_{X_1}^2 &=  \frac{1}{l^2(k-1)} \sum_{i=1}^k \Bigl( R_i^{X(<)}- R_i^{(X<)}- \bar{R}_.^{X(<)} +\frac{k+1}{2} \Bigl)^2, \\
s_{X_2}^2  &=  \frac{1}{(m-l)^2(n-k-1)} \sum_{j=k+1}^n \Bigl( R_j^{X(>)}- R_i^{(X>)}- \bar{R}_.^{X(>)} +\frac{n-k+1}{2} \Bigl)^2,   \\
s_{Y_1}^2   &=  \frac{1}{k^2(l-1)} \sum_{i=1}^l \Bigl( R_i^{Y(<)}- R_i^{(Y<)}- \bar{R}_.^{Y(<)} +\frac{l+1}{2} \Bigl)^2,   \\
s_{Y_2}^2 =    &=  \frac{1}{(n-k)^2(m-l-1)} \sum_{j=l+1}^m \Bigl( R_j^{Y(>)}- R_i^{(Y>)}- \bar{R}_.^{Y(>)} +\frac{m-l+1}{2} \Bigl)^2.
\end{align*}
Additionally, under the null hypothesis $H_0: F=G$, the test-statistic
 \begin{align} \label{statistic_sim}
 \begin{pmatrix}
\sqrt{m+n} \ (\hat{\theta}_{mn}^{\text{adj}}-1/2) \\ \sqrt{m+n} \ (\hat{I}_{2,mn}^{\text{adj}}-1/2)
\end{pmatrix}
 \end{align}
 is asymptotically distributed according to a multivariate normal distribution with expectation vector zero and covariance matrix $\hat{\Sigma}_{mn}$.
\end{proposition}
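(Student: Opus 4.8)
The plan is to reduce the bivariate claim to a pair of (asymptotically) independent one--dimensional Mann--Whitney problems on the sub-samples obtained by splitting the data at the joint sample median, and then to transport the classical rank-based asymptotics through a fixed affine map. To this end, write $\hat{\theta}^{(1)}_{mn}$ and $\hat{\theta}^{(2)}_{mn}$ for the mid-rank plug-in relative-effect estimators of $Y^{(i)}$ relative to $X^{(i)}$ on the two sub-samples. Equations \eqref{theta_sim} and \eqref{I2_sim}, together with the rank representation \eqref{repr_theta} applied to the conditional variables, give
\begin{align*}
\hat{\theta}_{mn}^{\text{adj}} &= \tfrac14\bigl(1+\hat{\theta}^{(1)}_{mn}+\hat{\theta}^{(2)}_{mn}\bigr),\\
\hat{I}_{2,mn}^{\text{adj}} &= \tfrac12\bigl(1+\hat{\theta}^{(2)}_{mn}-\hat{\theta}^{(1)}_{mn}\bigr),
\end{align*}
so $(\hat{\theta}_{mn}^{\text{adj}},\hat{I}_{2,mn}^{\text{adj}})^{\top}=(\tfrac14,\tfrac12)^{\top}+A\,(\hat{\theta}^{(1)}_{mn},\hat{\theta}^{(2)}_{mn})^{\top}$ with $A$ the fixed matrix with rows $(\tfrac14,\tfrac14)$ and $(-\tfrac12,\tfrac12)$. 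Under $H_0$ the two pairs $(X^{(i)},Y^{(i)})$ are i.i.d., hence $\theta^{(i)}:=P(X^{(i)}<Y^{(i)})+\tfrac12P(X^{(i)}=Y^{(i)})=\tfrac12$, so the centering by $\tfrac12$ in \eqref{statistic_sim} is exact. It therefore suffices to establish joint asymptotic normality of $\sqrt{m+n}\,(\hat{\theta}^{(1)}_{mn}-\tfrac12,\ \hat{\theta}^{(2)}_{mn}-\tfrac12)^{\top}$ under $H_0$ together with consistency of a rank-based estimator of its covariance, and then to push everything through $A$.

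For the sub-sample asymptotics, note that under $H_0$ the pooled sample is i.i.d., so the empirical joint median is consistent for the population median and, conditionally on it, the observations below (resp.\ above) behave asymptotically like i.i.d.\ draws from the lower (resp.\ upper) truncated distribution; the hypotheses $\tfrac{n+m}{n}\le N_1<\infty$ and $\tfrac{n+m}{m}\le N_2<\infty$ force the four sub-sample sizes $k,l,n-k,m-l$ to diverge at rate $m+n$. For a deterministic split point each $\hat{\theta}^{(i)}_{mn}$ is an ordinary mid-rank two-sample relative-effect estimator, so the rank-based central limit theorem and the consistency of the placement-type variance estimator of \cite{brunner2017rank} --- together with the multiple-contrast framework of \cite{konietschke12}, neither of which requires continuity --- yield that $\sqrt{m+n}\,(\hat{\theta}^{(i)}_{mn}-\theta^{(i)})$ is asymptotically $N(0,\kappa_i^2)$ with $\kappa_i^2>0$ (the hypotheses $\sigma_1^2,\sigma_2^2>0$ excluding degeneracy), and that $s^2_{X_i},s^2_{Y_i}$ in the statement are precisely the empirical placement variances of sub-sample $i$. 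A short computation using $k+l\approx m+n-k-l\approx(m+n)/2$ then shows $\tfrac{k+l}{2}\bigl(\tfrac{s^2_{X_1}}{k}+\tfrac{s^2_{Y_1}}{l}\bigr)$ and $\tfrac{m+n-k-l}{2}\bigl(\tfrac{s^2_{X_2}}{n-k}+\tfrac{s^2_{Y_2}}{m-l}\bigr)$ are consistent for $\tfrac14\kappa_1^2$ and $\tfrac14\kappa_2^2$; for continuous $F=G$ each $\kappa_i^2$ collapses to the uniform-type value, recovering Remark \ref{rem_var}. The delicate point --- and what I expect to be the main obstacle --- is that the split point is data-dependent: one has to verify, e.g.\ by conditioning on the empirical median together with an equicontinuity argument over split points in a shrinking neighbourhood of the population median and concentration of the random sizes $k,l$ around $n/2$ resp.\ $m/2$, that replacing the population median by the empirical joint median leaves the limit unchanged.

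It remains to assemble the pieces. Under $H_0$, $\hat{\theta}^{(1)}_{mn}$ and $\hat{\theta}^{(2)}_{mn}$ are asymptotically independent --- conditionally on which observations lie on each side of the joint median they are functions of disjoint blocks of the exchangeable pooled sample, so their asymptotic covariance vanishes (the independence noted in the text preceding the statement). Hence $\sqrt{m+n}\,(\hat{\theta}^{(1)}_{mn}-\tfrac12,\hat{\theta}^{(2)}_{mn}-\tfrac12)^{\top}\Rightarrow N(\mathbf 0,\operatorname{diag}(\kappa_1^2,\kappa_2^2))$, and the continuous mapping theorem applied to $A$ gives
\begin{align*}
\sqrt{m+n}\begin{pmatrix}\hat{\theta}_{mn}^{\text{adj}}-\tfrac12\\ \hat{I}_{2,mn}^{\text{adj}}-\tfrac12\end{pmatrix}\Longrightarrow N\!\bigl(\mathbf 0,\ A\operatorname{diag}(\kappa_1^2,\kappa_2^2)A^{\top}\bigr).
\end{align*}
Expanding $A\operatorname{diag}(\kappa_1^2,\kappa_2^2)A^{\top}$ and substituting the consistent estimators of $\tfrac14\kappa_i^2$ identifies the limiting covariance as the plug-in target of $\hat{\Sigma}_{mn}$ --- the diagonal entries stand in the ratio $(1/4)^2:(1/2)^2=1:4$, i.e.\ $\sigma^2/4$ and $\sigma^2$, and the off-diagonal is the corresponding difference of sub-sample variances. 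Finally, Slutsky's theorem with $\hat{\Sigma}_{mn}\xrightarrow{P}A\operatorname{diag}(\kappa_1^2,\kappa_2^2)A^{\top}$ (which is the consistency assertion of the proposition) gives the stated asymptotic normality of \eqref{statistic_sim}; since $n/m$ is only assumed bounded, this is to be read in the studentized sense, $\hat{\Sigma}_{mn}^{-1/2}\sqrt{m+n}\,(\hat{\theta}_{mn}^{\text{adj}}-\tfrac12,\hat{I}_{2,mn}^{\text{adj}}-\tfrac12)^{\top}\Rightarrow N(\mathbf 0,\mathbf I)$ with $\mathbf I$ the $2\times2$ identity, invoking a subsequence argument if necessary.
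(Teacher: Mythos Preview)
Your approach is essentially identical to the paper's: both reduce to the two sub-sample Mann--Whitney problems via the affine representation in \eqref{theta_sim}--\eqref{I2_sim}, invoke the \cite{brunner2017rank} rank-based CLT and placement-variance estimators on each half, use the (asymptotic) independence of the two halves under $H_0$, and then assemble the bivariate limit --- you via the explicit matrix $A$ and continuous mapping, the paper via a direct variance computation for $\hat{I}_2^{\text{adj}}$ followed by Cram\'er--Wold. Your flagged concern about the data-dependent split at the empirical joint median is legitimate and is not resolved in the paper's proof either; there, too, the sub-samples are treated as if the split point were deterministic.
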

\begin{proof}
Appendix \ref{proofprop}
\end{proof}
Using the result from Proposition \ref{var_est} regarding the asymptotic distribution of the test statistic \eqref{statistic_sim}, we can test the null hypothesis $H_0:F=G$ versus $H_1:F \neq G$ without having to assume  absolutely continuous distributions. 

In the simulation study, we have empirically compared the different approaches for testing for equality of distributions.

We conclude this section with some short remarks on the consistency region of the test proposed here.
\begin{remark} Consistency Region: \\
Analogously to Remark \ref{consis_test1}, 
also the test proposed in this section and using the results from Proposition \ref{var_est}, 
is not consistent for every deviation from the null hypothesis of distribution equality. 
The procedure implicitly assumes the following:
\begin{align*}
    P(X^{(1)} < Y^{(2)}) &=1 \\
 P(X^{(2)} < Y^{(1)})&=0.
\end{align*}
Essentially, the test presented in this section evaluates
\begin{align*}
P(X^{(1)} < Y^{(1)}) + \frac{1}{2} P(X^{(1)} = Y^{(1)}) &=1/2 \\
P(X^{(2)} < Y^{(2)}) + \frac{1}{2} P(X^{(2)} = Y^{(2)}) &= 1/2.
\end{align*}
Therefore, it tests whether there is a stochastic tendency to larger or smaller values of $X$ with respect to $Y$, stratified by considering values below the joint median and above the joint median separately. 
\end{remark}

\section{Simulation Study} \label{simulation}
In this section, we present results from a simulation study evaluating the finite sample behavior of the methods introduced above. First, we focus on the performance of the hypothesis testing approach described in Section \ref{test_abscont} and in Section 
\ref{test-sim}, also comparing it with the most popular nonparametric procedures regarding empirical type I error and power.
%While doing this we will look if the test holds the proposed $\alpha$-level by evaluating the Type-I error and compare then, the power of our new test procedure with state of the art methods. 
Subsequently, we consider the different proposed approaches for constructing simultaneous confidence intervals and evaluate them with regard to empirical coverage probability and average interval length.
The quantiles of the multivariate normal distribution were calculated with the \textsc{R}-package \textsc{mvtnorm} \citep{mvtnorm}, \citep{genz_bretz}. 

\subsection{Hypothesis Test} \label{add_sim_test}
In our first simulation study, we have compared our newly proposed tests with two popular omnibus tests, namely the Kolmogorov-Smirnov and the Cramer-von Mises test. Additionally, the closely related and highly popular two-sample rank sum test (Wilcoxon, Mann, and Whitney) is included in the simulation comparison. 
We have simulated $1,000$ times from different distributions, for sample sizes from $10$ to $100$, in order to investigate the small (to moderate) sample performance of our proposed test procedures.
For simplicity we call the method for hypothesis testing based on the empirical ROC process described in Section \ref{chap_test} \emph{new joint test}, and we denote the method based on the adjusted statistical functionals described in Section \ref{test-sim} as \emph{adjusted joint test}.

\begin{figure}  
\begin{subfigure}{0.4\textwidth}
\includegraphics[width=\linewidth]{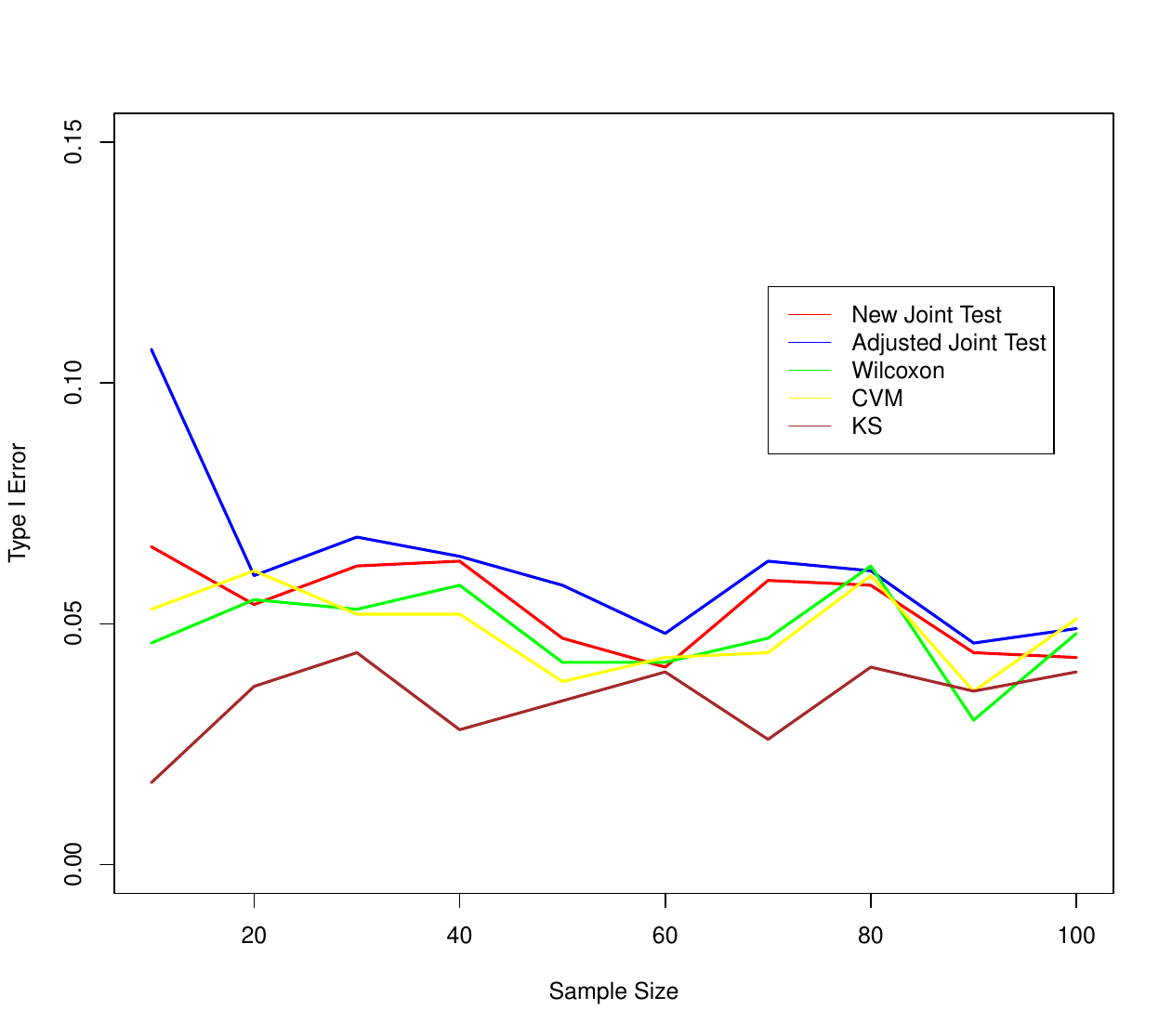}
\caption{Type I Error for $N(0, 1)$ } 
\end{subfigure}\hspace*{\fill}
\begin{subfigure}{0.4\textwidth}
\includegraphics[width=\linewidth]{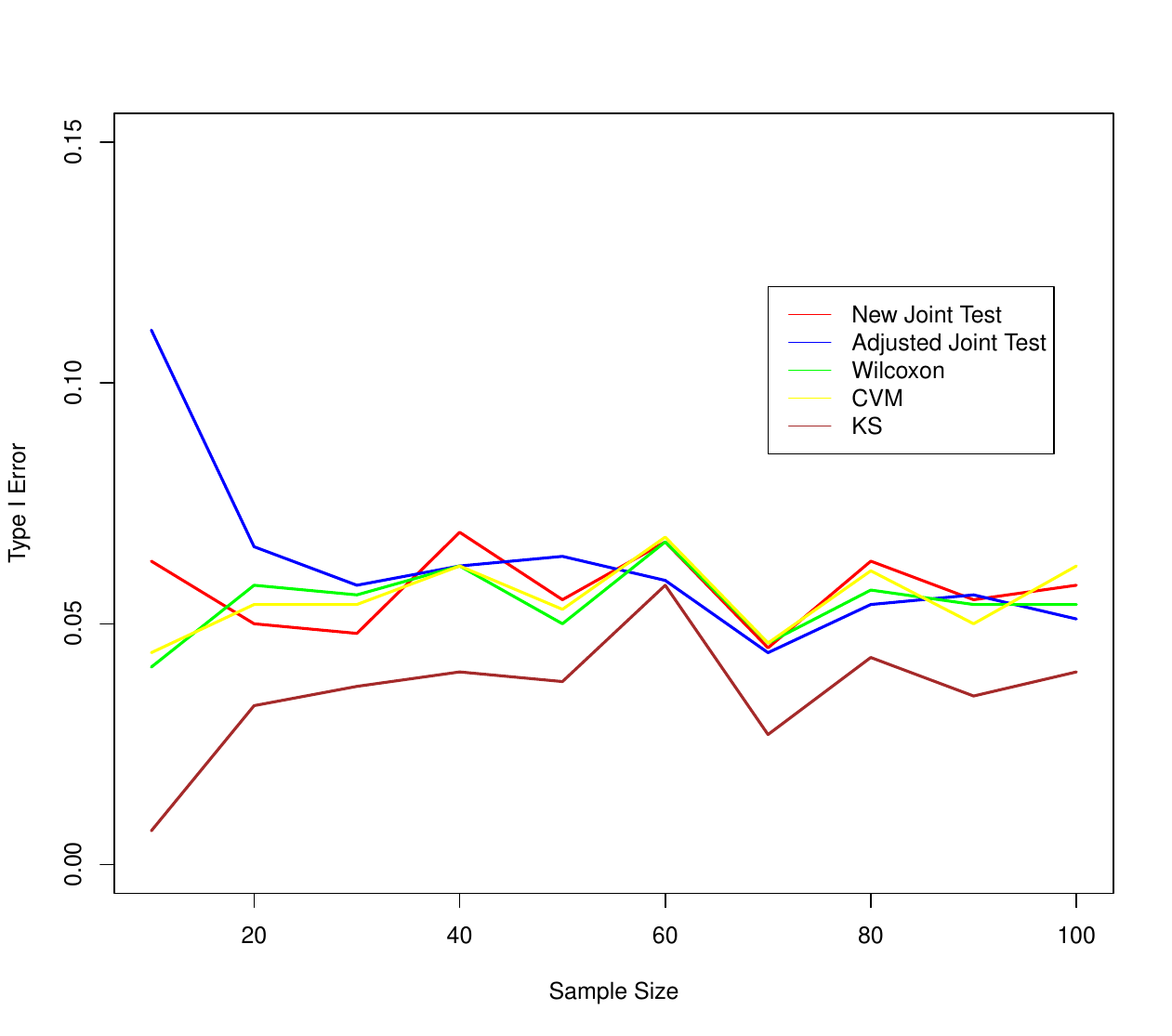}
\caption{Type I Error for Exp$(1)$ } 
\end{subfigure}

\medskip
\begin{subfigure}{0.4\textwidth}
\includegraphics[width=\linewidth]{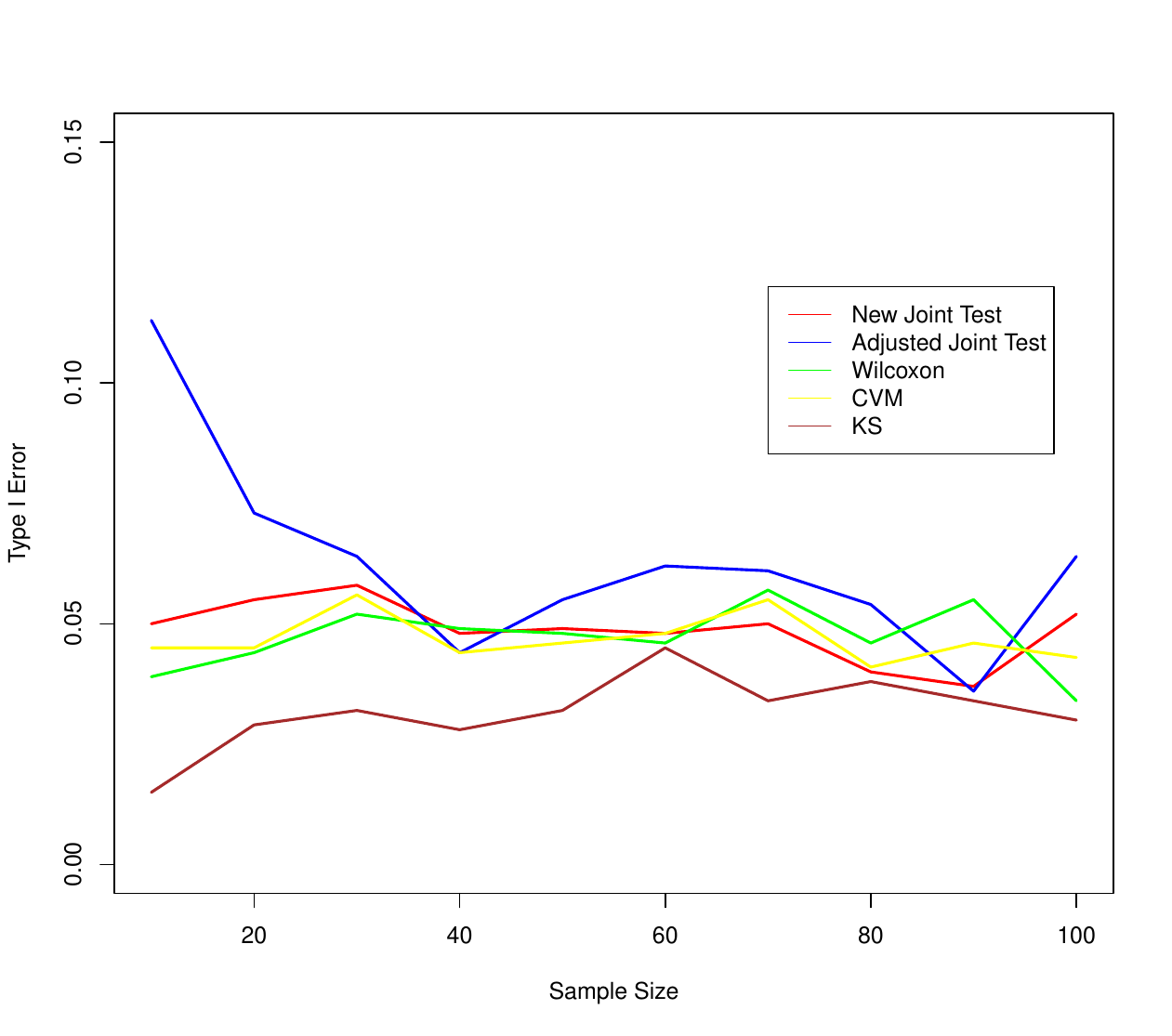}
\caption{Type I Error for  $U[0,1]$} 
\end{subfigure}\hspace*{\fill}
\begin{subfigure}{0.4\textwidth}
\includegraphics[width=\linewidth]{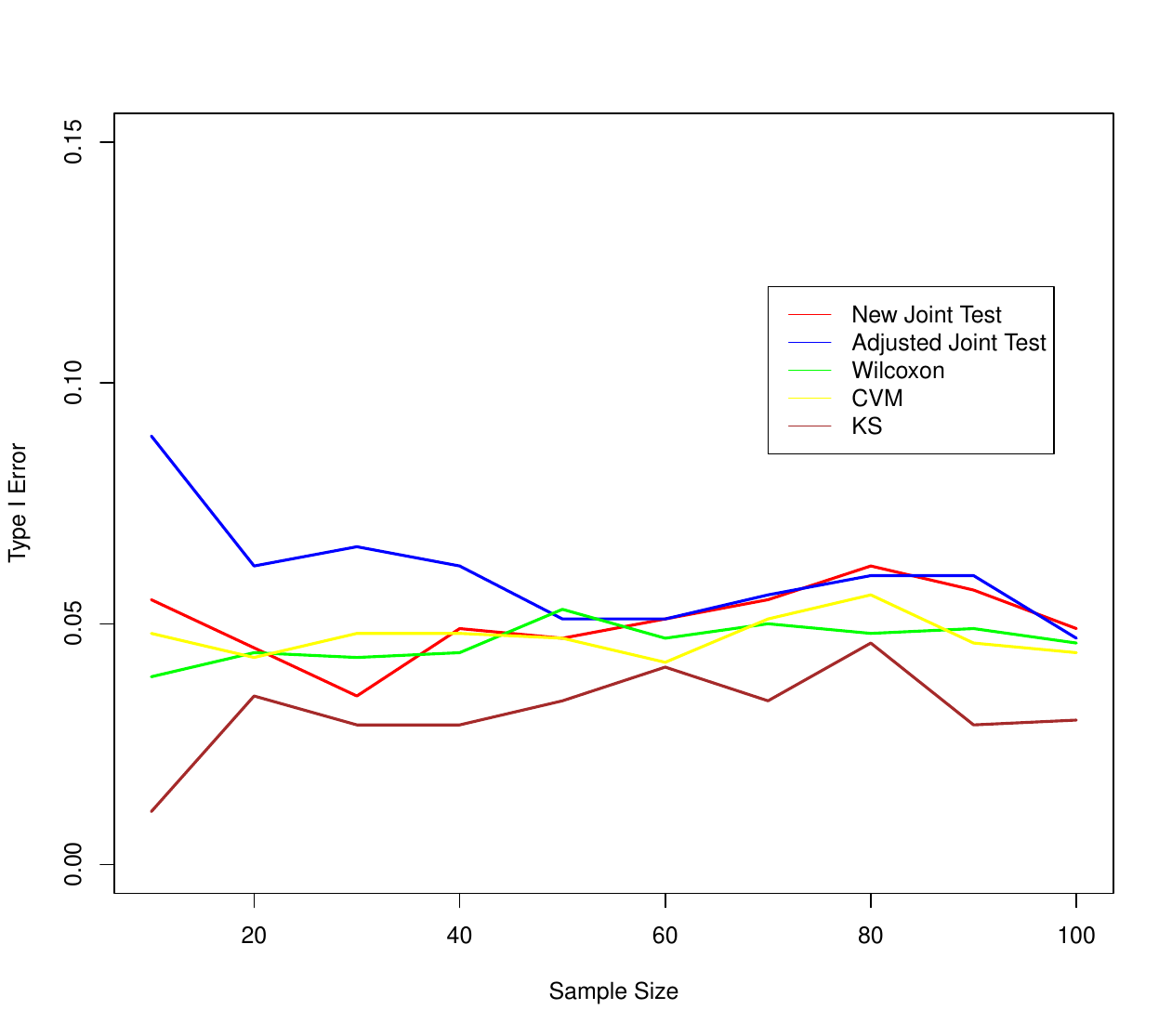}
\caption{Type I Error for Beta$(2, 3)$ } 
\end{subfigure}

\medskip
\begin{subfigure}{0.4\textwidth}
\includegraphics[width=\linewidth]{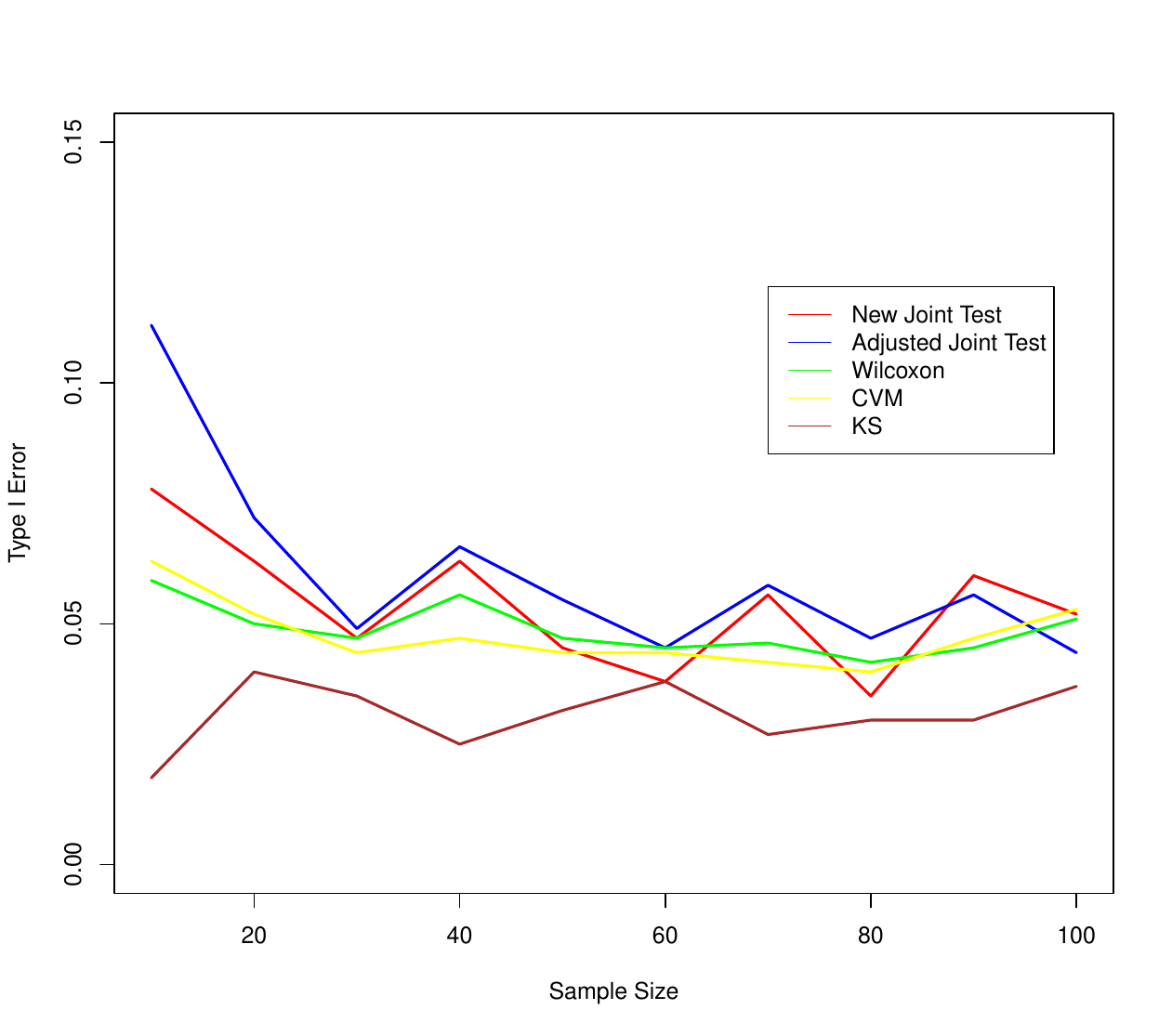}
\caption{Type I Error for Cauchy$(0, 1)$ } 
\end{subfigure}\hspace*{\fill}
\begin{subfigure}{0.4\textwidth}
\includegraphics[width=\linewidth]{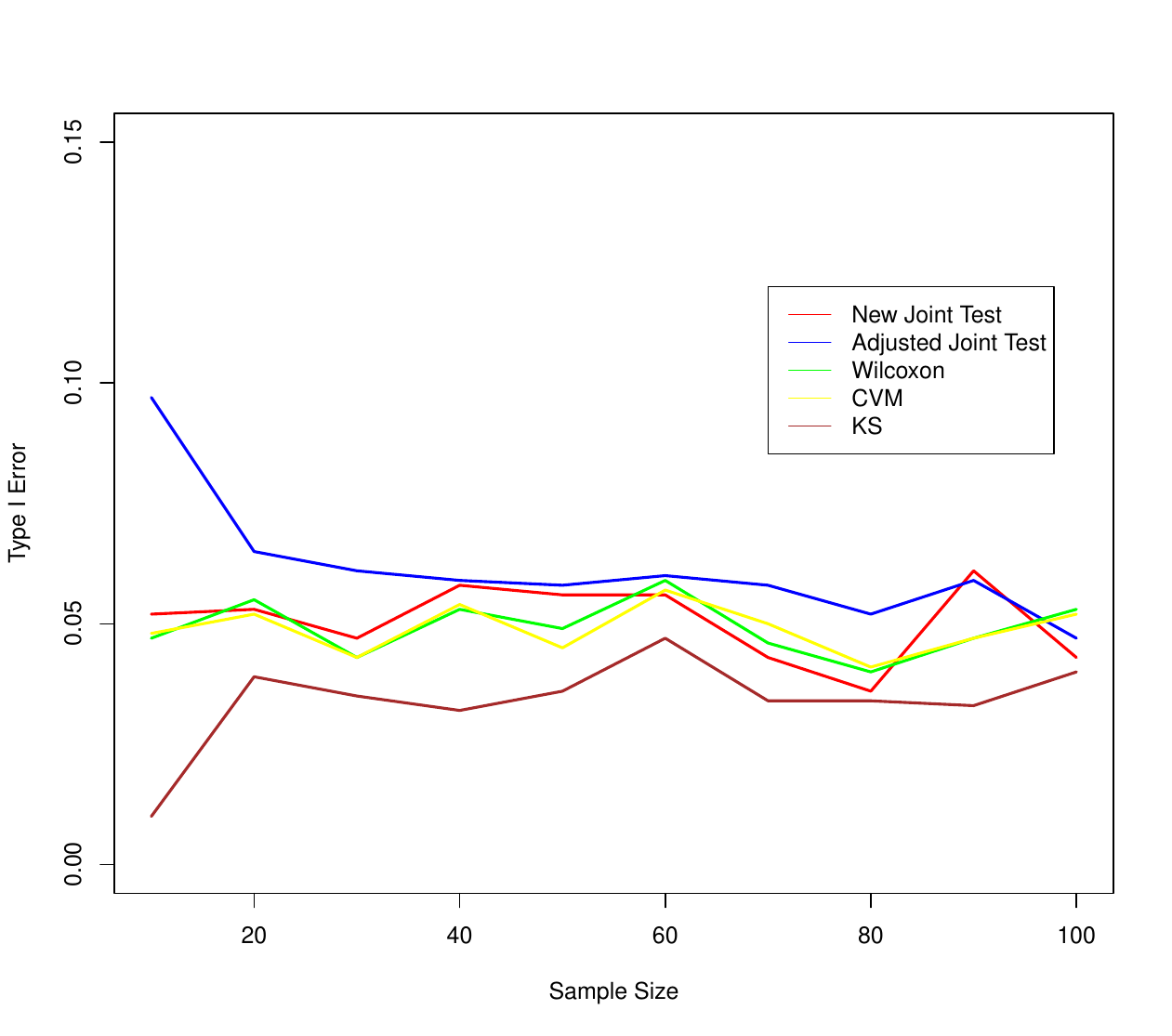}
\caption{Type I Error for $\chi_1^2$ } 
\end{subfigure}
\caption{Empirical type I error rates for different nonparametric tests, with different underlying distributions} \label{type1}
\end{figure}
For the simulation of the type I error in Figure \ref{type1}, we have assumed the null hypothesis and therefore drawn both samples from the same distribution. The underlying distributions include symmetric and skewed, as well as heavy- and light-tailed distributions. 
The method based on the adjusted estimators was too liberal for very small sample sizes ($n=10$) and should not be recommended for this situation, while our new joint test performed well, with similar type I error rates as the classical procedures. There were no major differences regarding the simulated test performance across the underlying distributions. Due to the nonparametric nature of the procedures considered, this is not unexpected. The Kolmogorov-Smirnov test stood out as being the most conservative, in particular for small samples.

\begin{figure} [ht!]
\begin{subfigure}{0.4\textwidth}
\includegraphics[width=\linewidth]{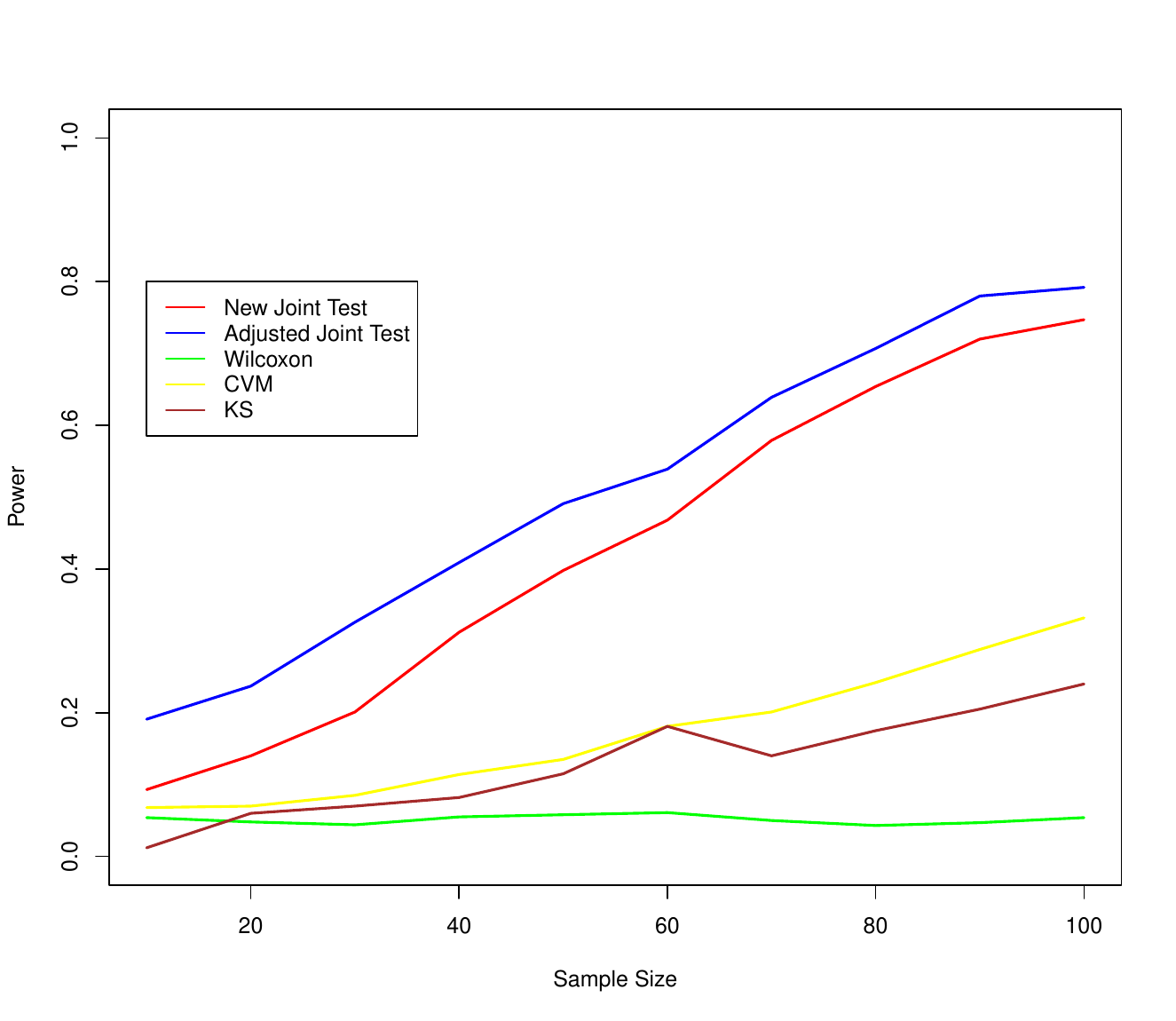}
\caption{Power for $N(0, 1)$ vs. $N(0, 1.5)$   }  \label{pow:1}
\end{subfigure}\hspace*{\fill}
\begin{subfigure}{0.4\textwidth}
\includegraphics[width=\linewidth]{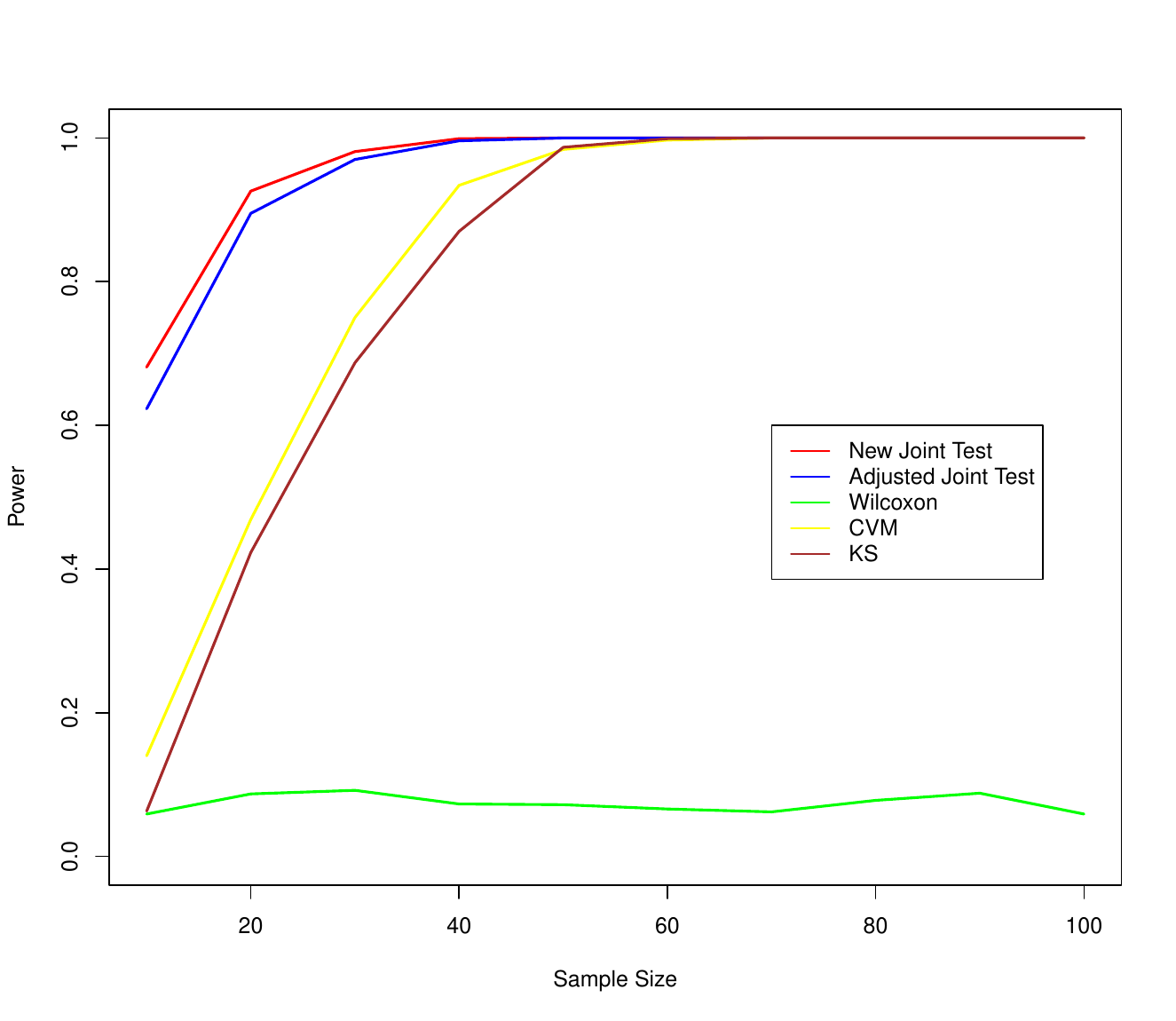}
\caption{Power for $N(0.5, 1)$ vs. $U[0,1]$ }  \label{pow:2}
\end{subfigure}

\medskip
\begin{subfigure}{0.4\textwidth}
\includegraphics[width=\linewidth]{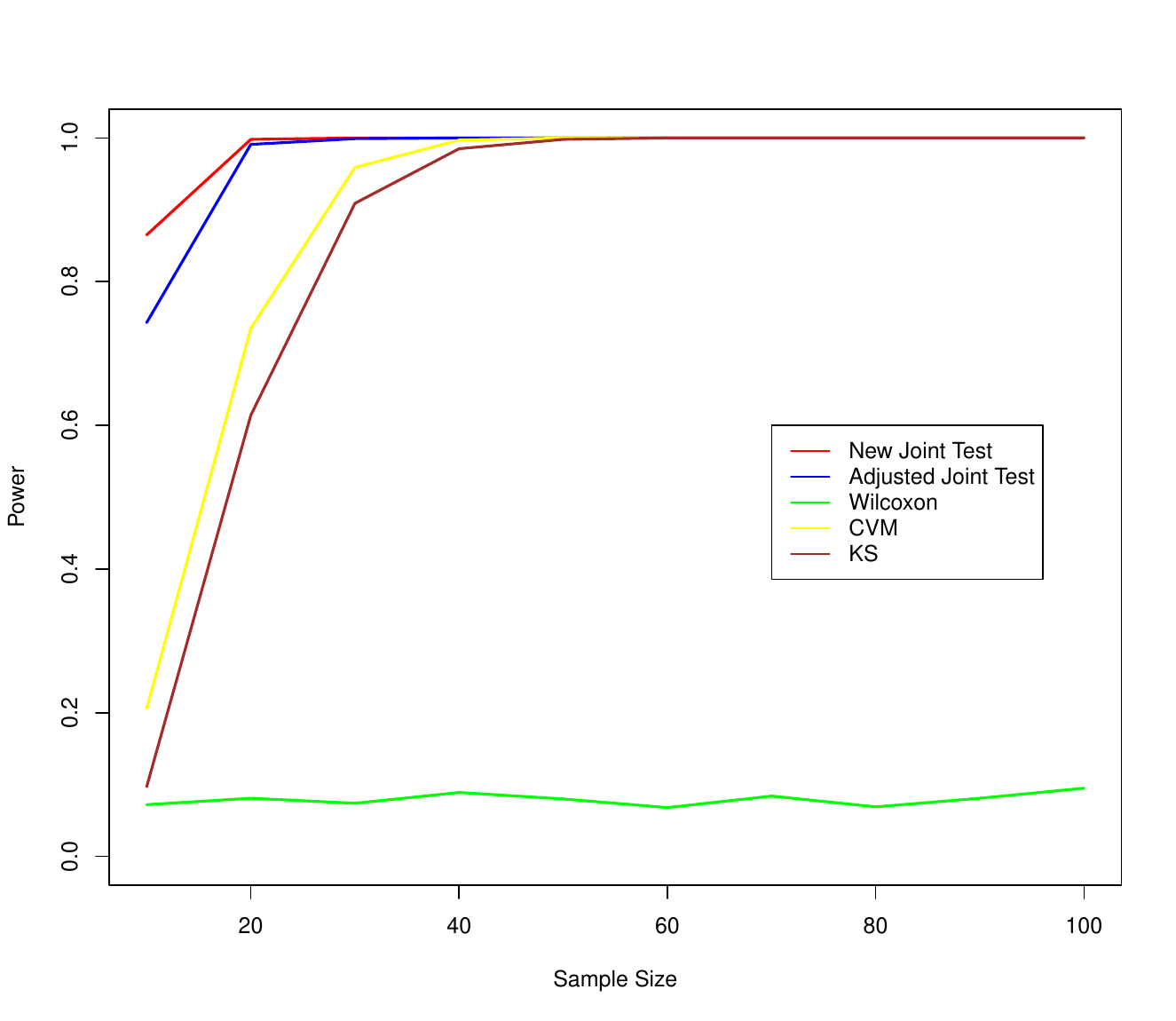}
\caption{Power for  $N(0.4, 1)$ vs. Beta$(2, 3)$} \label{pow:3}
\end{subfigure}\hspace*{\fill}
\begin{subfigure}{0.4\textwidth}
\includegraphics[width=\linewidth]{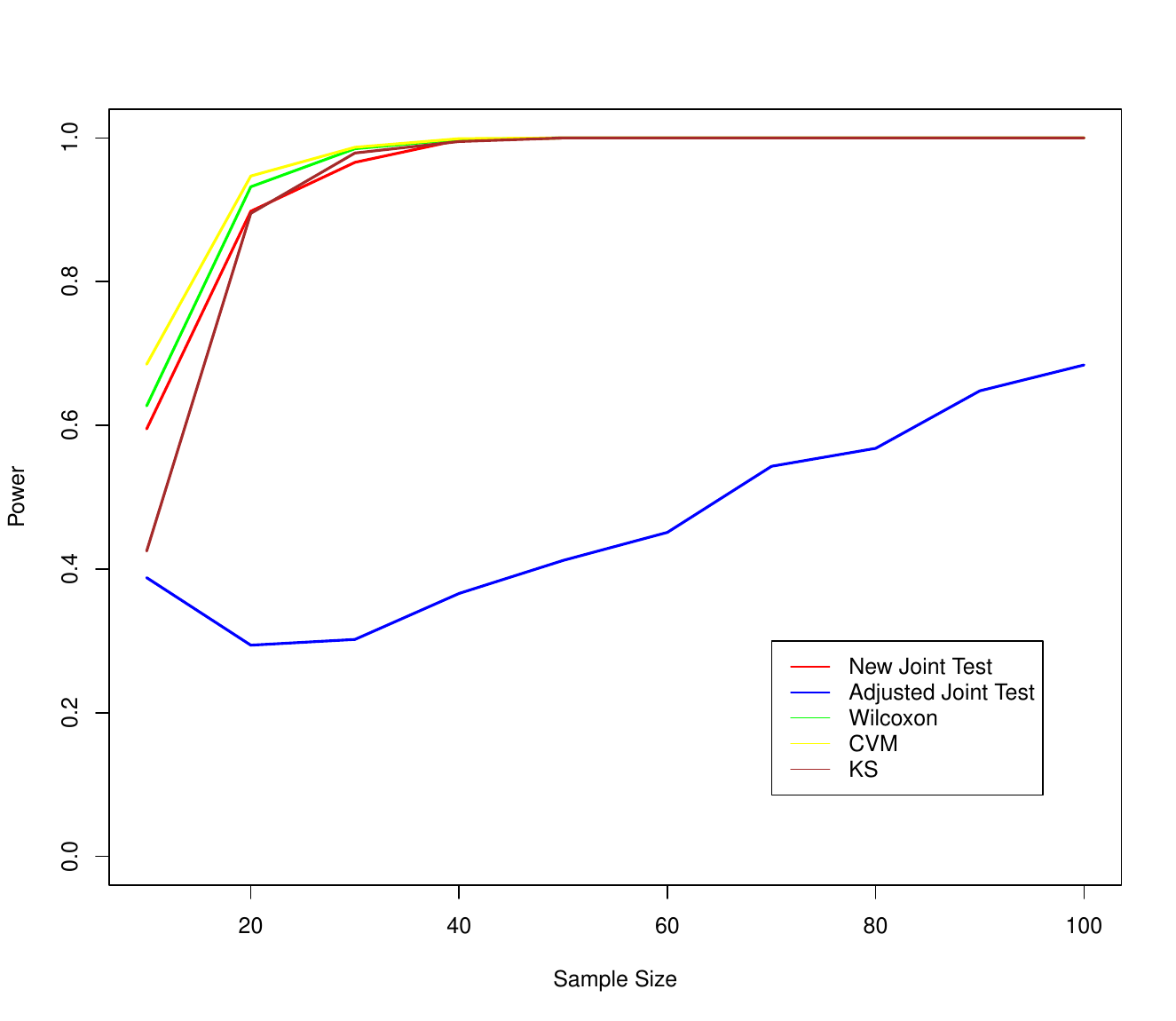}
\caption{Power for $N(2, 1)$ vs. Exp$(1)$ } \label{power-n21}
\end{subfigure}

\medskip
\begin{subfigure}{0.4\textwidth}
\includegraphics[width=\linewidth]{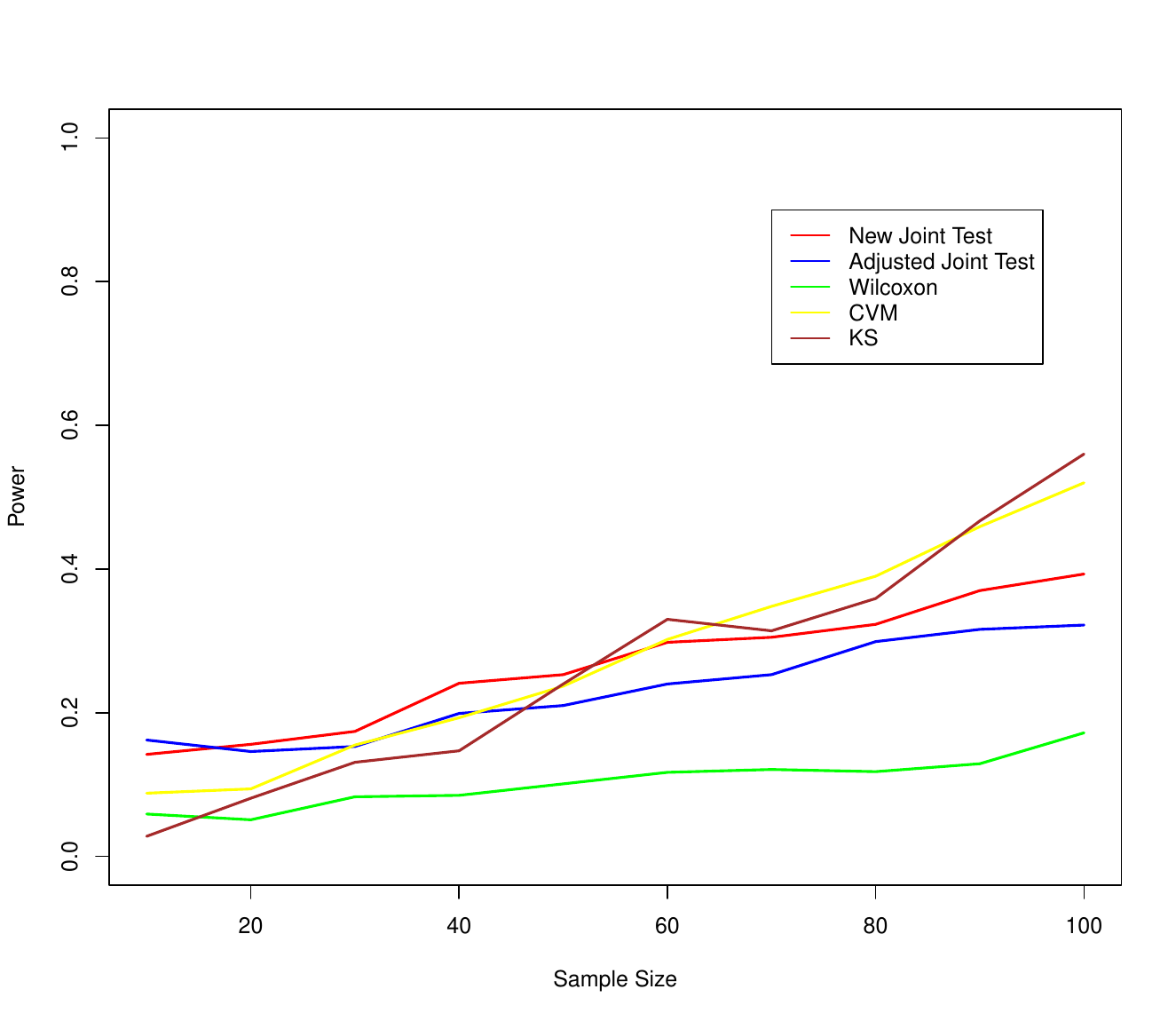}
\caption{Power for $N(2, 1)$ vs. Exp$(1)$ } \label{pow:5}
\end{subfigure}\hspace*{\fill}
\begin{subfigure}{0.4\textwidth}
\includegraphics[width=\linewidth]{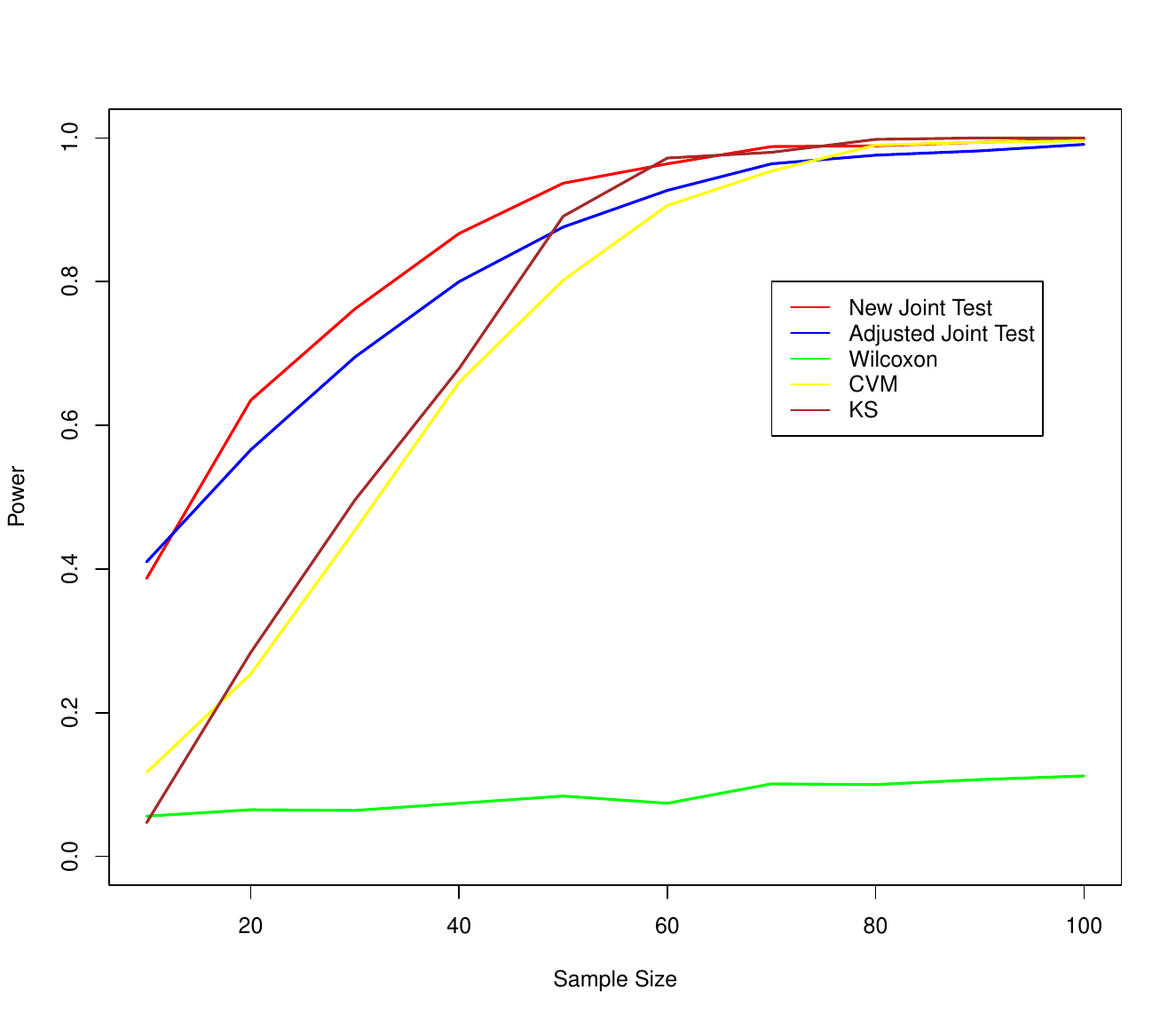}
\caption{Power for $N(1,2)$  vs. $\chi_1^2$ }  \label{pow:6}
\end{subfigure}
\caption{Empirical power of different nonparametric tests, for different pairs of underlying  distributions} \label{power} 
\end{figure}

Results from an empirical power comparison with different underlying distributions can be seen in Figure \ref{power}. 
Figures \ref{pow:1} and \ref{pow:2} show situations were the two samples originated from distributions with the same expectation but different variances. 
Here, both new test procedures performed quite well compared to the classical omnibus tests, especially for small sample sizes.
Not surprisingly, the rank sum test (Wilcoxon, Mann, Whitney) which does not include pure scale changes in its consistency region, was not able to detect such alternatives. 
Among the remaining tests, the Kolmogorov-Smirnov test which already showed conservative behavior in the type I error simulation was also the most conservative with regard to power.

In the setting displayed in Figure \ref{power-n21}, both underlying distributions have the same variance, but different location and shape. Here, the new joint test performed quite similar as the classical tests.
In this situation, the procedure based on the adjusted estimators led to lower power. This may be due to bias induced while ``simplifying'' the statistical functionals. 
Therefore, we only recommend the ``adjusted joint test'' if the "new joint test'' is not applicable, which is the case when the assumption of absolutely continuous distributions is not tenable. 
In all other cases, both newly proposed tests, with and without adjustment, yielded satisfactory power.  The new joint (unadjusted) testing method appears advisable in every simulated situation.
In Figures \ref{pow:5} and \ref{pow:6}, we used distributions with the same expectation and the same variance, but different shapes. The new joint test again performed rather well in comparison with the other test procedures. For small and moderate samples, it even showed the best performance of all tests considered.

Figure \ref{fig:power_norm-sig} adds results from a simulation addressing the small sample situation (sample size $10$). 
Here, the two-sample comparison involved one sample from a standard normal distribution, and another sample from an $N(0, \sigma^2)$ distribution with $\sigma^2$ ranging between $1$ and $10$. 
Thus, only a scale difference is present in this situation. 
We had already noted earlier that the 
``adjusted joint test" was too liberal for very small sample sizes. 
In the figure, its simulated performance is only shown for completeness of the comparison.
In this small sample situation, 
the simulated power of our newly proposed (unadjusted) joint test exceeded that of the classical omnibus tests substantially.
The rank sum test is not consistent for this alternative.

\begin{figure}
\centering
    \includegraphics[width=0.65\linewidth]{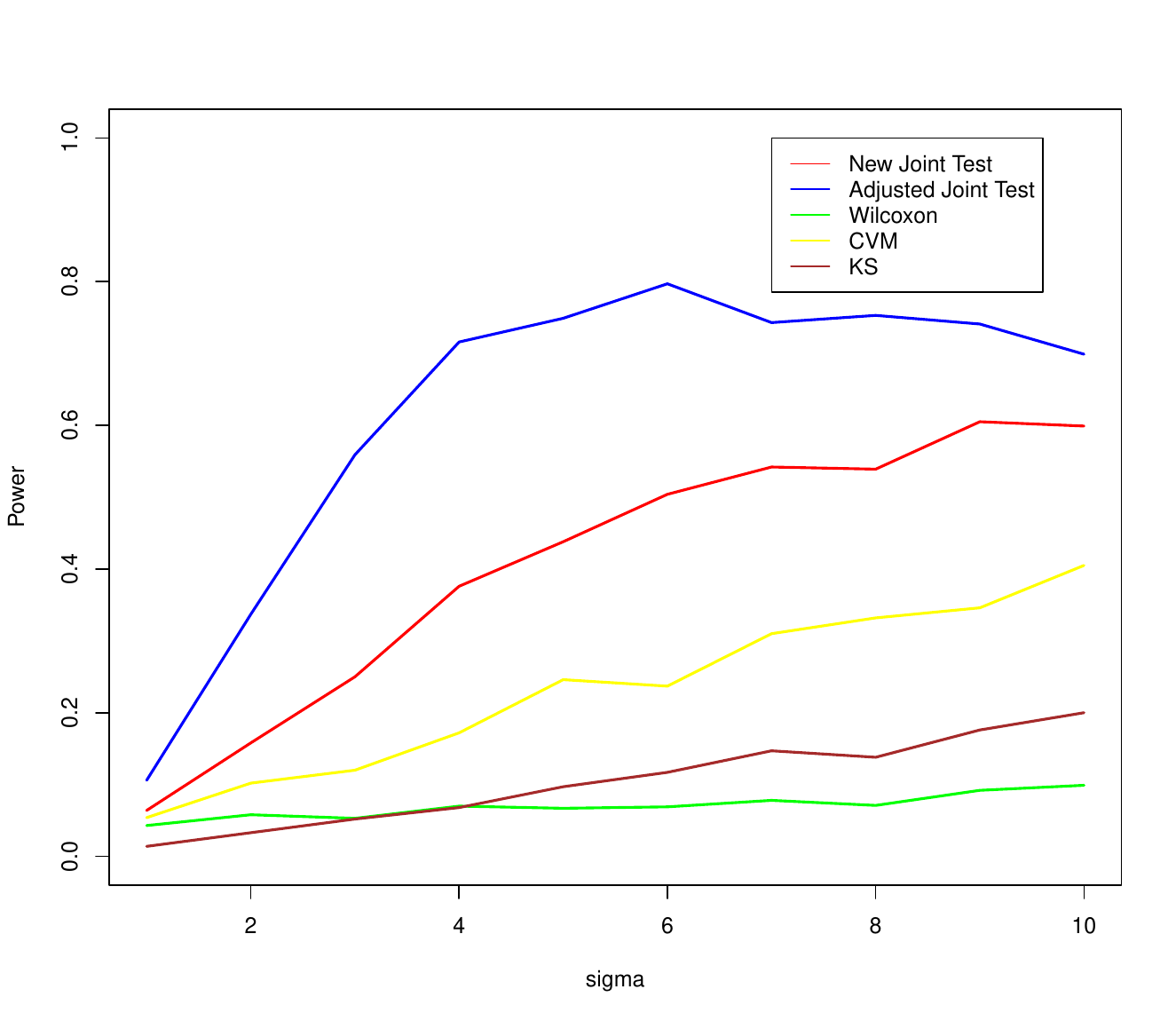}
    \caption{Power of the Tests for $N(0, 1)$ vs. $N(0, \sigma)$ }
    \label{fig:power_norm-sig}
\end{figure}

Of course the simulation results presented here are only a short excerpt from all different settings of underlying distributions that we simulated. But we believe these few examples give an overview on the advantages and disadvantages of our proposed testing methods. 

\subsection{Simultaneous Confidence Intervals} \label{sec_sim_ci}
In the following, we evaluate the performance of the simultaneous confidence intervals mentioned in Section \ref{secresamp}. 

\begin{table}[H]
    \centering
    \begin{tabular}{|c|c|c|c|c|c|c|c|c|c|c|}
  $X,Y$ / $n=m$  & 10 & 20 & 30 & 40 & 50 & 60 & 70 & 80 & 90 & 100 \\
  \hline
   I  & & & 0,953 & 0,956 & 0,958 & 0,961 & 0,965 & 0,9635 & 0,951 & 0,966 \\
   \hline
   II  & &  0,962 & 0,955 & 0,968 & 0,971 & 0,969 & 0,968 & 0,983 & 0,967 & 0,971  \\  
   \hline
  % III  &  & & & 0,929 &  0,944 & 0,936 & 0,941 & 0,938 & 0,930 & 0,937   \\
  % \hline
 %  IV  &    \\
 %  \hline
   III  &   &  0,974 & 0,966 & 0,971 & 0,969 & 0,969 & 0,971 & 0,963 & 0,969 & 0,970 \\
   \hline
   IV  &  & & 0,939 &0,948 & 0,952 & 0,961 & 0,961 & 0,960 & 0,961 & 0,954
    \end{tabular}
    \caption{Coverage Probability for Multivariate Normal CI}
\end{table}
\begin{table}[h]
    \centering
    \begin{tabular}{|c|c|c|c|c|c|c|c|c|c|c|}
  $X,Y$ / $n=m$   & 10 & 20 & 30 & 40 & 50 & 60 & 70 & 80 & 90 & 100 \\
  \hline
   I  & 0,950 & 0,961 & 0,971 & 0,968 & 0,971 & 0,969 & 0,975 & 0,971 & 0,963 & 0,974
  \\
   \hline
   II  &  0,941 & 0,951 & 0,945 & 0,958 & 0,964 & 0,958 & 0,955 & 0,970 & 0,959 & 0,957
  \\  
  % \hline
 %  III  &  0,889 & 0,944 & 0,965 & 0,961 & 0,967 & 0,967 & 0,959 & 0,973 & 0,961 & 0,964 \\
%   \hline
 %  IV  &  0,474 & 0,668 & 0,770 & 0,799 & 0,832 & 0,874 & 0,885 & 0,897 & 0,905 & 0,909    \\
   \hline
   III  &  0,961 & 0,978 & 0,975 & 0,978 & 0,975 & 0,972 & 0,973 & 0,967 & 0,973 & 0,974
     \\
   \hline
   IV  &   0,927 & 0,957 & 0,962 & 0,964 & 0,963 & 0,972 & 0,972 & 0,969 & 0,969 & 0,960
    \end{tabular}
    \caption{Coverage Probability for Empirical Bootstrap CI}
\end{table}
\begin{table}[ht]
    \centering
    \begin{tabular}{|c|c|c|c|c|c|c|c|c|c|c|}
 $X,Y$ /$n=m$   & 10 & 20 & 30 & 40 & 50 & 60 & 70 & 80 & 90 & 100 \\
   \hline
   I  & & & 0,223 & 0,192 & 0,172 & 0,157 & 0,145 & 0,136 & 0,128 & 0,122
  \\
   \hline
   II  &  &  0,375 & 0,301 & 0,256 & 0,226 & 0,202 & 0,187 & 0,170 & 0,160 & 0,150 \\
  % \hline
  % III  &  & & & 0,257 & 0,198 & 0,183 & 0,169 & 0,157 & 0,149 & 0,141  \\
 %  \hline
 %  IV  &    \\
   \hline
  III  &   &   0,321 & 0,259 & 0,224 & 0,199 & 0,182 & 0,168 & 0,157 & 0,148 & 0,141  \\
   \hline
   IV  &   & & 0,217 & 0,189 & 0,168 & 0,154 & 0,143 & 0,134 & 0,126 & 0,119
    \end{tabular}
    \caption{Interval Length for Multivariate Normal CI}
\end{table}
\begin{table}[ht]
    \centering
    \begin{tabular}{|c|c|c|c|c|c|c|c|c|c|c|}
  $X,Y$ / $n=m$   & 10 & 20 & 30 & 40 & 50 & 60 & 70 & 80 & 90 & 100 \\
  \hline
   I  &   0,392  & 0,279 & 0,228 & 0,197 & 0,177 & 0,161 & 0,150 & 0,140 & 0,132 & 0,126
  \\
   \hline
   II  &   0,373 & 0,301 & 0,258 & 0,228 & 0,205 & 0,190 & 0,174 & 0,164 & 0,154
  \\ 
 %  \hline
  % III  &  0,417 & 0,333 & 0,283 & 0,248 & 0,222 & 0,206 & 0,190 & 0,177 & 0,168 & 0,159  \\
 %  \hline
%   IV  &   0,148 & 0,107 & 0,095 & 0,083 & 0,077 & 0,074 & 0,071 & 0,066 & 0,063 & 0,061  \\
   \hline
   III  &  0,458 & 0,319 & 0,259 & 0,224 & 0,199 & 0,182 & 0,168 & 0,157 & 0,148 & 0,141
   \\
   \hline
   IV  &   0,365 & 0,269 & 0,220 & 0,192 & 0,171 & 0,157 & 0,146 & 0,136 & 0,129 & 0,122
    \end{tabular}
    \caption{Interval Length for Empirical Bootstrap CI}
\end{table}
The ``exact'' values of $\theta$ and $I_2$ for each combination of distributions can be computed by numerical integration. 
They are given as follows, rounded to four decimal places:\\
 I : $X \sim N(0,1)$, $Y \sim N(1,1)$.
Exact values: $\theta = 0.7602$ and $I_2 =0.3645$. \\
II : $X \sim N(0,4)$, $Y \sim U[-0.5,0.5]$. 
Exact values: $\theta = 0.5$ and $I_2 = 0.9008$. \\
% III : $X \sim N(1,1)$, $Y \sim U[-0.5,0.5]$. 
 %Exact values: $\theta = 0.8315$ and $I_2 =  0.3370$. \\
%IV : $X \sim N(2,1)$, $Y \sim U[-0.5,0.5]$. 
%Exact values: $\theta =  0.9727$ and $I_2 =  0.0546$. \\
III : $X \sim N(1,1)$, $Y \sim $ Exp$(1)$. 
Exact values: $\theta =  0.5381 $ and $I_2 =  0.5389$. \\
IV : $X \sim N(2,1)$, $Y \sim $ Exp$(1)$. 
Exact values: $\theta =  0.7895$ and $I_2 =  0.2794$.  \\
For each setting, we simulated $1,000$ data sets, 
and for each data set, we generated $1,000$ bootstrap samples. 
For the calculation of the simulated coverage probability, a simulated confidence region was considered as ``covering`` if it covered both components of the parameter vector.
The length of each simulated confidence region was calculated by the Euclidean distance: 
\begin{align*}
\frac{1}{2} \sqrt{(\theta_{upper}- \theta_{lower})^2+(I_{2,upper}- I_{2,lower})^2}.
\end{align*}
The method based on the asymptotic multivariate normal approximation technically requires some overlap between the support of the distributions. However, for small sample size settings, perfectly separated samples may still occur. Thus, for this method, the algorithm stops if perfect sample separation in the simulated data occurs. A similar situation can occur in hypothesis testing where we switched from asymptotic to exact tests in these cases. This would also be our recommendation for such situations.

All methods performed quite similarly. The desired confidence level of $95 \%$ was reached even for small sample sizes. The interval length decreased fast for increasing sample sizes. For further simulation studies including even more methods, we refer to Section \ref{add_sim_ci} 
in the Appendix.

\section{Case Study}
In order to illustrate application of the proposed methods, we used data from the Paracelsus 10,000 Study. This is a prospective cohort study with more than 10,000 participants in the city of Salzburg and nearby regions. 

All participants were between 40 and 75 years old at the baseline examination. More details on the study can be found in~\cite{frey2023paracelsus}. 
We chose the variables height, weight, heart rate, systolic blood pressure, serum iron, fasting glucose, C-reactive protein (CRP, a marker of inflammation), hours of sport per week, and Beck's Depression Inventory II (BDI-II) for an illustrative analysis, respectively comparing the values for men and women. 
For 8,729 participants (4,485 women and 4,244 men), data on all mentioned variables were available. 
All variables except the BDI-II, which is ordinal, were measured at a ratio scale. 
However ties were still frequent, either due to the precision of the measurement instrument, 
or because some values were more common (e.g., the value 0 in the sports variable). 
Boxplots of the values for men and women can be seen in Figure~\ref{fig:example_boxplots}. Joint confidence regions using a multivariate normal approximation with bootstrapped covariance, as described in Section~\ref{sec:sim_conf_int} can be seen in Figure~\ref{fig:example_confidence_intervals}. The distributions of male vs.~female participants differ significantly via $\theta$ and/or $I_2$ in each variable. Note however that for CRP and Sport the difference is only detectable by the overlap index, with the values of males having a smaller spread than the values of females for CRP and a higher spread for Sport.

\begin{figure} [ht!]
    \centering
    \includegraphics[width =0.65 \textwidth]{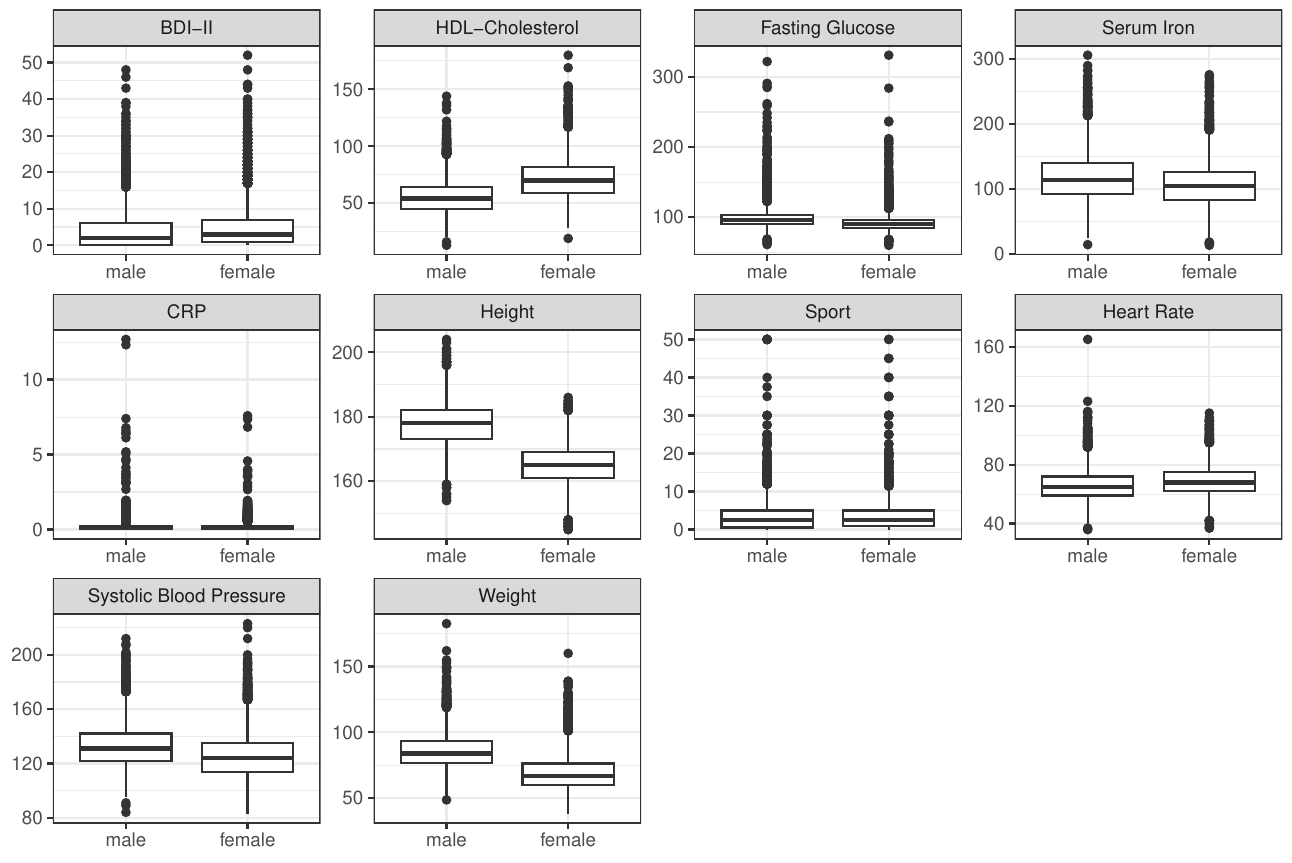}
    \caption{Boxplots for the chosen variables in men and women participating in the Paracelsus 10.000 study.}
    \label{fig:example_boxplots}
\end{figure}

\begin{figure}[ht!]
    \centering
    \includegraphics[width = 0.9\textwidth]{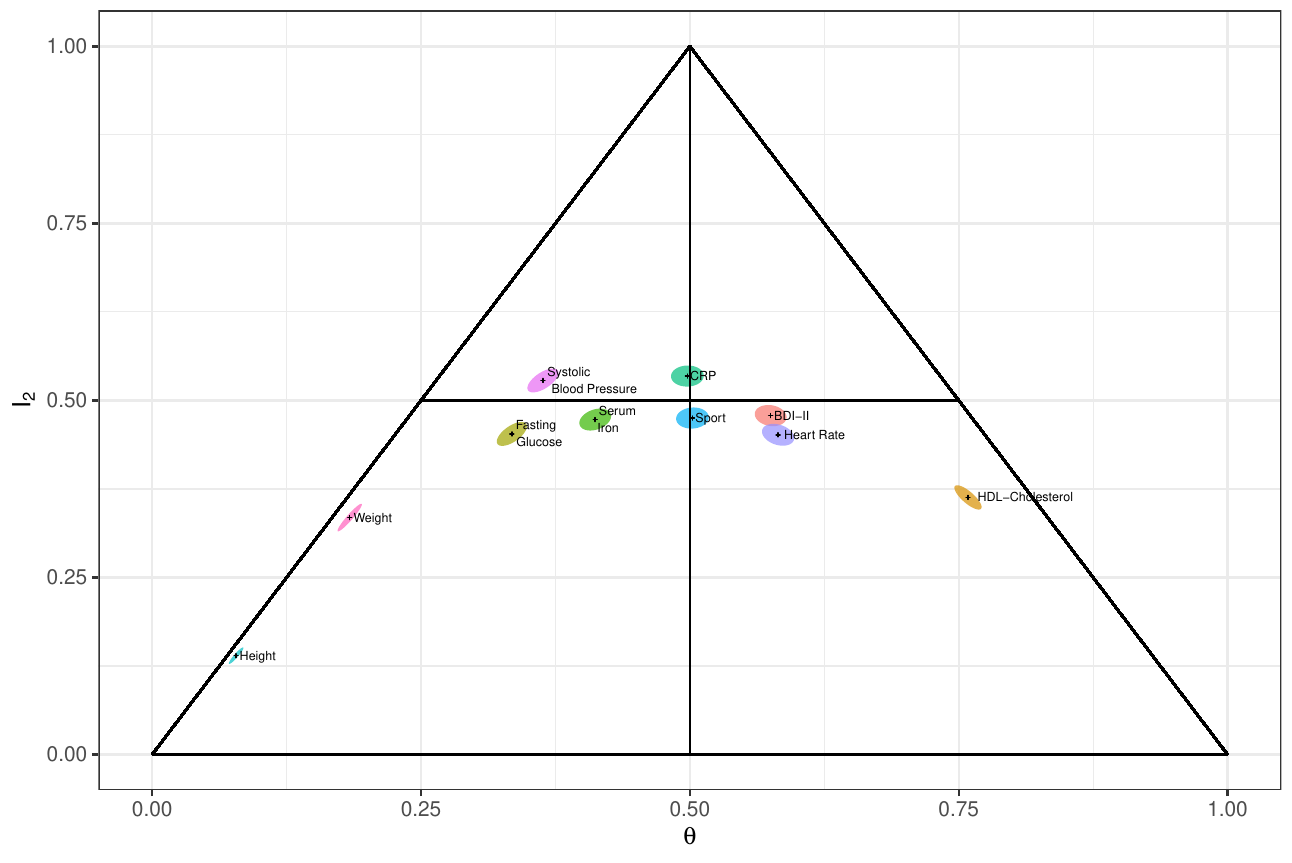}
    \caption{$\theta$ and $I_2$ for men vs women for each variable. Point estimates are shown as crosses and joint confidence regions as coloured ellipses. The possible joint region for $(\theta, I_2)$ and the values of $0.5$ for both functionals are represented by black lines. Values to the left/right of the vertical line indicate larger/smaller values for men vs.~women. Values above/below the horizontal line indicate that values of the male sample have larger/smaller than $0.5$ probability of being contained within one value above and one value below the median of the female sample.}
    \label{fig:example_confidence_intervals}
\end{figure}

\section{Discussion}
In this manuscript, we have proposed a new inferential approach for nonparametric comparison of two different distributions 
(or two independent samples)
that is capable of substantially expanding the consistency set of classical nonparametric procedures, 
without sacrificing effectiveness in those situations where the classical procedures can be applied. 
Our approach is based on combining different 
nonparametric functionals, 
deriving appropriate estimators and their  asymptotic joint (multivariate) distribution, 
and devising practically useful finite sample approximations which we have evaluated in a simulation study.

The functionals utilized in our combination approach are the classical Mann-Whitney parameter (nonparametric relative effect) 
and the recently introduced overlap index which was developed to quantify ecological niches. 
Our new almost omnibus inference framework is indeed fully nonparametric. 
The idea of using a location scale model within a semiparametric statistical context is quite old, 
going back to \cite{lepage},
and it has gained attention again recently (\cite{marozzi_09}, \cite{marozzi_13} and \cite{SOAVE2015125}). 
However, to the best of our knowledge, 
a fully nonparametric approach to this situation did not exist yet.

We have developed two different testing procedures. The first one (``new joint test'') is based on the theory of empirical processes and the close relation of the presented statistical functionals to the empirical ROC-curve, while the second one (``adjusted joint test'') uses the theory of multiple contrast tests. 
In our simulations, the first method was almost uniformly superior to the second one. 
Based on these empirical evaluations, we would therefore recommend to use the first method whenever the assumption of continuous distributions can be justified. 

Our proposed method has the additional advantage that simultaneous confidence intervals can be calculated, in line with the requirement that, for example in clinical data analyses, confidence intervals should be provided wherever possible. 
The major added benefit of confidence intervals as compared to hypothesis tests is that users can see if there is a difference in location or in dispersion between the distributions of interest. We have illustrated this in the case study based on a large epidemiological cohort. 

As our almost omnibus approach is new, there are still several follow-up research questions. 
First, an extension to more than two samples would be useful. However in the case of three or more independent samples, it is well known \citep{THANGAVELU2007720} that there is the possibility of intransitivity regarding pairwise nonparametric effect measures. 
Tackling this issue will therefore be a fundamental task in such an extension. 
%Second it is not clear if the split of the sample at the median is optimal. For some research questions it could be useful to split the sample at a different quantile. 
%or divide the sample at a sequence of quantiles. %Developing an asymptotic theory for this case and investigate the optimality of this different approaches will be an interesting research direction. 
Finally, the new approach is currently restricted to univariate samples. 
Up to now, no satisfactory extensions to a multivariate relative effect or a multivariate overlap index exist. 
Some rather new contributions try to use a component-wise analysis (\cite{kong_20} and \cite{kong22}) for the relative effect, and different ideas have been proposed in order to quantify multivariate niche overlap (\cite{geange}, \cite{blonder_14} and \cite{swanson_15}). 
The possible link between multivariate  overlap of niches and concepts of data depth (\cite{liu_99}, \cite{chernozhukov}) and multivariate ranks (\cite{hallin_21} and \cite{deb_sen}) needs to be explored further, as well.

\appendix

\section{Motivating Example}
As mentioned in the main part of the manuscript, there is to the best of our knowledge no purely nonparametric method for comparing two independent samples with a good small and medium sample size performance and the capability to detect general differences between distributions. 
In order to illustrate this, we present an example from \cite{oleynik}. They suggested that research studies in education science should not only be concerned with location effects, but also with possible differences in the variability. 
As an illustration, consider a teacher who pays more attention to students having difficulties with the topic. 
By doing so, the teacher may reduce the variability of the results. 

%Of course the other way around is also possible. 
The following data show achievement scores on an examination (10  highest possible score) given to two classes of ten students each. 
%The first class was taught by a teacher with a masters degree and the second one by a teacher who only has a bachelor degree: 
The scores for class 1 are:  $7,4,4,5,4,6,6,4,3,7$ and for class 2: $3,6,7,9,3,2,4,8,2,6 $.

\begin{figure}[ht!]
\begin{subfigure}{0.5\textwidth}
\includegraphics[width=\linewidth]{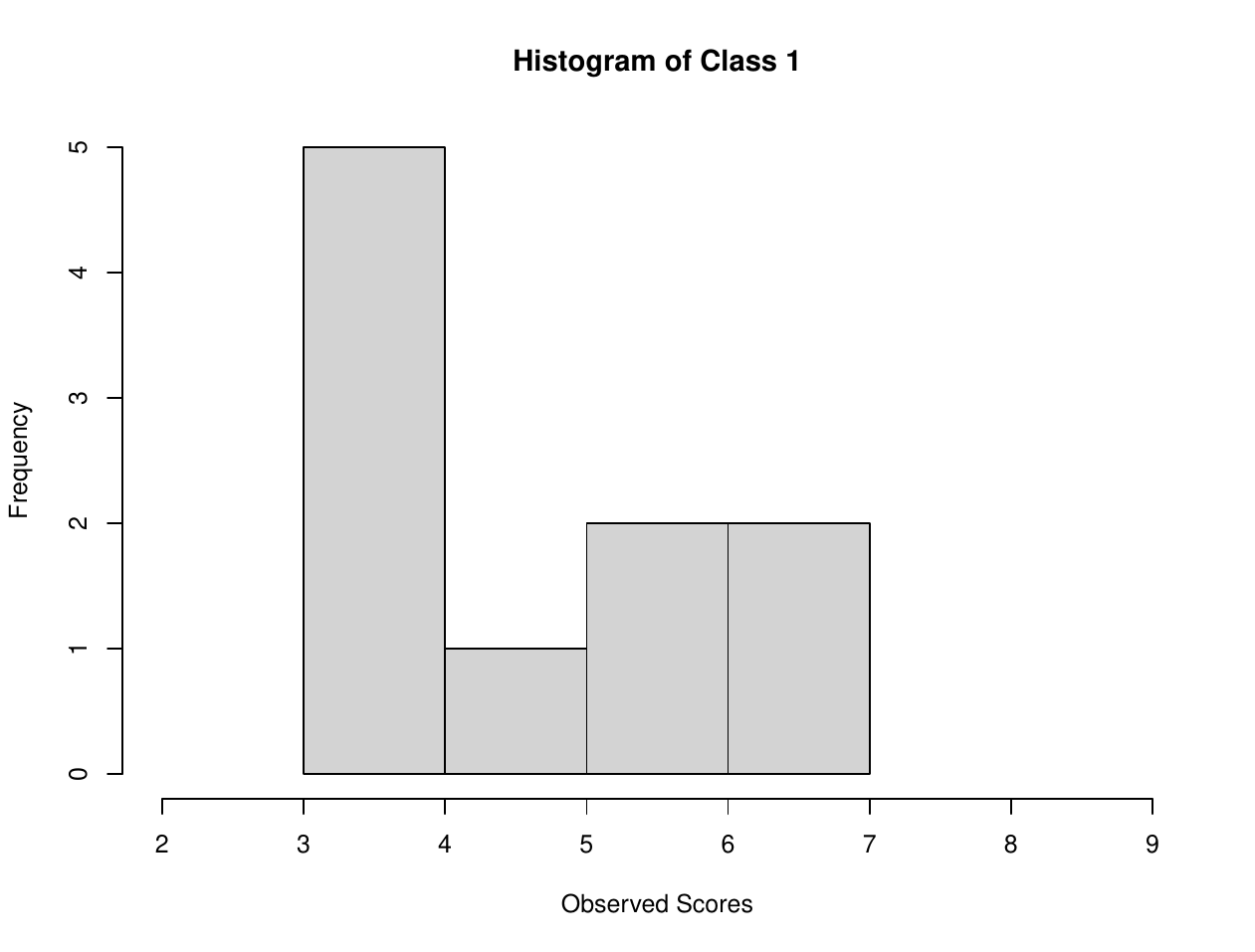}
%\caption{Histogram of the data for the first class} 
\end{subfigure}\hspace*{\fill}
\begin{subfigure}{0.5\textwidth}
\includegraphics[width=\linewidth]{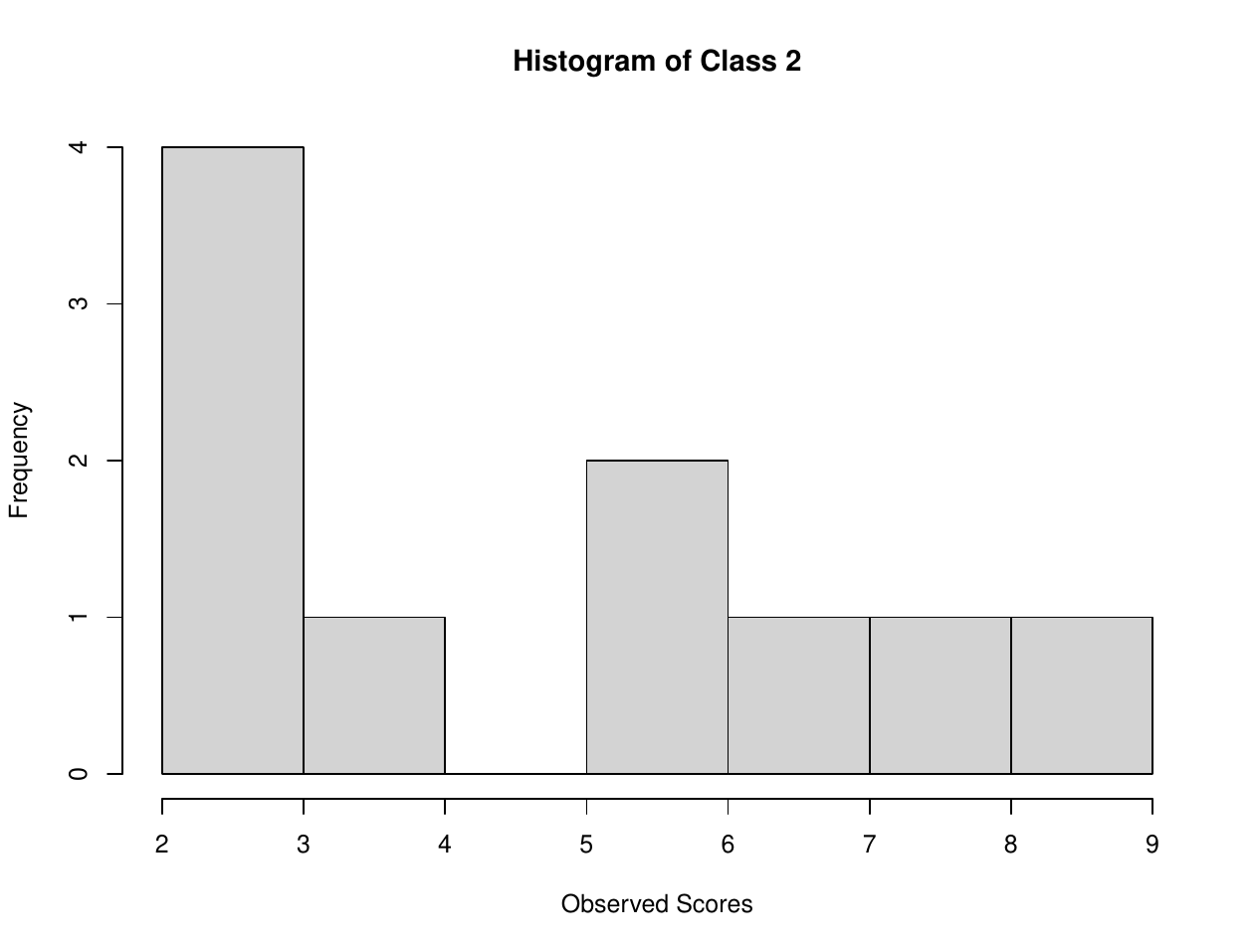}
%\caption{Histogram of the data for the second class} 
\end{subfigure}
\caption{Histogram of the data} \label{data_example}
\end{figure}

The low power of the Kolmogorov-Smirnov test in small-sample situations is reflected by a large p-value of $p=0.62$ in this example, while the WMW rank sum test is not consistent against scale differences and in this case yields an even larger p-value ($p=0.85$). 
Our newly proposed test is capable of recognizing the difference that is present in the variability, which can be seen easily in Figure \ref{data_example}, it results in a much smaller p-value of $p=0.015$.  
The complete study consists of $10$ different data sets, with differences in location and variability.
Thus, only a test which can detect both alternatives is useful here. A difference in each of the two functionals is of practical relevance for the study.

Our newly introduced method has the major advantage that it can detect alternatives in the variability, while still being consistent for all the alternatives that the classical WMW test can detect (location/stochastic tendency), with only a minor loss in power. Therefore, the pair $ (\theta, I_2)$ can be seen as a natural extension of the Mann-Whitney parameter $\theta$ to quantify simultaneously the stochastic tendency and the dispersion of two distributions. Additionally, the procedure is completely nonparametric, thus not relying on specific parametric or semiparametric model formulations. 
\section{Visualization of Functionals}
In this example we evaluate the previously introduced parameters numerically for $X \sim N(0,1)$ and $Y \sim N( \mu, \sigma^2)$. In the following plots we let 
$\mu \in [-5,5]$, and $\sigma \in [0.01, 5] $ vary. The black line shows, respectively, where each of the two parameters equals $1/2$ and thus could not separate the two distributions when considered alone. 
Figure \ref{Re_normal} displays the relative effect, while Figure \ref{I2_normal} and Figure \ref{I1_normal} show the overlap indices $I_1$ and $I_2$.
\begin{figure}[ht!]
    \centering
    \includegraphics[width = 0.7\textwidth]{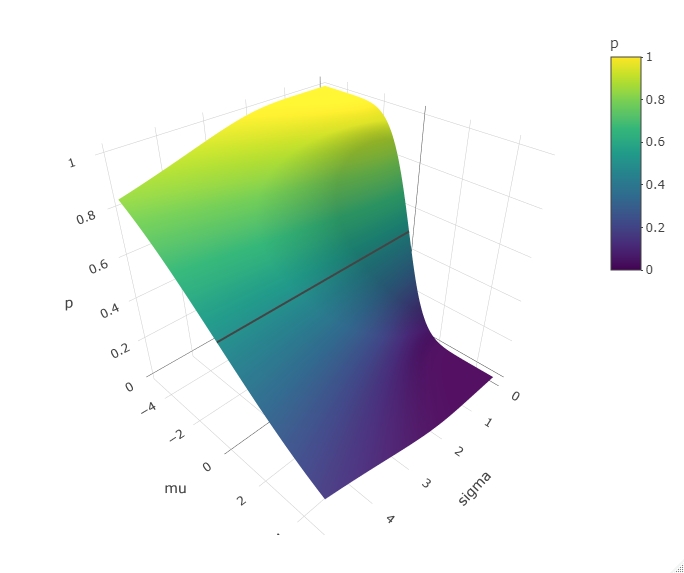}
    \caption{Relative Effect $\theta$ for normally distributed random variables $N(\mu, \sigma^2)$ against a standard normal variable}
    \label{Re_normal}
\end{figure}
\begin{figure}[ht!]
    \centering
    \includegraphics[width = 0.7\textwidth]{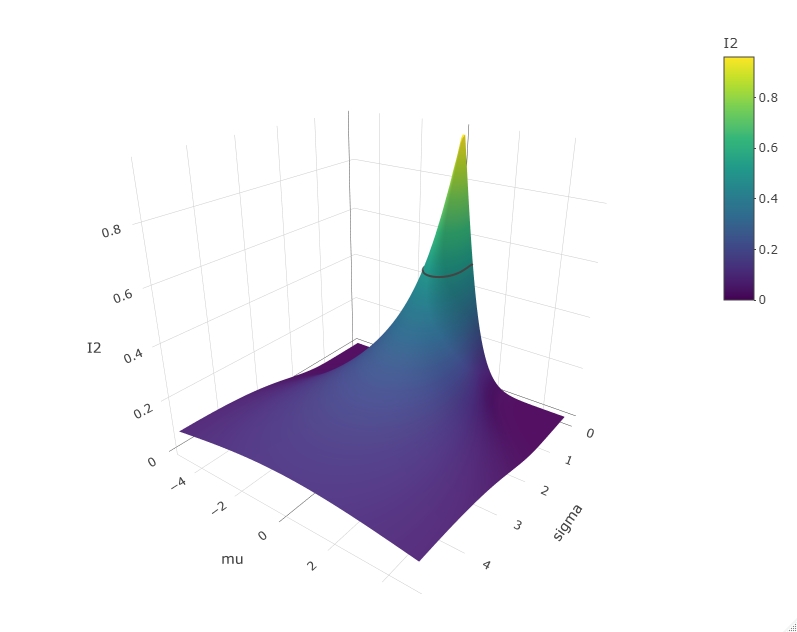}
    \caption{Overlap Index $I_2$ for normally distributed random variables $N(\mu, \sigma^2)$ against a standard normal variable}
    \label{I2_normal}
\end{figure}
\begin{figure}[ht!]
    \centering
    \includegraphics[width = 0.7\textwidth]{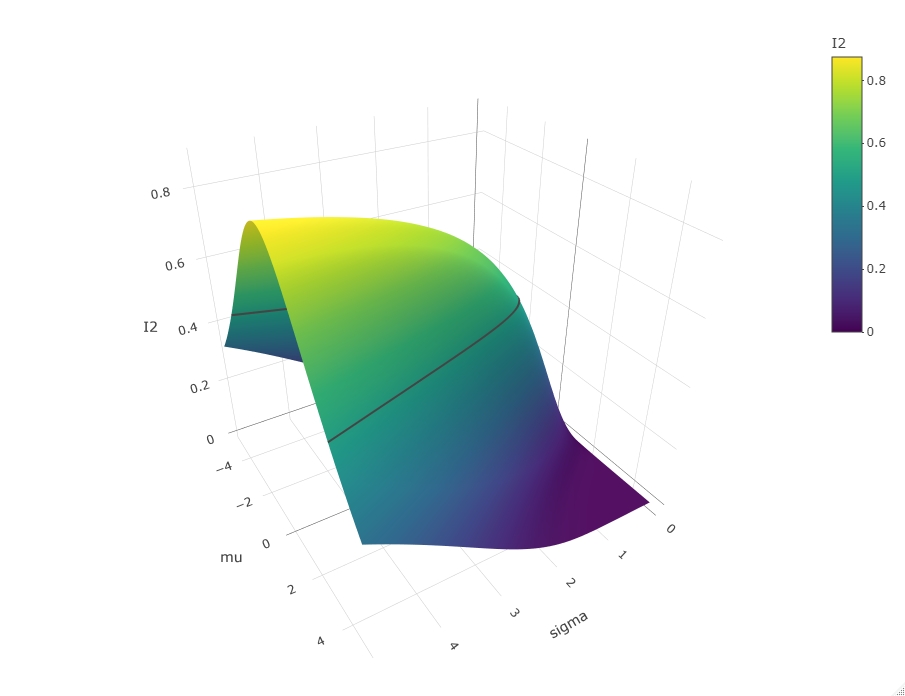}
    \caption{Overlap Index $I_1$ for normally distributed random variables $N(\mu, \sigma^2)$ against a standard normal variable}
    \label{I1_normal}
\end{figure}

\section{Alternative Methods for Constructing Simultaneous Confidence Intervals} \label{alt_ci}
There exist several different options for constructing bootstrap based simultaneous confidence intervals. 
Here, we briefly review a few possible methods: 
\begin{enumerate}
\item {\bf Asymptotically normal with bootstrapped variance (Bonferroni adjusted)} \\
    Both estimators $\hat{\theta}_{mn}$ and $\hat{I}_{2,mn} $ are asymptotically univariate normally distributed. Therefore, we can derive for each of the  estimators an asymptotic confidence interval, and then adjust them with a Bonferroni correction. 
    Here, we estimate both variances by the variance of the bootstrapped estimates. 
    \item {\bf Empirical bootstrap based on Mandel and Betensky } \label{mb} \\
    The Bonferroni adjustment can be quite conservative. Therefore, we consider another approach in order to obtain shorter intervals. 
    A first idea for bootstrapped simultaneous confidence intervals was given by \cite{davison_hinkley_1997} who calculate the coverage of simultaneous confidence intervals by counting the number of bootstrap samples outside of the confidence region and repeat this algorithm for different confidence limits until the desired coverage rate is achieved. This ``trial and error" approach is quite inefficient. Therefore, we consider instead the algorithm introduced by \cite{mandel_betensky}:
    \begin{enumerate}
        \item Generate $B$ bootstrap samples from $\hat{F}$. For each sample calculate the bootstrap estimates $(\Tilde{\theta}_b, \Tilde{I}_{2,b})$.
        \item Order the bootstrap estimates $\Tilde{\theta}_{(1)}, \Tilde{\theta}_{(2)}, \ldots, \Tilde{\theta}_{(B)}$ and $\Tilde{I}_{2,(1)}, \Tilde{I}_{2,(2)}, \ldots, \Tilde{I}_{2,(B)}$. In the case of ties, we use random ranks. We define by $r_{\theta}(b)$ the rank of $\Tilde{\theta}_b$ and analogously by $r_{I_2}(b)$ the rank of $\Tilde{I}_{2,b}$. We then define $r(b)= \max \{ r_{\theta}(b), r_{I_2}(b)\} $.
        \item Calculate $r_{1- \alpha/2}$, the $1-\alpha/2$ percentile of $r(b)$ 
        \item Take the upper limits of the simultaneous confidence interval to be $ \Tilde{\theta}_{(r_{1- \alpha/2})}$ and $\Tilde{I}_{2,(r_{1- \alpha/2})}$
    \end{enumerate}
    The lower limit can be calculated analogously.
    \item {\bf Empirical bootstrap based on Gao, Konietschke and Li } \\
    The previous method was improved by \cite{gao_21}, by sharpening the intervals for both coordinates. Their algorithm can be presented as follows:
    \begin{enumerate}
        \item Repeat the first three steps as in the algorithm in \ref{mb}
        \item We call the collection of bootstrap samples with the maximum sample rank equal or below $r_{1- \alpha/2}$ by $\Phi$, $b \in \Phi$ iff $r(b) \leq r_{1- \alpha/2}$
        \item Order the bootstrap estimates in set $ \Phi$. We denote the new rank of $\Tilde{\theta}_b$ and $ \Tilde{I}_{2,b}$ in set $\Phi$ by $r_{\theta}'(b)$ and $r_{I_2}'(b)$, respectively.  Then we call by $r'(b)= \min \{ r'_{\theta}(b), r'_{I_2}(b)\} $ the smallest rank associated with the $b$th sample and set $\Phi$. Calculate then the $\frac{\alpha}{2-\alpha}$ percentile $r'_{\frac{\alpha}{2-\alpha}}$ of $r'(b)$.
        \item Denote by $\Psi= \{ b \in \Phi | r'(b) \geq r'_{\frac{\alpha}{2-\alpha}} \}$ the collection of bootstrap samples within $\Phi$ with the minimum sample rank above $r'_{\frac{\alpha}{2-\alpha}}$
        \item Calculate $t_{\theta}= \max_{b \in \Psi} r_{\theta}(b)$ and $t_{I_2}= \max_{b \in \Psi} r_{I_2}(b)$. And analogously $w_{\theta}= \min_{b \in \Psi} r_{\theta}(b)$ and $w_{I_2}= \min_{b \in \Psi} r_{I_2}(b)$. The simultaneous confidence interval for $\theta$ and $I_2$ are then given by the upper limits $(\Tilde{\theta}_{t_{\theta}}, \Tilde{I}_{2,t_{I_2}})$ and the lower limits $(\Tilde{\theta}_{w_{\theta}}, \Tilde{I}_{2,w_{I_2}})$
    \end{enumerate}
    \cite{gao_21} proved that this method produces  two-sided confidence intervals which have a simultaneous coverage of $(1-\alpha)$ and are admissible at level $(1-\alpha)$.
    This means that there are no other two-sided simultaneous bootstrap confidence intervals which have $(1-\alpha)$ coverage, but are uniformly shorter than or equal in all coordinates and strictly smaller in at least one.
\end{enumerate}
We will compare all these approaches in a simulation study in Section \ref{add_sim_ci}. 
Note that only the method based on the asymptotic multivariate normal distribution of our quantities (as presented in section \ref{secresamp}) gives us an elliptical confidence region. The other four methods presented here are all based on some adjustment of univariate confidence intervals and have a rectangular structure.

\section{Proofs}
\subsection{Proof of Theorem \ref{thm_image}} \label{proofthm_image}
We will prove the theorem in two steps. First we show that Im$((\theta, I_2)) \subseteq A$ and then that $(\theta, I_2):\: S \times S \longrightarrow A$ is onto.
To show the first part, define
\begin{align*}
\tau_1 &:= \int_{-\infty}^{F^{-1}(\frac{1}{2})} G dF \text{ and}\\
\tau_2 &:= \int_{F^{-1}(\frac{1}{2})}^{\infty} G dF.
\end{align*}
Then
\begin{align*}
\theta &= \tau_1 + \tau_2 \text{ and}\\
I_2 &= 2(\tau_2 - \tau_1) \text{ and thus} \\
I_2 &= 2(\theta - 2\tau_1).
\end{align*}
Since $2\tau_1$ and $2\tau_2$ are both probabilities, it follows, that $\tau_1, \tau_2 \in [0, \frac{1}{2}]$.
It follows immediately that
\begin{equation*}
    I_2 = 2 \theta - 4 \tau_1 \leq 2 \theta
\end{equation*}
and
\begin{equation*}
    0 \leq 2 - 4\tau_2 \Rightarrow 2\tau_2 \leq 2 - 2\tau_2 \Rightarrow 2\tau_2 - 2\tau_1 \leq 2 - 2\tau_2 - 2\tau_1 \Rightarrow I_2 \leq 2 - 2\theta.
\end{equation*}
Thus we have
\begin{align*}
0 \leq I_2 \leq 2\theta \text{ and } &0 \leq I_2 \leq 2 - 2\theta.
\end{align*}
Thus, given a value for $\theta$, the overlap index $I_2$ (and by symmetry also $I_1$) must lie in the interval $[0, \min\{2\theta, 2 - 2\theta\}]$. The region of possible pairs of $(\theta, I_2)$ is displayed in Figure~\ref{fig:theta_I_2_relationship}.
%\begin{figure}
  %  \centering
  %  \includegraphics[width = 0.75\textwidth]{Grenzen_I_plot.pdf}
  %  \caption{The blue shaded region shows the theoretically possible pairs of $(\theta, I_2)$.}
 %   \label{fig:theta_I_2_relationship}
% \end{figure}
This completes the first part of the proof. 

For the second part it remains to be shown that for each pair of points $(\theta, I_2)$ in $A$ there exists a pair of cumulative probability functions $(F, G)$ whose corresponding relative effect and overlap index are $(\theta, I_2)$. Let $F$ be the cdf of a uniform distribution on $[0, 1]$ and $G$ the cdf of a uniform distribution on $[a, b]$ where $a \leq b$. Then, by directly solving the integrals in ~\eqref{def:rte} and~\eqref{def:niche}, we obtain
\begin{align*}
    (b - a) \cdot \theta = \begin{cases}
    b - a & \text{ if } a < 0 \land b < 0\\
    -0.5b^2 + b - a & \text{ if } a < 0 \land b \in \left[0, 1\right]\\
    0.5 - a & \text{ if } a < 0 \land b > 1\\
    0.5a^2 - 0.5b^2 - a + b & \text{ if } a \in [0, 1] \land b \in [0, 1]\\
    0.5a^2 - a + 0.5 & \text{ if } a \in [0, 1] \land b > 1\\
    0 & \text{ if } a > 1 \land b > 1
    \end{cases}
\end{align*}
and
\begin{align*}
    (b - a) \cdot I_2 = \begin{cases}
    0 & \text{ if } a < 0 \land b < 0 \text{ or } a > 1 \land b > 1\\
    b^2 & \text{ if } a < 0 \land b \in [0, 0.5]\\
    -b^2 + 2b - 0.5 & \text{ if } a < 0 \land b \in (0.5, 1]\\
    0.5 & \text{ if } a < 0 \land b > 1\\
    b^2 - a^2 & \text{ if } a \in [0, 0.5] \land b \in [0, 0.5]\\
    -b^2 + 2b - a^2 - 0.5 & \text{ if } a \in [0, 0.5] \land b \in [0.5, 1]\\
    -a^2 + 0.5 & \text{ if } a \in [0, 0.5] \land b > 1\\
    a^2 - b^2 - 2a + 2b & \text{ if } a \in (0.5, 1] \land b \in (0.5, 1]\\
    a^2 - 2a + 1 & \text{ if } a \in (0.5, 1] \land b > 1
    \end{cases}
\end{align*}

\begin{figure}
    \centering
    \includegraphics[width = 0.75\textwidth]{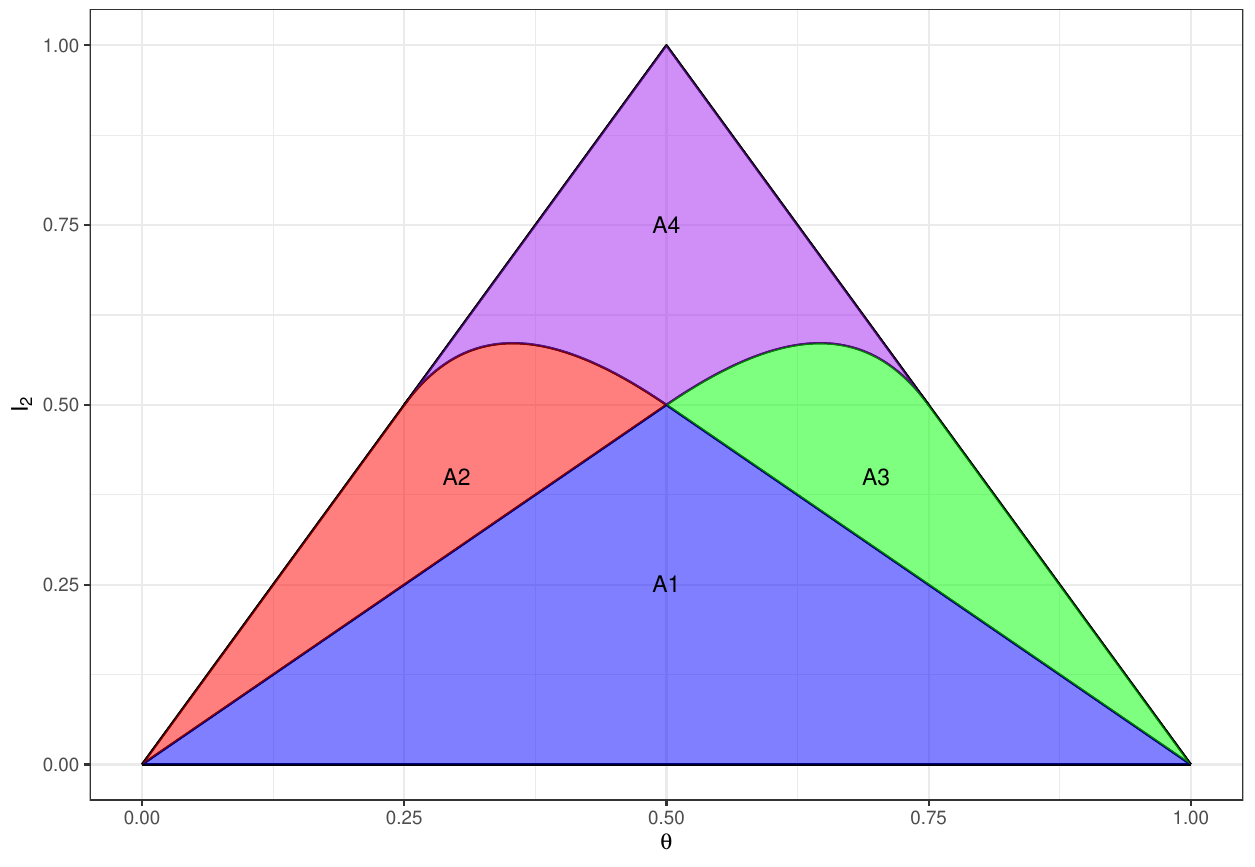}
    \caption{The possible pairs of $(\theta, I_2)$ can be divided into four subsets.}
    \label{fig:triangle_regions}
\end{figure}

We will split $A$ into four subareas, termed $A_1, A_2, A_3, A_4$ and illustrated in Figure~\ref{fig:triangle_regions}. For each of these areas we demonstrate here how to construct a pair of probability measures of the type described above for every point in the area.

$A_1$ is analytically defined as $\{(\theta, I_2) | \theta \in (0, 1), I_2 \in (0, \min\{\theta, 1 - \theta\})\}$. $A_1$ is covered by the case where $a < 0$ and $b > 1$. It follows from the case distinctions above, that in this case we have $I_2 = \frac{1}{1 - 2a} \theta$. That is, for fixed $a$, $I_2$ is a linear function of $\theta$. $\frac{1}{1 - 2a}$ is continuous in $a$ and ranges from asymptotically $1$ when $a \rightarrow 0$ to $0$ asymptotically as $a \rightarrow -\infty$. Given a certain value for $a$, the minimal $\theta$ and $I_2$ that can be achieved is $0$ asymptotically as $b \rightarrow \infty$. The maximal value is achieved asymptotically by letting $b \rightarrow 1$ and is $\frac{0.5 - a}{1 - a}$ for $\theta$ and $\frac{1}{2 - 2a}$ for $I_2$. Between these values, both $\theta$ and $I_2$ are strictly decreasing and continuous in $b$. Therefore, in the case where $a < 0$ and $b > 1$, we can create any pair $(\theta, I_2)$ in $A_1$ by choosing an incline via fixing $a$ and then choosing the distance from $(0, 0)$ by fixing $b$.

We will carry out the proof for $A_2$ in a similar manner. Here we use the case where $a \in [0, 0.5]$ and $b > 1$. This gives $I_2 = \frac{-a^2 + 0.5}{0.5a^2 - a + 0.5} \theta$, so $I_2$ is again a linear function of $\theta$ once $a$ is fixed and the incline is continuous in $a$ again. 
When $a = 0$ the incline is $1$, and when $a = 0.5$ the incline is $2$. 
For any $a$, we approach $(\theta, I_2) = (0, 0)$ as $b \rightarrow \infty$ and $I_2 = \frac{0.5 - a^2}{1 - a}$ and $\theta = \frac{0.5 a^2 - a + 0.5}{1 - a}$ as $b \rightarrow 1$. Solving the previous equation of $\theta$ for $a$ and inserting into the equation for $I_2$ gives $I_2 = \frac{-4\theta^2 + 4\theta - 0.5}{2\theta}$ as the maximal value for $I_2$ dependent on $\theta$. These considerations identify $A_2$ as $\{(\theta, I_2) | \theta \in [0, 0.5], I_2 \in [\theta, 2\theta] \text{ when } \theta < 0.25 \text{ and } I_2 \in [\theta, \frac{-4\theta^2 + 4\theta - 0.5}{2\theta} \text{ when } \theta > 0.25\}$ and prove that for every point $(\theta, I_2) \in A_2$ there exists a pair $(a, b) \in [0, 0.5] \times (1, \infty)$ such that $F$ and $G$ show the desired pair of relative effect and overlap index.

The proof for $A_3$ is very similar to $A_2$ but instead of fixing $a$ to get a certain incline, one fixes $b$ and uses the cases where $a < 0$ and $b \in [0.5, 1]$.

Lastly we will look at
\begin{align*}
    A_4 = A_{4, 1} \cup A_{4, 2} &= \{(\theta, I_2) | \theta \in [0.25, 0.5], I_2 \in [(-4\theta^2 + 4\theta - 0.5)/ 2\theta, 2 \theta]\} \cup\\
    &\{(\theta, I_2) | \theta \in [0.5, 0.75], I_2 \in [(-4\theta^2 + 4\theta - 0.5)/(2 - 2\theta), 2 - 2\theta]\}.
\end{align*}
Here we will use the cases where $a \in [0, 0.5]$ and $b \in [0.5, 1]$. When fixing either $a$ or $b$, $I_2$ is not a linear function of $\theta$. However, one can derive the inverse function of $(a, b) \mapsto (\theta, I_2)$ which is
\begin{align*}
    a &= \frac{-2\theta + I_2 + 1 \pm \sqrt{4\theta(1 - \theta) + I_2(I_2 -2 )}}{2} \\
    b &= 2 - a - 2\theta = \frac{-2\theta - I_2 + 3 \pm \sqrt{4\theta(1 - \theta) + I_2(I_2 -2 )}}{2}
\end{align*}
Since this is only a valid inverse function for $a \in [0, 0.5]$ and $b \in [0.5, 1]$, it remains to be shown that for all $(\theta, I_2) \in A_4$, the corresponding $a$ and $b$ are actually in those intervals. We will prove this for $A_{4, 1}$, the proof for $A_{4, 2}$ is similar. For this we use for $a$ the formula
\begin{align*}
    a &= \frac{-2\theta + I_2 + 1 - \sqrt{4\theta(1 - \theta) + I_2(I_2 -2 )}}{2}
\end{align*}
Since $I_2 \leq 1$, $a$ is non-decreasing in $I_2$. 
Moreover, $I_2$ lies between $(-4\theta^2 + 4 \theta - 0.5)/2\theta$ and $2\theta$. 
Substituting the lower limit of $I_2$ into the formula for $a$ yields $a = 1 - 2\theta$ as the lower limit for $a$. This must lie between $0$ and $0.5$ since $\theta$ is between $0.25$ and $0.5$. 
Substituting $2\theta$ for $I_2$ gives $0.5$ as the upper limit for $a$. Similarly, $b$ is non-increasing in $I_2$ and must lie between $0.5$ and $1$ when plugging in $2\theta$ for $I_2$ and is equal to $1$ when plugging in the lower limit of $I_2$.

\subsection{Proof of Theorem \ref{thm_asym}  } \label{proofthm1}
To prove the result of the theorem, we us the Cramér-Wold device (e.g., \cite{vaart_1998}, p.16), proving the multivariate normality of the random vector in \eqref{asymp_vec} by proving the equivalent statement that all linear combinations of the vector are univariate normal. \\
To this end, let $\mathbb{D}=D[0,1]$ be the Skorokhod space (the space of all càdlàg functions on $[0,1]$) and $\mathbb{E}=\mathbb{R}$. We define the map
\begin{align} \label{phi}
\phi: \mathbb{D} \rightarrow \mathbb{E}, \phi(f)= \lambda_1 \Biggl( \int_0^1 f \biggl(1- \frac{\alpha}{2} \biggl )d \alpha- \int_0^1 f \biggl(\frac{\alpha}{2} \biggl) d \alpha \Biggl ) + \lambda_2 \Biggl ( \int_0^1 f(\alpha) d \alpha \Biggl ),
\end{align}
where $\lambda_1, \lambda_2 \in \mathbb{R}$. It is clear that $\phi$ is a continuous linear map.
Of course in general $\hat{G}_m \circ \hat{F}_n^{-1}$ is not a càdlàg function, but as a monotonically decreasing function, $\hat{G}_m \circ \hat{F}_n^{-1}$ can be transformed into a càdlàg function with identical values of $ \phi$. 
For notational simplicity we still write $\hat{G}_m \circ \hat{F}_n^{-1}$ instead of their transformation.  \\ 
\cite{hsieh_turnbull} showed (Theorem 2.2) that under the conditions mentioned above there exists a probability space on which one can define sequences of two independent Brownian bridges $ B_1^n(t), B_2^n(t), t \in [0,1]$ such that
\begin{align} \label{thm_22}
 \sqrt{n}( \hat{G}_m(\hat{F}_n^{-1}(t))- G(F^{-1}(t))) = & \sqrt{\nu} \ B_1^n(G(F^{-1}(t))) + \frac{g(F^{-1}(t))}{f(F^{-1}(t))}   B_2^n(t) \\
 &+ o(n^{-1/2}( \log n)^2) \notag
\end{align}
almost surely, uniformly on $[a,b]$. As $a$ and $b$ were chosen arbitrarily, this is true for every subinterval of $[0,1]$. For simplicity we omit in the following the superscript $n$, but we will keep in mind that $B_1$ and $B_2$ still depend on the sample size $n$.\\
 As $\phi$ is a linear continuous map  we can apply the continuous mapping theorem, the delta method for empirical processes (Theorem 3.9.4 in \cite{van1996weak}) and Lemma 3.9.8 in \cite{van1996weak} on $ \phi (\sqrt{n}( \hat{G}_m(\hat{F}_n^{-1}(t))- G(F^{-1}(t))))$ and \eqref{thm_22}. This gives us the normality of all linear combinations of the components of \eqref{asymp_vec}, proving the asymptotic normality of the vector itself. \\
 It is clear that the expectation of the first component of the vector \eqref{asymp_vec} is $0$, the result for the second component follows due to similar arguments as in the proof of Theorem 2.13 in \cite{parkinson2018fast}. \\
 The variances can be calculated as
 \begin{align*}
 \sigma_{\theta}&= \text{var} \Biggl[ \sqrt{\nu} \int_0^1 \ B_1(G(F^{-1}(t))) dt + \int_0^1 \frac{g(F^{-1}(t))}{f(F^{-1}(t))}   B_2(t) dt \Biggl ] \\
 &= \nu \text{ var} \Biggl[ \int_0^1 \ B_1(G(F^{-1}(t))) dt \Biggl ] + \text{ var} \Biggl[ \int_0^1 \ B_2(F(G^{-1}(t))) dt \Biggl ] \\
 &=\nu ||G \circ F^{-1}||^*+ ||F \circ G^{-1}||^*
 \end{align*}
 and 
 \begin{align*}
    \sigma_{I_2} &= \nu \text{ var} \Biggl[ \int_0^1 \ B_1(G(F^{-1}(1-t/2)))-\ B_1(G(F^{-1}(t/2))) dt \Biggl ]  \\
   & \  \ \ + \text{ var} \Biggl[ \int_0^1 \frac{g(F^{-1}(1-t/2))}{f(F^{-1}(1-t/2))}   B_2(1-t/2) - \frac{g(F^{-1}(t/2))}{f(F^{-1}(t/2))}   B_2(t/2) dt \Biggl ]        \\ 
    &= \nu \text{ var} \Biggl[ \int_0^1 \ B_1(G(F^{-1}(1-t/2)))-\ B_1(G(F^{-1}(t/2))) dt \Biggl ]  \\
   & \  \ \ + \text{ var} \Biggl[ \int_0^1 \ B_2(F(G^{-1}(1-t/2))) - B_2(F(G^{-1}(t/2))) dt \Biggl ]        \\ 
   &= \nu \Bigl ( ||G \circ F_1^{-1}||^* +||G \circ F_2^{-1}||^* -2   \langle G \circ F_1^{-1}, G \circ F_2^{-1} \rangle ^*   \Bigl ) \\
  & \  \ \  +  ||F \circ G_1^{-1}||^* +||F \circ G_2^{-1}||^* -2   \langle F \circ G_1^{-1}, F \circ G_2^{-1} \rangle ^*.  
 \end{align*}
Incorporating the independence of $B_1$ and $B_2$, we obtain the covariance
\begin{align*}
     \sigma_{\theta,I_2} &=   \nu \text{ cov}  \Biggl[ \int_0^1 \ B_1(G(F^{-1}(t))) dt, \int_0^1 \ B_1(G(F^{-1}(1-t/2)))-\ B_1(G(F^{-1}(t/2))) dt \Biggl ]  \\   
   & \  \ \ +  \text{cov} \Biggl[ \int_0^1 \ \frac{g(F^{-1}(t))}{f(F^{-1}(t))}   B_2(t) dt  ,  \int_0^1 \ \frac{g(F^{-1}(1-t/2))}{f(F^{-1}(1-t/2))}   B_2(1-t/2) -\frac{g(F^{-1}(t/2))}{f(F^{-1}(t/2))}   B_2(t/2) dt   \Biggl ]  \\
     &=   \nu \text{ cov}  \Biggl[ \int_0^1 \ B_1(G(F^{-1}(t))) dt, \int_0^1 \ B_1(G(F^{-1}(1-t/2)))-\ B_1(G(F^{-1}(t/2))) dt \Biggl ]  \\   
   & \  \ \ +  \text{cov} \Biggl[ \int_0^1 \ B_2(F(G^{-1}(t))) dt  ,  \int_0^1 \ B_2(F(G^{-1}(1-t/2))) - B_2(F(G^{-1}(t/2))) dt   \Biggl ]  \\
  &=  \nu \Bigl (  \langle G \circ F^{-1}, G \circ F_2^{-1} \rangle ^* - \langle G \circ F^{-1}, G \circ F_1^{-1} \rangle ^* \Bigl ) \\ 
  & \  \ \ + \langle F \circ G^{-1}, F \circ G_2^{-1} \rangle ^* -\langle F \circ G^{-1}, F \circ G_1^{-1} \rangle ^*,
\end{align*}
where 
\begin{align*}
    ||h||^* &= \int_0^1 h^2(t) dt - \Bigl ( \int_0^1 h(t) dt \Bigl )^2 \\
    \langle h, \Tilde{h} \rangle ^* &= 
    %\int_0^1 \int_0^1 h(t) \Tilde{h}(s) dt ds
    \int_0^1 h(t) \Tilde{h}(t) dt 
    - \Bigl ( \int_0^1 h(t) dt \Bigl ) \Bigl ( \int_0^1 \Tilde{h}(s) ds \Bigl ).
\end{align*}

\subsection{Proof of Theorem \ref{thm_boot}} \label{proodthm2}
Similar to the proof of Theorem \ref{thm_asym} we consider again the convergence of all the linear combinations of the vector \eqref{bootstrap_process} in the main paper instead of the equivalent convergence of the vector itself. Therefore we have to prove that $ \sqrt{n} ( \phi (\hat{G}_m^{*} \circ \hat{F}_n^{* -1} )-\phi (\hat{G}_m \circ \hat{F}_n^{ -1} ) )$ is asymptotically normally distributed with identical limit distribution as $ \sqrt{n} ( \phi (\hat{G}_m \circ \hat{F}_n^{ -1} )-\phi (G \circ F^{ -1} ) )$, where $\phi$ is again defined by:
\begin{align*} 
\phi: \mathbb{D} \rightarrow \mathbb{E}, \phi(f)= \lambda_1 \Biggl( \int_0^1 f \biggl(1- \frac{\alpha}{2} \biggl )d \alpha- \int_0^1 f \biggl(\frac{\alpha}{2} \biggl) d \alpha \Biggl ) + \lambda_2 \Biggl ( \int_0^1 f(\alpha) d \alpha \Biggl ),
\end{align*}
with $\lambda_1, \lambda_2 \in \mathbb{R}$. \\
We will prove this by applying the Delta-method for empirical bootstrap processes (e.g., Theorem 3.9.11 in \cite{van1996weak}). To check the assumptions of the theorem we use a similar construction of the function set $ \mathscr{F}$ as in the proof of Theorem 2.16 in \cite{parkinson2018fast}, which is defined as
\begin{align*}
    f_n(X,Y,t,i)= \hat{G}_m(X_i) \mathbbm{1}_{ \{ \hat{F}_n(X_i) \leq t \}} \prod_{j=1, j \neq i}^n (1- \mathbbm{1}_{ \{ \hat{F}_n(X_i) \leq t \}} \mathbbm{1}_{ \{ X_j >X_i  \}} ),
\end{align*}
with $t \in [0,1]$, and $\hat{G}_m(X_i)$ and $\hat{F}_n(X_i)$ the empirical distribution functions evaluated at the random value $X_i$. 
Note that $f_n$ is just the empirical version of $G$ evaluated at the quantile of the empirical version of $F$. As the set of all indicator functions in $\mathbb{R}$ is a Donsker class and the product of two Donsker classes is again a Donsker class (e.g., Section 2 in \cite{van1996weak}), it is clear that the set $ \mathscr{F}$ is also a Donsker class, i.e.
\begin{align*}
\sqrt{n} \Biggl ( ( \frac{1}{n} \sum_{i=1}^n f_n(X,Y,t,i) -G(F^{-1}(t)) \Biggl ) \xrightarrow{w}  \mathbbm{G},
\end{align*}
where $\mathbbm{G}$ is a tight Borel measurable element. \\
Hence we can apply Theorem 3.6.1 in \cite{van1996weak} and get
\begin{align*}
   \sup_{h \in BL} |E [h(\mathbbm{G}_n)]- E [h(\mathbbm{G})] |  \xrightarrow{P^*} 0,
\end{align*}
where $\mathbbm{G}_n = \sqrt{n} ( \hat{G}_m^{*} \circ \hat{F}_n^{* -1} -\hat{G}_m \circ \hat{F}_n^{ -1} )$. Here $ \xrightarrow{P^*}$ means weak convergence in outer probability and $BL$ is the set of all bounded Lipschitz functions on $\mathbb{R}$. \\
Due to the Hadamard-differentiability of $\phi$, we can now apply the Delta-method for empirical bootstrap processes (e.g., Theorem 3.9.11 in \cite{van1996weak}). 
Therefore it holds that $ \sqrt{n} ( \phi (\hat{G}_m^{*} \circ \hat{F}_n^{* -1} )-\phi (\hat{G}_m \circ \hat{F}_n^{ -1} ) )$ is asymptotically normally distributed with identical limit distribution as $ \sqrt{n} ( \phi (\hat{G}_m \circ \hat{F}_n^{ -1} )-\phi (G_m \circ F_n^{ -1} ) )$ . \\
The asymptotically normal distribution of the bootstrap estimator of the vector $ (\theta, I_2)$ follows again due to the Cramér-Wold device.

\subsection{Proof of Proposition \ref{var_est}} \label{proofprop}
First note that under the null hypothesis of distribution equality, the estimators of the two random variables $P(X^{(1)} < Y^{(1)})$ and $P(X^{(2)} < Y^{(2)})$ are independent. Furthermore we have the two independent samples
\begin{align*}
    X_1, ..., X_k \overset{i.i.d.}{\sim} F_1
\end{align*}
and
\begin{align*}
    Y_1, ..., Y_l \overset{i.i.d.}{\sim} G_1.
\end{align*}
Using the results in Section 3 of \cite{brunner2017rank}, we can derive that $\sqrt{k + l} (\int \hat{F}_1 d\hat{G}_1 - \int F_1 dG_1)$ follows asymptotically the same distribution as
\begin{equation*}
    U_{k + l} = \sqrt{k + l} \left( \frac{1}{l} \sum_{i = 1}^{l} \left[F_1(Y_{i}) - \int F_1 dG_1 \right] - \frac{1}{k} \sum_{i = 1}^{k} \left[G_1(X_i) - \int G_1 dF_1 \right] \right)
\end{equation*}
with $Var(U_{k + l}) = (k + l) (\frac{\sigma_{X_1}^{2}}{k} + \frac{\sigma_{Y_1}^{2}}{l})$ where $\sigma_{X_1}^2 = Var(G_1(X_1))$ and $\sigma_{Y_1}^2 = Var(F_1(Y_1))$. These quantities can be consistently estimated by $s_{X_1}^2$ and $s_{Y_1}^2$, and $\sqrt{k + l}(\int \hat{F}_1 d \hat{G}_1 - \int F_1 dG_1) / \sqrt{Var(U_{k + l})}$ follows asymptotically a standard normal distribution. \\
Similar arguments and the independence of the two samples
\begin{align*}
    X_{k + 1}, ..., X_n \overset{i.i.d.}{\sim} F_2
\end{align*}
and 
\begin{align*}
  Y_{l + 1}, ..., Y_m \overset{i.i.d.}{\sim} G_2  
\end{align*} lead to the asymptotic equivalence of $\sqrt{(n-k)+(m-l)}(\int \hat{F}_2 d\hat{G}_2 - \int F_2 dG_2)$ and 
\begin{align*}
    U_{(n-k)+(m-l)} = \sqrt{(n-k)+(m-l)}  \Biggl( & \frac{1}{m-l} \sum_{i = l+1}^{m} \left[F_2(Y_{i}) - \int F_2 dG_2 \right ] \\
    &- \frac{1}{n-k} \sum_{i = k+1}^{n} \left[G_2(X_i) - \int G_2 dF_2 \right ] \Biggl).
\end{align*} This quantity follows again asymptotically a standard normal distribution. 
The variance $Var(U_{(n-k)+(m-l)}) = ((n-k)+(m-l)) (\frac{\sigma_{X_2}^{2}}{n-k} + \frac{\sigma_{Y_2}^{2}}{m-l})$, where $\sigma_{X_2}^2 = Var(G_2(X_2))$ and $\sigma_{Y_2}^2 = Var(F_2(Y_2))$, can be consistently estimated by $s_{X_2}^2$ and $s_{Y_2}^2$. \\
Incorporating these results into the equations \eqref{theta_medeq} and \eqref{I2_medeq} we get (w.l.o.g. $m$ and $n$ are even):
\begin{align*}
   & Var (\sqrt{m+n}(\hat{I}_2^{\text{adj}}- I_2^{\text{adj}}) \\
    &= Var \Biggl ( \frac{\sqrt{m+n}}{2} \Biggl (\int \hat{F}_2 d\hat{G}_2 - \int F_2 dG_2 \Biggl)- \frac{\sqrt{m+n}}{2} \Biggl(\int \hat{F}_1 d\hat{G}_1 - \int F_1 dG_1 \Biggl) \Biggl ) \\
    &= \frac{1}{2}Var \Biggl ( \sqrt{\frac{m+n}{2}} \Biggl (\int \hat{F}_2 d\hat{G}_2 - \int F_2 dG_2 \Biggl)- \sqrt{\frac{m+n}{2}} \Biggl(\int \hat{F}_1 d\hat{G}_1 - \int F_1 dG_1 \Biggl) \Biggl ) \\
    &= \frac{1}{2}\Biggl ((L+K) \Bigl ( \frac{s_{X_1}^2}{K}+ \frac{s_{Y_1}^2}{L} \Bigl) +(n+m-L-K) \Bigl ( \frac{s_{X_2}^2}{n-K}+ \frac{s_{Y_2}^2}{m-L} \Bigl),
\end{align*}
using the fact that $L+k=n+m-L-K=\frac{m+n}{2}$ under the null hypothesis $H_0 \ : \ F=G$. \\
The asymptotic normality of the estimators $\hat{\theta}^{(1)}_{mn}$ and $\hat{\theta}^{(2)}_{mn}$ of the random variables $P(X^{(1)} < Y^{(1)})$ and $P(X^{(2)} < Y^{(2)})$ follows directly from Result 3.18 in \cite{brunnerbathkekonnietschke}. Hence $\hat{\theta}^{(1)}_{mn}$ and $\hat{\theta}^{(2)}_{mn}$ are independent asymptotically normal distributed random variables, thus a linear combination of $\hat{\theta}^{(1)}_{mn}$ and $\hat{\theta}^{(2)}_{mn}$ follows also an asymptotically normal distribution. This and again the Cramér-Wold device prove the asymptotic normality of the test statistic in equation \eqref{statistic_sim}.

\section{Additional Simulations}
 \label{add_sim_ci}
In this section, we empirically compare the different procedures for computing simultaneous confidence regions for the relative effect and the overlap index, as introduced in Section \ref{secresamp} and Section \ref{alt_ci}, 
using different underlying distributions and different sample sizes. 

The ``exact'' values of $\theta$ and $I_2$ for each combination of distributions can be computed by numerical integration. 
They are given as follows, rounded to four decimal places:\\
 $X \sim N(0,1)$, $Y \sim N(1,1)$.
Exact values: $\theta = 0.7602$ and $I_2 =0.3645$. \\
$X \sim N(0,2)$, $Y \sim U[-0.5,0.5]$. 
Exact values: $\theta = 0.5$ and $I_2 = 0.9008$. \\
$X \sim N(1,1)$, $Y \sim U[-0.5,0.5]$. 
Exact values: $\theta = 0.8315$ and $I_2 =  0.3370$. \\
$X \sim N(2,1)$, $Y \sim U[-0.5,0.5]$. 
Exact values: $\theta =  0.9727$ and $I_2 =  0.0546$. \\
$X \sim N(1,1)$, $Y \sim $ Exp$(1)$. 
Exact values: $\theta =  0.5381 $ and $I_2 =  0.5389$. \\
$X \sim N(2,1)$, $Y \sim $ Exp$(1)$. 
Exact values: $\theta =  0.7895$ and $I_2 =  0.2794$.  \\
For each setting, we simulated $1,000$ data sets, 
and for each data set, we generated $1,000$ bootstrap samples. 
For the calculation of the simulated coverage probability, a simulated confidence region was considered as ``covering`` if it covered both components of the parameter vector.
The length of each simulated confidence region was calculated by the Euclidean distance: 
\begin{align*}
\frac{1}{2} \sqrt{(\theta_{upper}- \theta_{lower})^2+(I_{2,upper}- I_{2,lower})^2}.
\end{align*}

\begin{figure} [H]
\begin{subfigure}{0.4\textwidth}
\includegraphics[width=\linewidth]{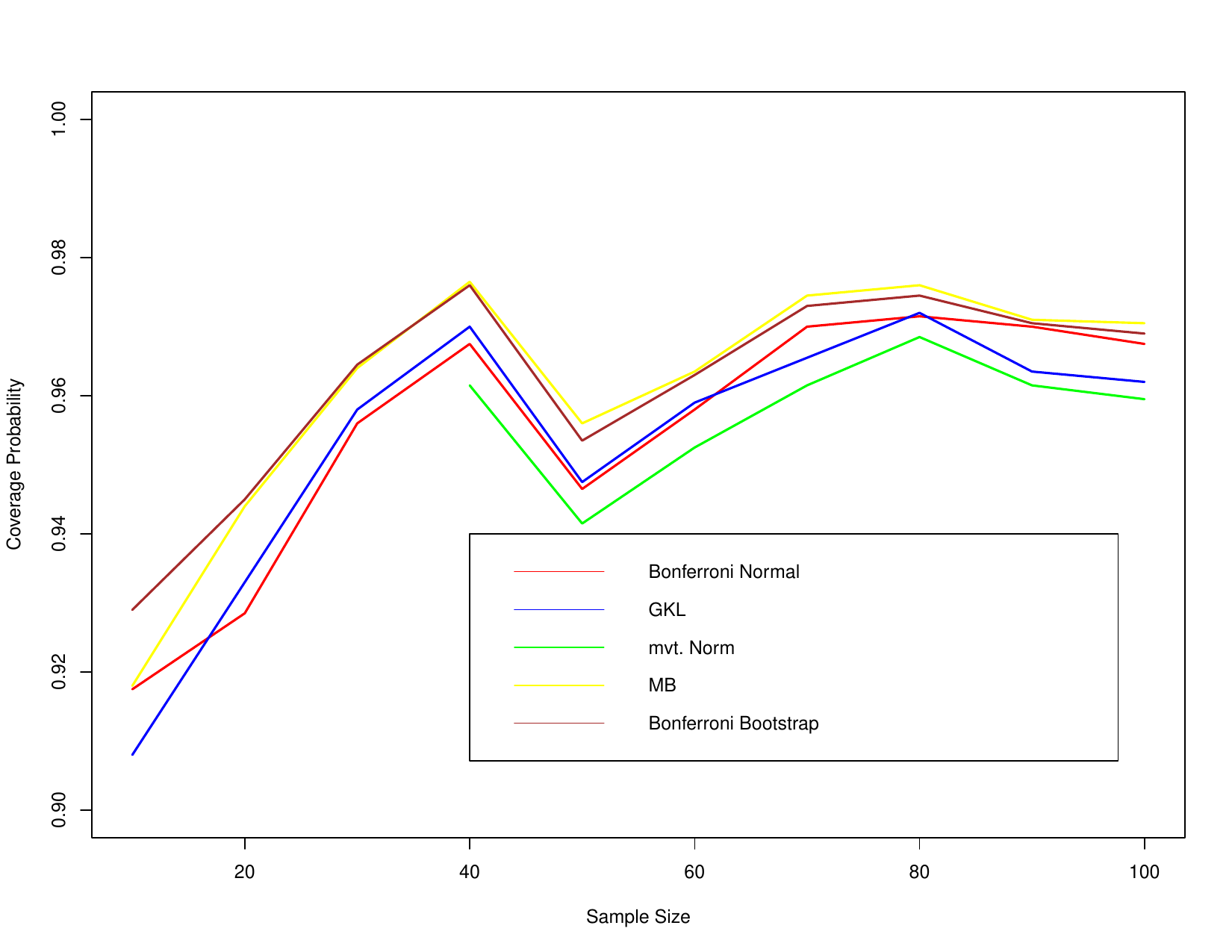}
\caption{Coverage probability for $N(0, 1)$ vs. $N(1, 1)$} \label{fig:a}
\end{subfigure}\hspace*{\fill}
\begin{subfigure}{0.4\textwidth}
\includegraphics[width=\linewidth]{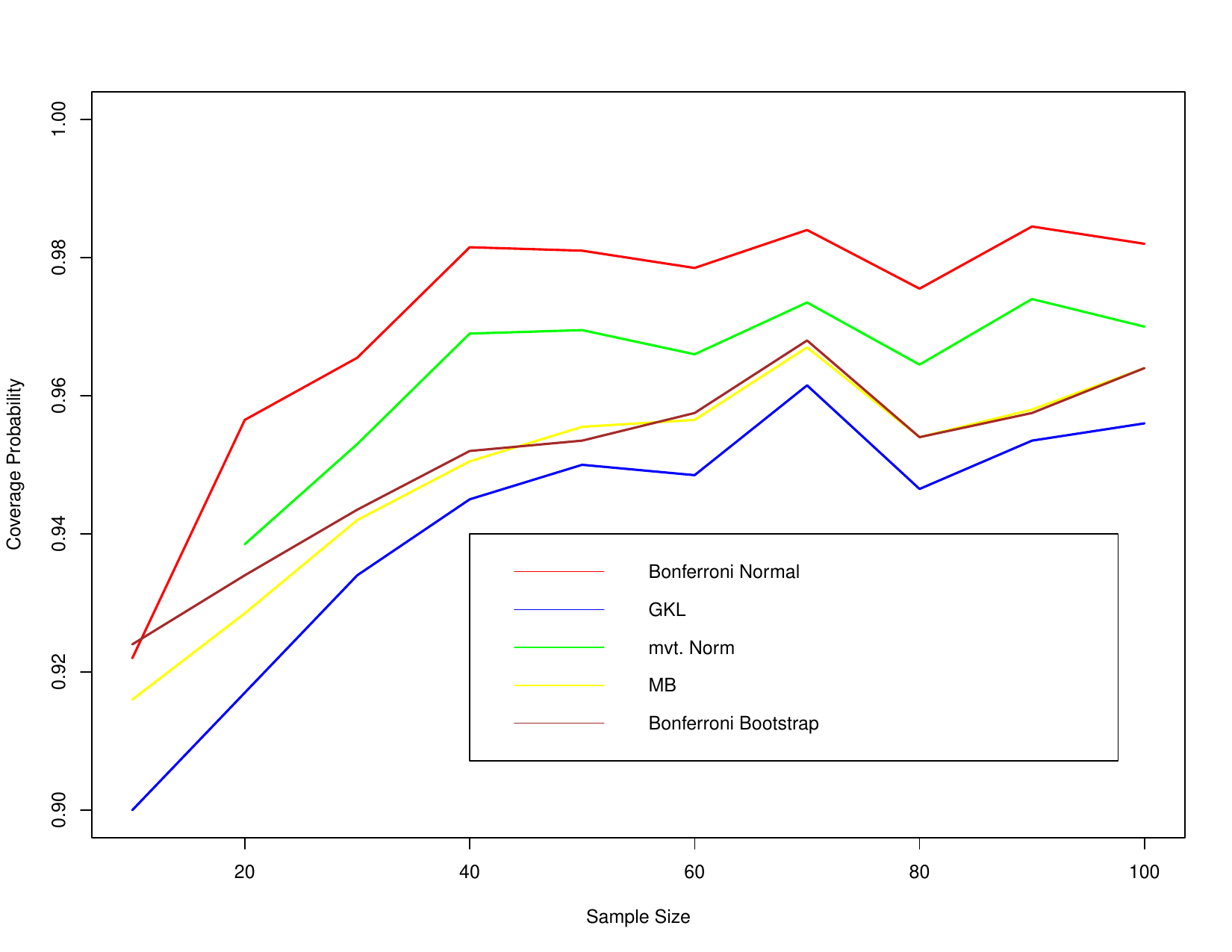}
\caption{Coverage probability for $N(0, 2)$ vs. $U[-0.5,0.5]$} \label{fig:b}
\end{subfigure}

\medskip
\begin{subfigure}{0.4\textwidth}
\includegraphics[width=\linewidth]{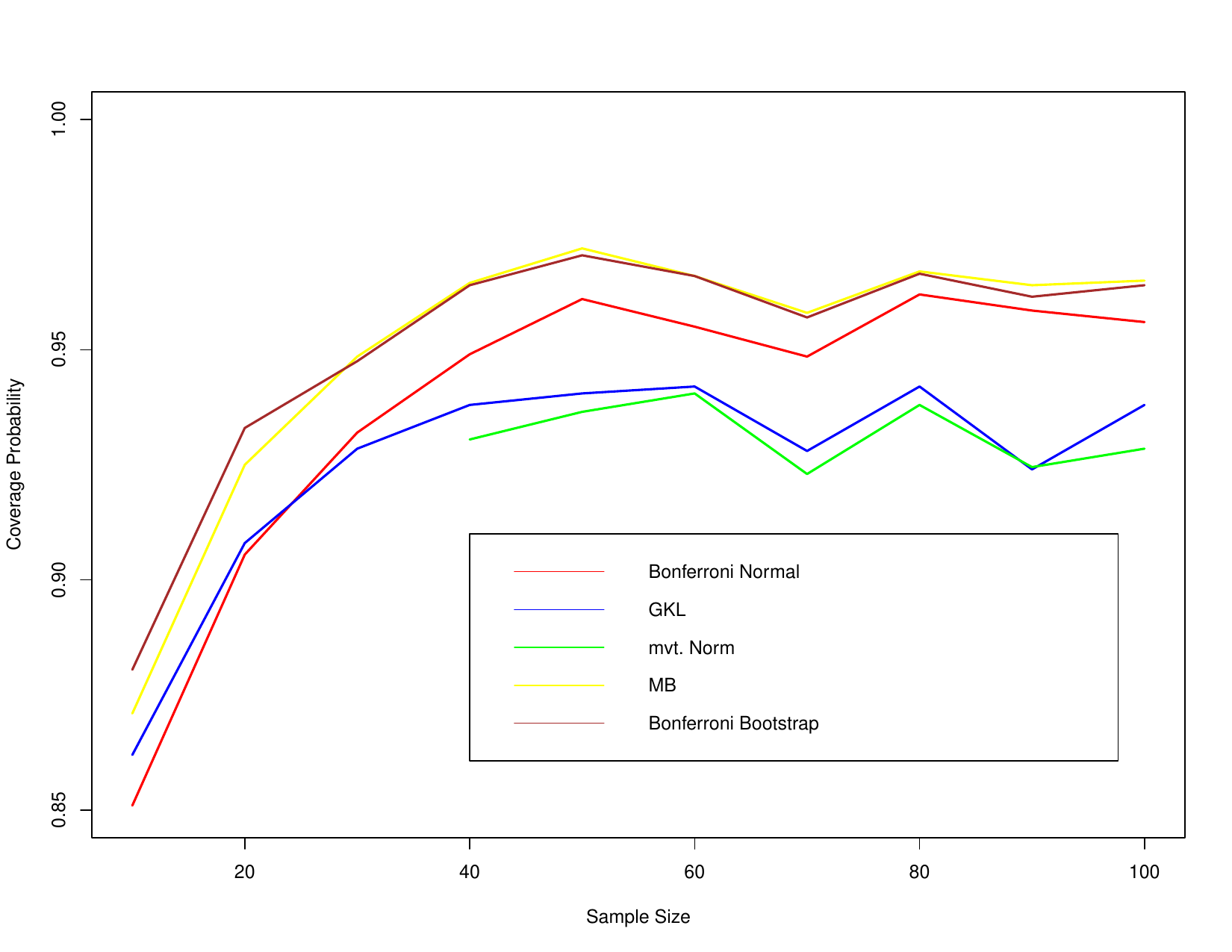}
\caption{Coverage probability for $N(1, 1)$ vs. $U[-0.5,0.5]$} \label{fig:c}
\end{subfigure}\hspace*{\fill}
\begin{subfigure}{0.4\textwidth}
\includegraphics[width=\linewidth]{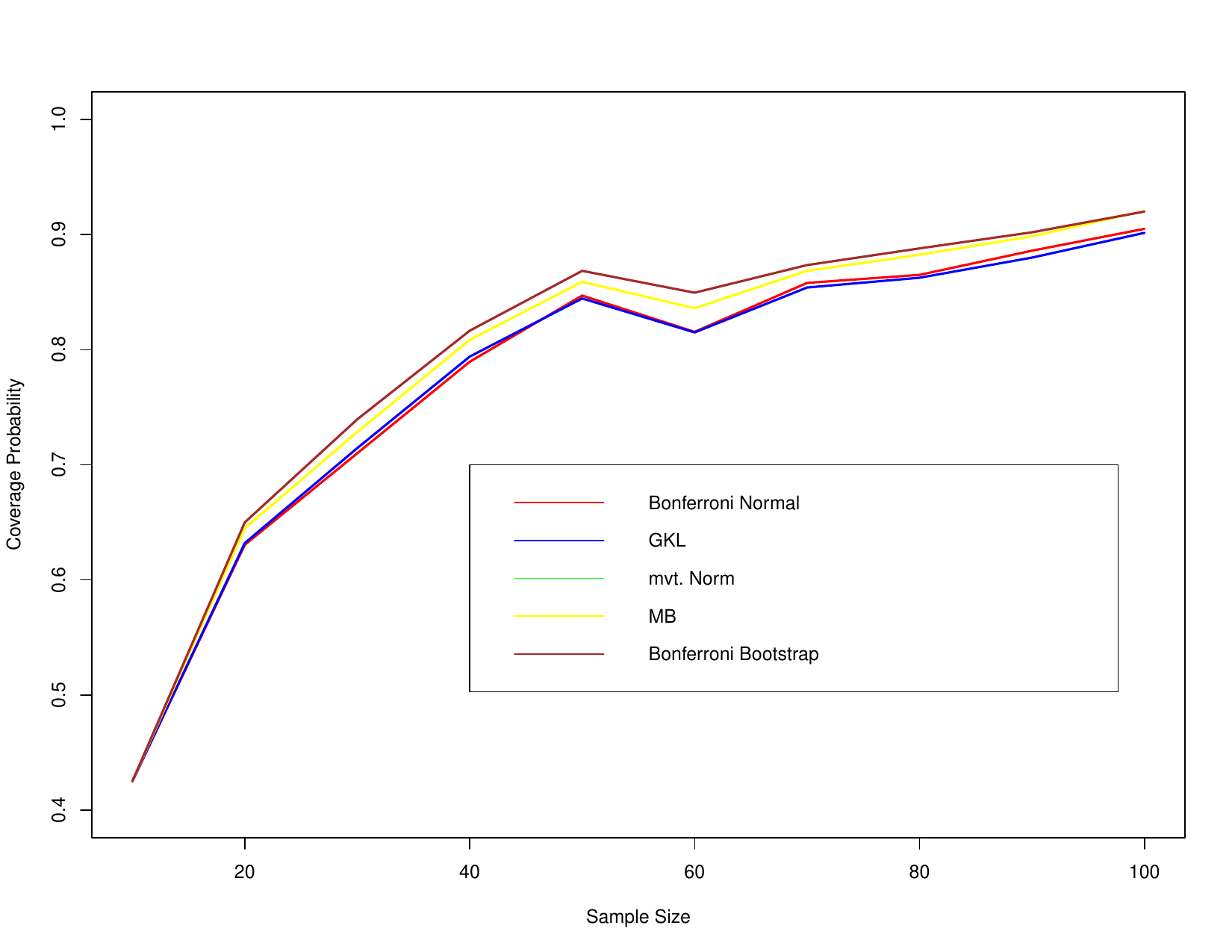}
\caption{Coverage probability for $N(2, 1)$ vs. $U[-0.5,0.5]$} \label{fig:d}
\end{subfigure}

\medskip
\begin{subfigure}{0.4\textwidth}
\includegraphics[width=\linewidth]{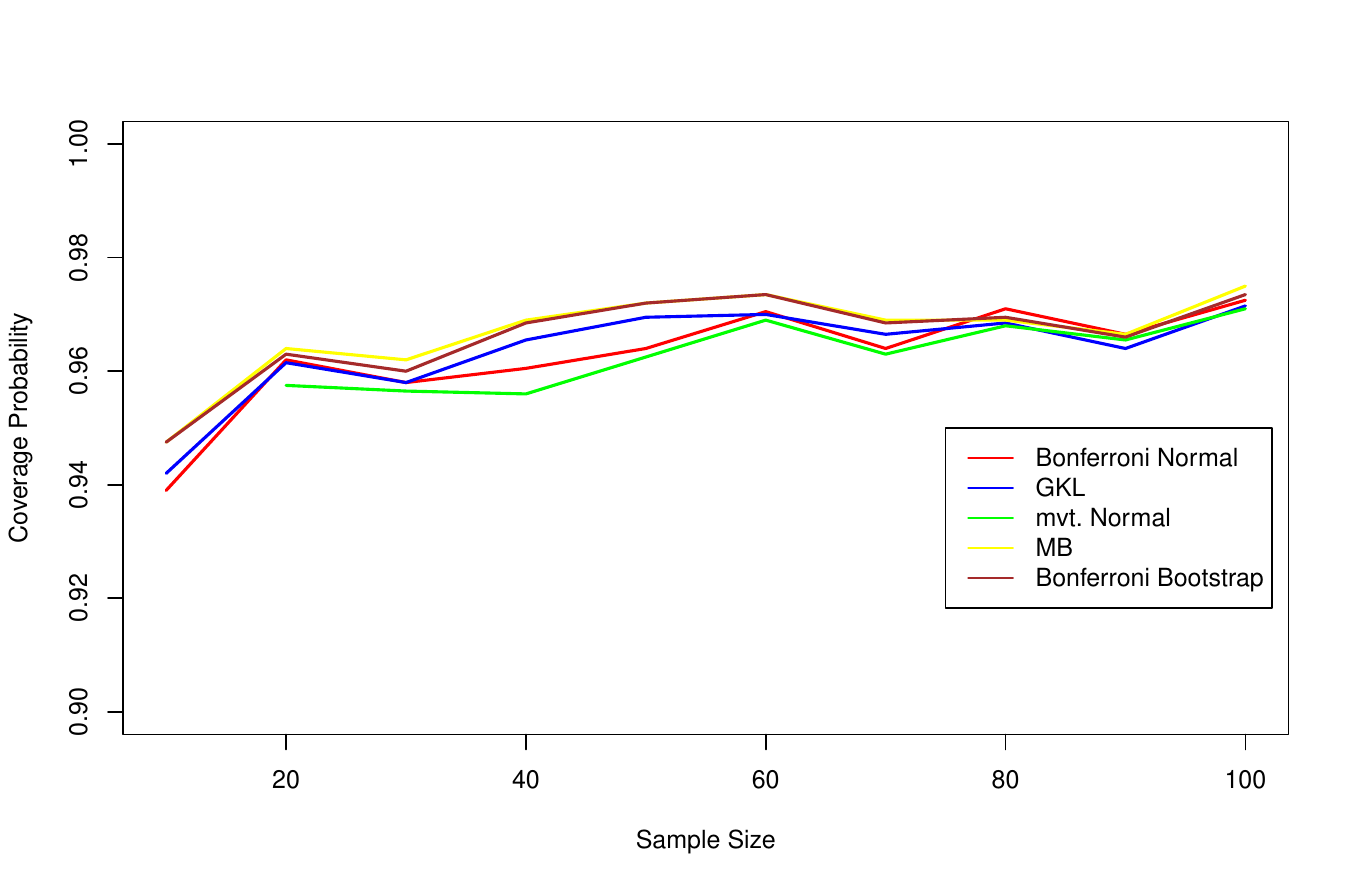}
\caption{Coverage probability for $N(1, 1)$ vs. Exp$(1)$} \label{fig:e}
\end{subfigure}\hspace*{\fill}
\begin{subfigure}{0.4\textwidth}
\includegraphics[width=\linewidth]{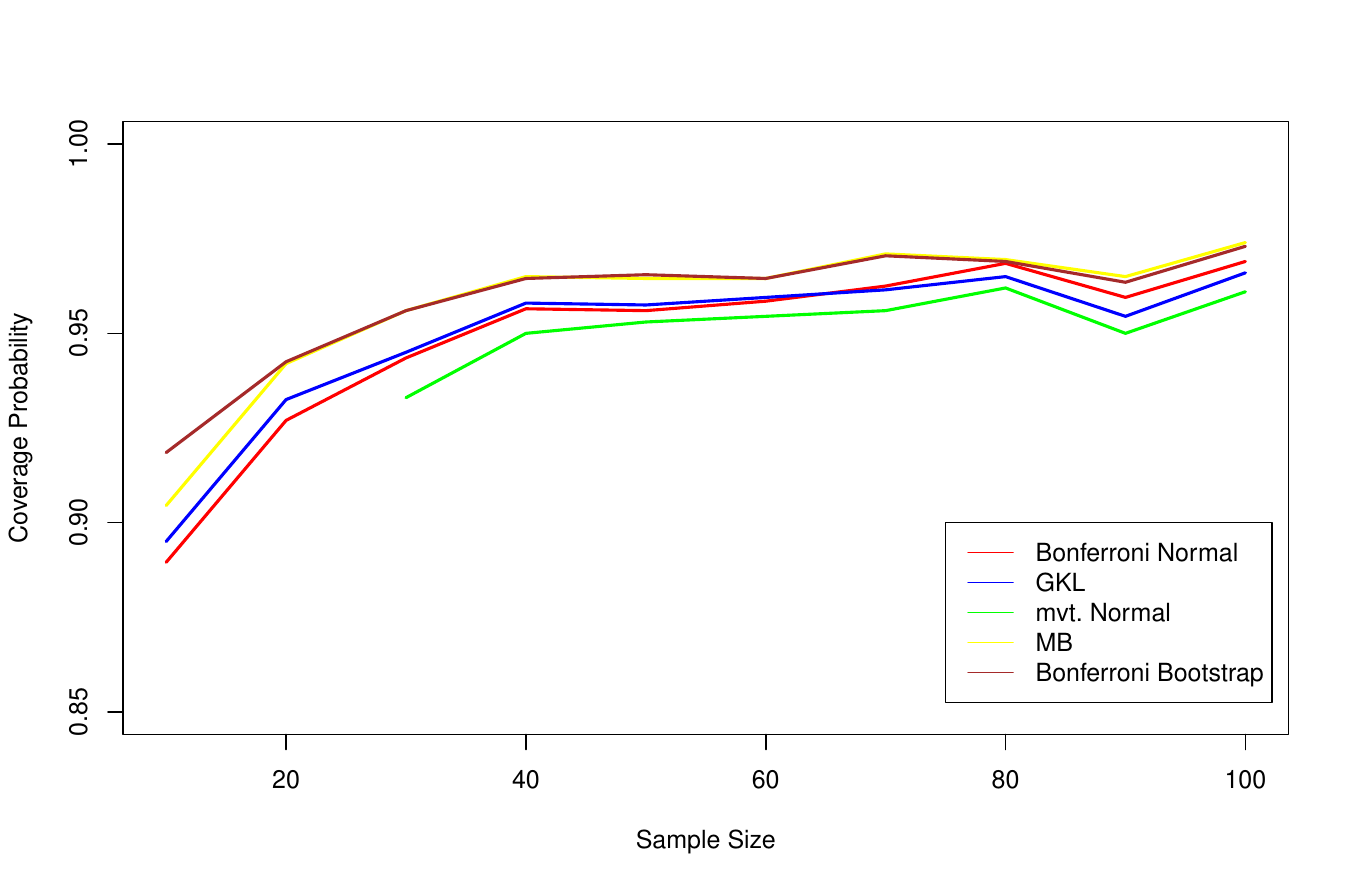}
\caption{Coverage probability for $N(2, 1)$ vs. Exp$(1)$} \label{fig:f}
\end{subfigure}
\caption{Empirical coverage probabilities for different combinations of underlying distributions}
\end{figure}

\begin{figure} [ht!]
\begin{subfigure}{0.48\textwidth}
\includegraphics[width=\linewidth]{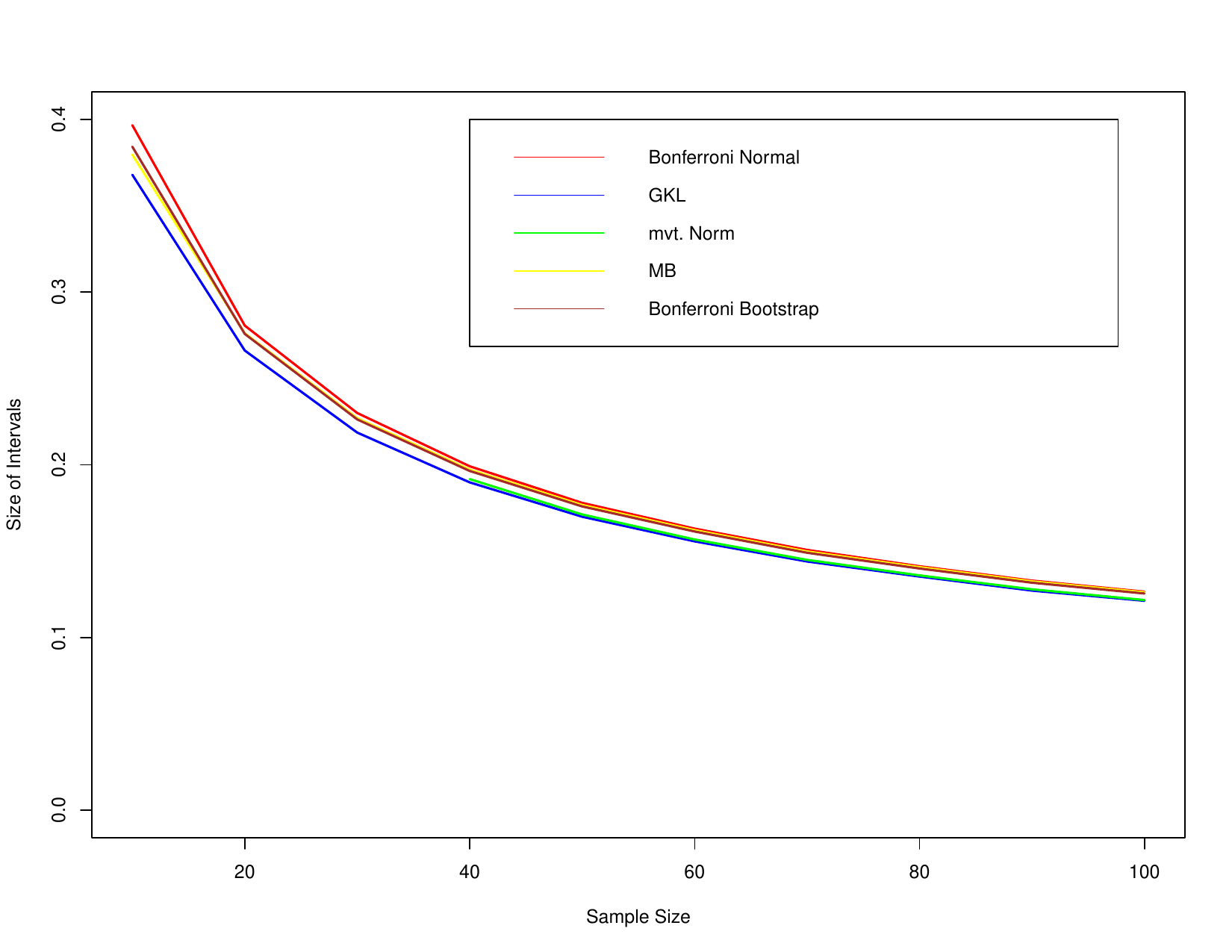}
\caption{Interval length for $N(0, 1)$ vs. $N(1, 1)$} \label{fig:a2}
\end{subfigure}\hspace*{\fill}
\begin{subfigure}{0.48\textwidth}
\includegraphics[width=\linewidth]{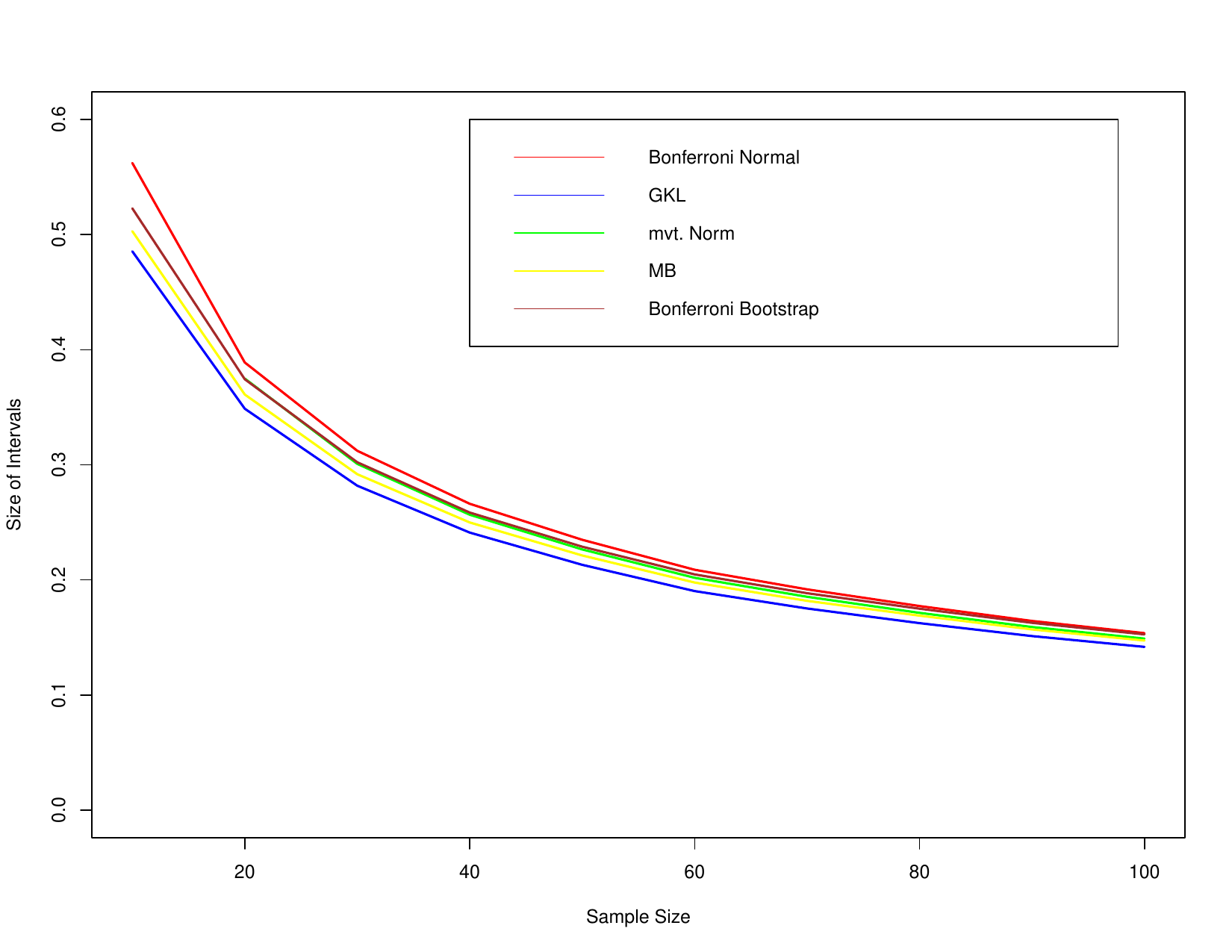}
\caption{Interval length for $N(0, 2)$ vs. $U[-0.5,0.5]$} \label{fig:b2}
\end{subfigure}

\medskip
\begin{subfigure}{0.48\textwidth}
\includegraphics[width=\linewidth]{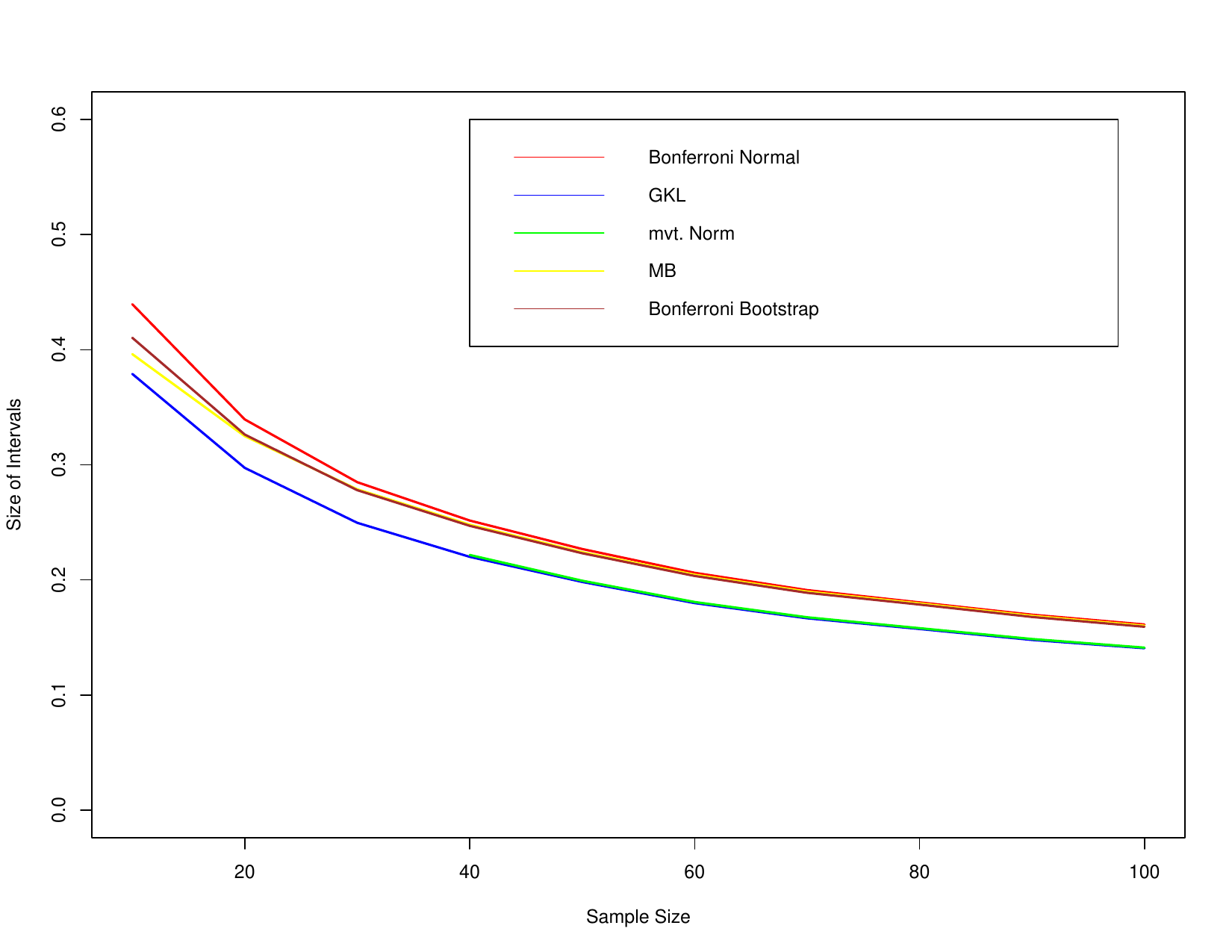}
\caption{Interval length for $N(1, 1)$ vs. $U[-0.5,0.5]$} \label{fig:c2}
\end{subfigure}\hspace*{\fill}
\begin{subfigure}{0.48\textwidth}
\includegraphics[width=\linewidth]{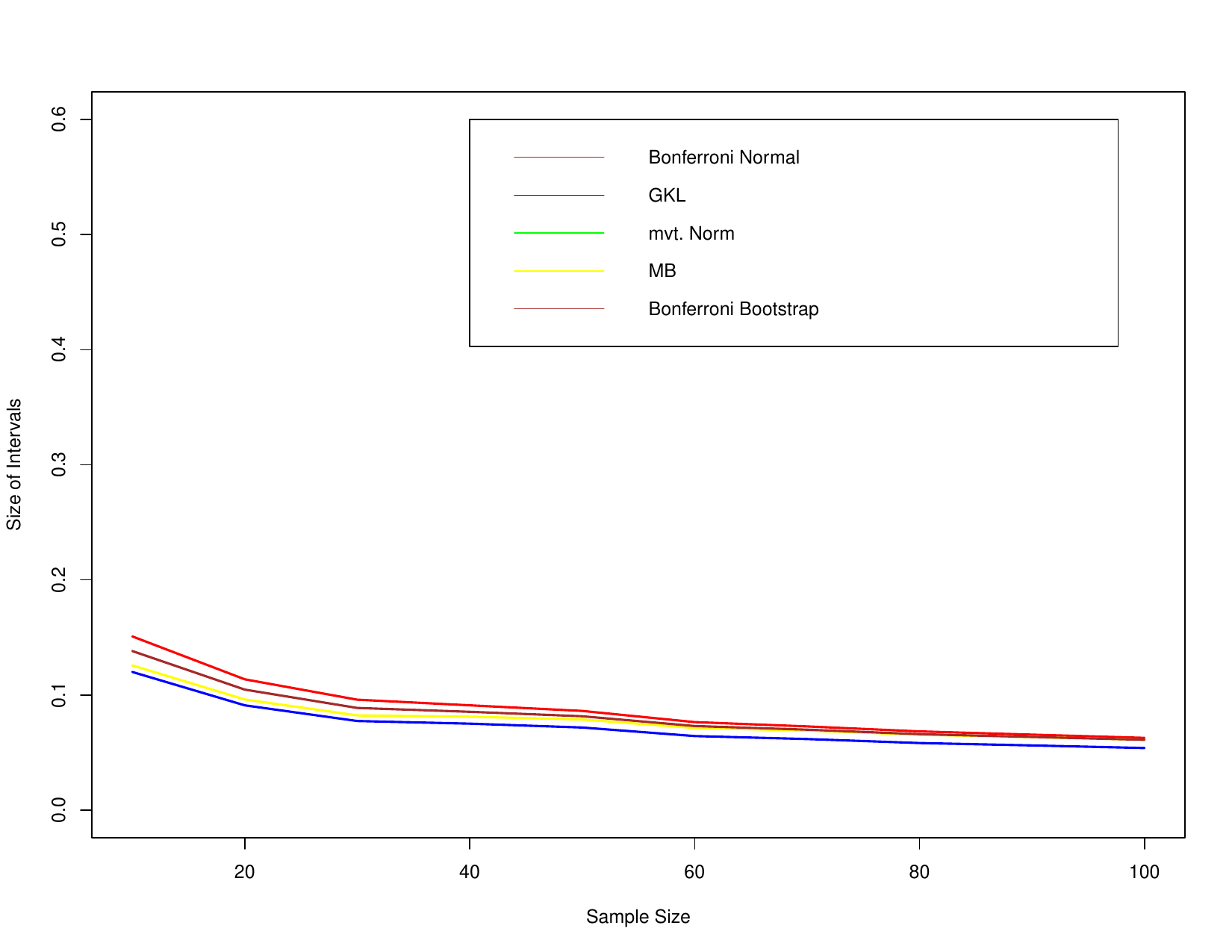}
\caption{Interval length for $N(2, 1)$ vs. $U[-0.5,0.5]$} \label{fig:d2}
\end{subfigure}

\medskip
\begin{subfigure}{0.48\textwidth}
\includegraphics[width=\linewidth]{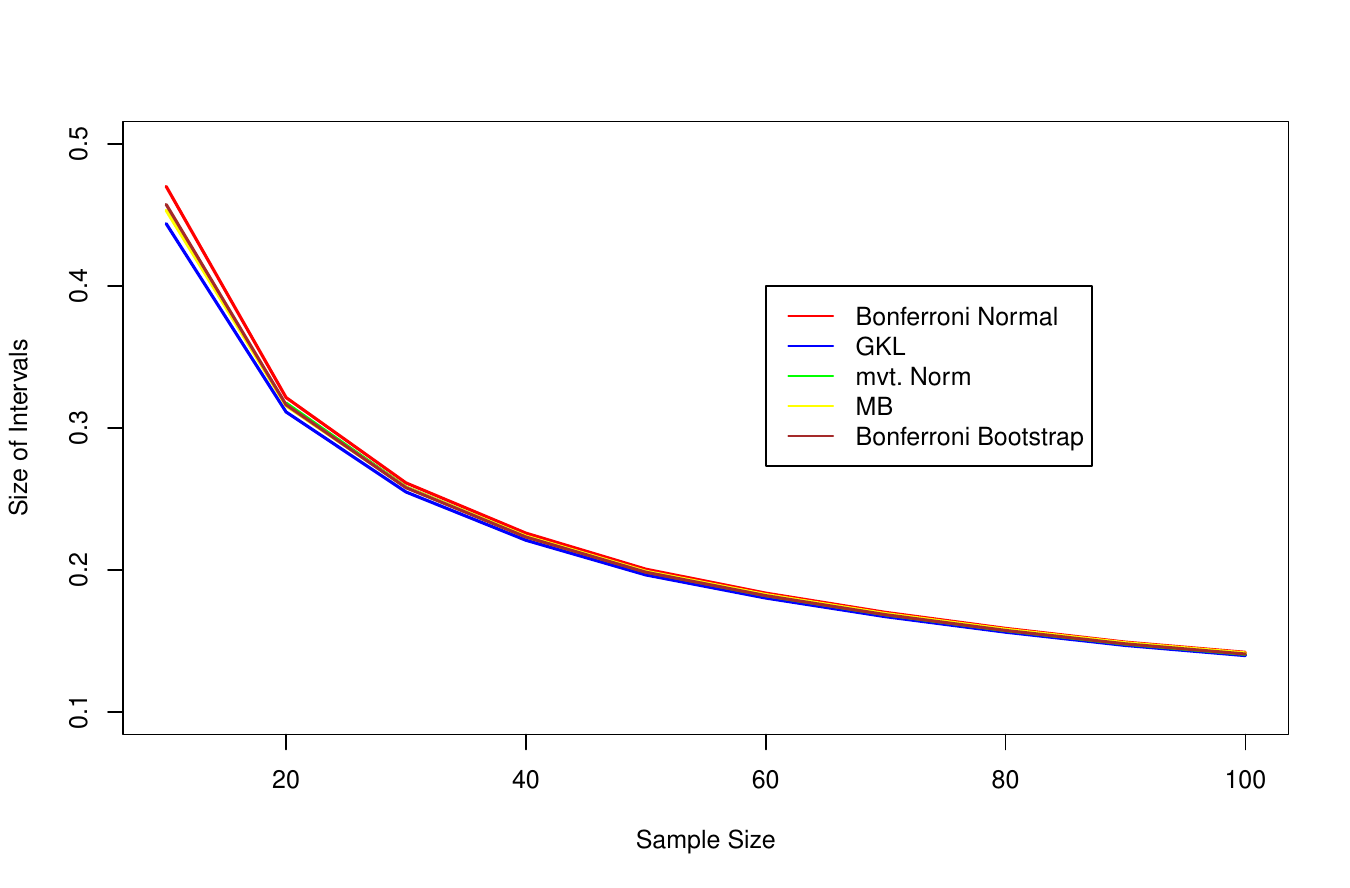}
\caption{Interval length for $N(1, 1)$ vs. Exp$(1)$} \label{fig:e2}
\end{subfigure}\hspace*{\fill}
\begin{subfigure}{0.48\textwidth}
\includegraphics[width=\linewidth]{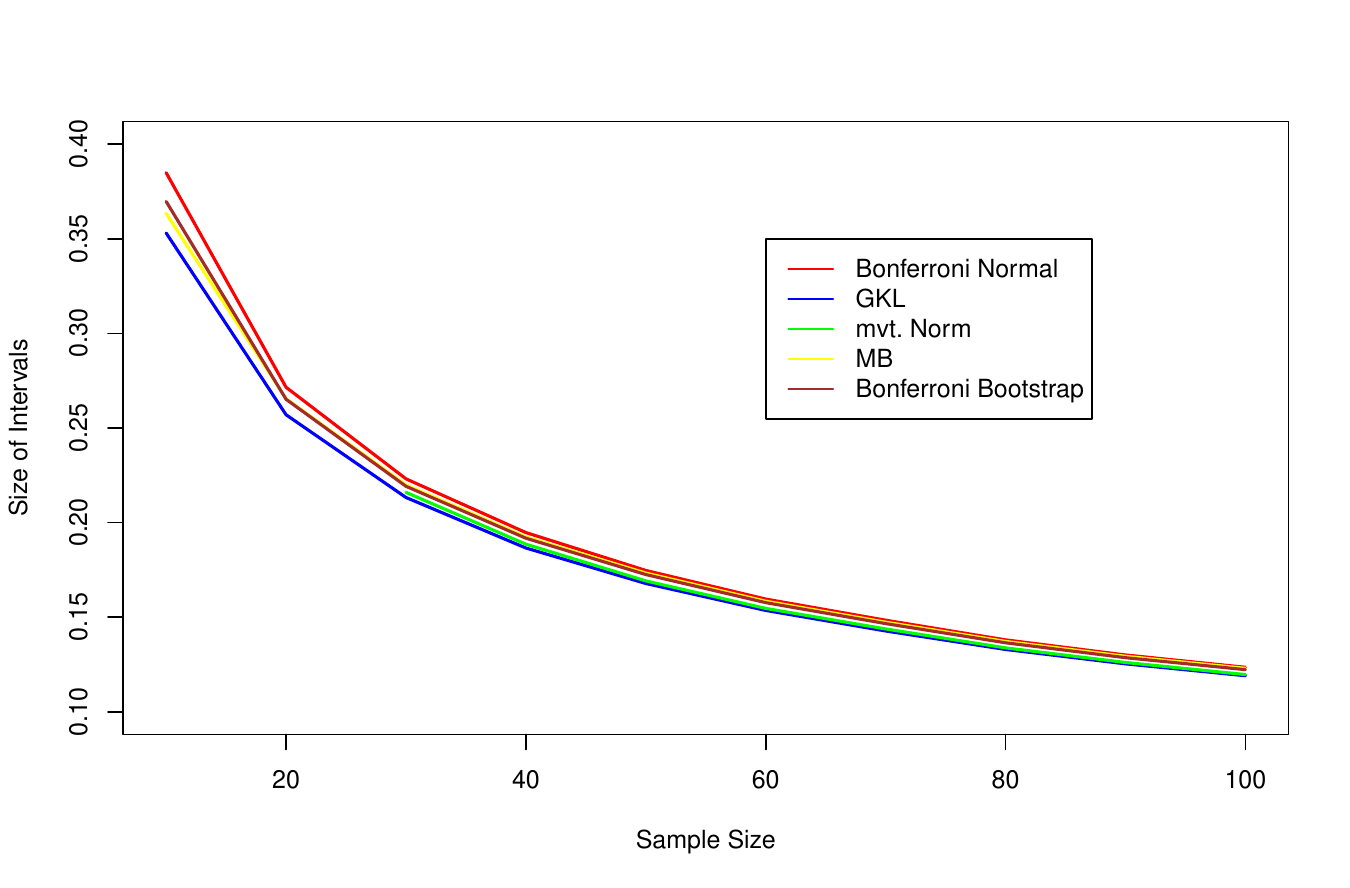}
\caption{Interval length for $N(2, 1)$ vs. Exp$(1)$} \label{fig:f2}
\end{subfigure}
\caption{Empirical interval lengths for different combinations of underlying distributions}
\end{figure}

All methods performed quite similarly. 
Only for small data sets, the method by \cite{gao_21} yielded smaller confidence intervals. 
In return, the method was in some cases also slightly liberal for small data sets. \\
In Example \ref{fig:d}, the exact values are rather extreme (close to 0 or 1), and in this situation, all methods produced a low simulated coverage probability for small samples. 
Therefore, we would recommend to avoid the usage of simultaneous confidence intervals in the case of extreme parameter values and very small samples.

\section*{Acknowledgments}
The authors gratefully acknowledge the support of the WISS project ``IDA-Lab Salzburg'' (20204-WISS/225/197-2019, 20102-F1901166-KZP).

\newpage
\bibliographystyle{apalike}
\bibliography{comb_tendency_overlap_preprint}  
\end{document}